\documentclass{article}
\usepackage[utf8]{inputenc}
\usepackage[T1]{fontenc}
\usepackage[english]{babel}
\usepackage{amsmath,amsthm,amssymb,amsfonts}
\usepackage{xspace}
\usepackage{hyperref}
\usepackage{cleveref}
\usepackage{verbatim}
\usepackage{thm-restate}
\usepackage{tikz}
\usetikzlibrary{positioning, 
                quotes}

\usepackage{multirow,hhline}
\usepackage{enumerate}
\usepackage{cite}
\usepackage{caption}
\usepackage{subcaption}
\usepackage{float}
\usepackage{graphicx}
\usepackage{paralist}

\usepackage[textsize=footnotesize,color=green!40]{todonotes}

\usepackage[a4paper,            bindingoffset=0.2in,
           left=1.0in,            right=1.0in,
           top=1.0in,
           bottom=1.0in,
           footskip=.25in]{geometry}

\newcommand{\tw}{{\rm tw}}

\usepackage{bm}

\usepackage[vlined, ruled,linesnumbered]{algorithm2e}
\usepackage{algpseudocode}
\usepackage{thm-restate}

\usepackage{todonotes}

\newcommand{\hconstant}{3}

\newtheorem{theorem}{Theorem}
\crefname{theorem}{theorem}{theorems}
\newtheorem{lemma}[theorem]{Lemma}
\crefname{lemma}{lemma}{lemmas}
\newtheorem{proposition}[theorem]{Proposition}

\crefname{proposition}{proposition}{propositions}
\crefname{result}{result}{results}
\newtheorem{corollary}[theorem]{Corollary}
\crefname{corollary}{corollary}{corollaries}

\crefname{fact}{fact}{facts}
\newtheorem{observation}[theorem]{Observation}
\crefname{observation}{observation}{observations}

\crefname{question}{question}{questions}
\newtheorem{claim}[theorem]{Claim}
\crefname{claim}{claim}{claims}

\crefname{note}{note}{notes}

\crefname{conj}{conjecture}{conjectures}
\newtheorem{definition}[theorem]{Definition}
\crefname{definition}{definition}{definitions}
\newtheorem{remark}[theorem]{Remark}
\crefname{remark}{remark}{remarks}

\newcounter{claimcounter}
\numberwithin{claimcounter}{lemma}

\newenvironment{proofofclaim}{%
    
  \proof}{\endproof}

\tikzstyle{noeud}=[circle,inner sep=2, minimum size =3 pt, line width = 1pt, draw=black, fill=white]

\usepackage{hyperref}
\hypersetup{
  pdftitle = {(Almost-)Optimal FPT Algorithm and Kernel for {\sc $T$-Cycle} on Planar Graphs},
  pdfauthor = {H. Gahlawat, A. Rathod, M. Zehavi},
  colorlinks = true,
  linkcolor = black!30!red,
  citecolor = black!30!green
}

\title{(Almost-)Optimal FPT Algorithm and Kernel for {\sc $T$-Cycle} on Planar Graphs}
\author{Harmender Gahlawat\thanks{LIMOS, Université Clermont Auvergne, Clermont-Ferrand, France} \thanks{Laboratoire G-SCOP, Grenoble-INP, France} \\  \texttt{harmendergahlawat@gmail.com}
\and \and Abhishek Rathod\thanks{Ben-Gurion University of the Negev, Beersheba, Israel} \\  \texttt{arathod@post.bgu.ac.il}
\and \and Meirav Zehavi\footnotemark[3] \\  \texttt{meiravze@bgu.ac.il} }





\usepackage{xspace}

\usepackage{xcolor}

\newcommand{\tcycle}{$T$-\textsc{Cycle}\xspace}

\newcommand{\dispaths}{\textsc{Disjoint Paths}\xspace}

\newcounter{problemcounter}
\setcounter{problemcounter}{1}
\newcommand{\Pb}[4]{%
\begin{center}
  \begin{tabular}{|l|}%
  \hline
    \begin{minipage}[c]{0.95\textwidth}
      \smallskip%
      \par\noindent%
      #1%
      \par\noindent%
      \textbf{\textsf{Input}}: #2%
      \par\noindent%
      \textbf{\textsf{#3}}: #4 
      \smallskip%
      \par\noindent%
    \end{minipage}
  \\\hline
  \end{tabular}%
\end{center}
}%

\begin{document}

\maketitle

\begin{abstract}
Research of cycles through specific vertices is a central topic in graph theory. In this context, we focus on a well-studied computational problem, {\sc $T$-Cycle}: given an undirected $n$-vertex graph $G$ and a set of  $k$ vertices $T\subseteq V(G)$ termed \textit{terminals}, the objective is to determine whether $G$ contains a simple cycle $C$ through all the terminals.
Our contribution is twofold: (i) We provide a $2^{O(\sqrt{k}\log k)}\cdot n$-time fixed-parameter deterministic algorithm for {\sc $T$-Cycle} on planar graphs; (ii) We provide a $k^{O(1)}\cdot n$-time deterministic kernelization algorithm for {\sc $T$-Cycle} on planar graphs where the produced instance is of size $k\log^{O(1)}k$. 

Both of our algorithms are optimal in terms of both $k$ and $n$ up to (poly)logarithmic factors in $k$ under the ETH.  In fact, our algorithms are the first subexponential-time fixed-parameter algorithm for {\sc $T$-Cycle} on planar graphs, as well as the first polynomial kernel for {\sc $T$-Cycle} on planar graphs. This substantially improves upon/expands the known literature on the parameterized complexity of the problem.
\end{abstract}

\section{Introduction}\label{S:intro}
Combinatorial properties of (simple) cycles passing through a specific set of vertices (along with, possibly, other vertices) in a graph---and, in particular, combinatorial properties of the graphs containing them---have been intensively explored for over six decades. As observed by Bj\"{o}rklund et al.~\cite{DBLP:conf/soda/BjorklundHT12}, the public interest in this topic can even be dated back to 1898, when Lewis Carroll challenged the readers of Vanity Fair with a riddle linked to this problem. However, the flurry of scientific results on this topic has, roughly, begun with Dirac's result~\cite{dirac1960abstrakten} from 1960, stating that given $k$ vertices in a $k$-connected (undirected) graph $G$, $G$ has a cycle through all of them. For a few illustrative examples of works in Graph Theory that followed up on this result, we refer to~\cite{bondy1981cycles,holton1982nine,kawarabayashi2004cycles,woodall1977circuits,erdHos1985any,sanders1996circuits,thomassen1977note,haggkvist1982circuits,kawarabayashi2002one}. Indeed, as stated by Kawarabayashi~\cite{kawarabayashi2008improved}: ``Since 1960’s, cycles through a vertex set or an edge set are one of central topics in all of graph theory.''

Computationally, given an undirected $n$-vertex graph $G$ and a set of  $k$ vertices $T\subseteq V(G)$ referred to as {\em terminals} (for any $k\in\{1,2,\ldots,n\}$), the objective of the {\sc $T$-Cycle} problem is to determine whether $G$ contains a (simple) cycle $C$ that passes through all (but not necessarily only) the terminals. 
Henceforth, such a cycle is referred to as a \emph{$T$-loop}.
Due to the immediate relation between the \tcycle and \dispaths problems (formalized later in this introduction), the {\sc $T$-Cycle} problem has been long known (since the seminal work of Robertson and Seymour~\cite{DBLP:journals/jct/RobertsonS95b}) to be {\em fixed-parameter tractable}, that is, solvable in time $f(k)\cdot n^c$ for some computable function $f$ of $k$ and some fixed constant $c$. However, here, the best known $f$ is galactic, satisfying $f(k)\geq 2^{2^{2^{2^{2^k}}}}$, and $c=2$~\cite{DBLP:journals/jct/KawarabayashiKR12}.  Still, already in 1980, LaPaugh and Rivest~\cite{DBLP:journals/jcss/LapaughR80} provided a linear time algorithm for {\sc $T$-Cycle} when $k = 3$, which involved no large hidden constants. 

The first breakthrough for the general case---having $k$ as part of the input---was established in 2008 by Kawarabayashi~\cite{kawarabayashi2008improved}, who proved that:\footnote{Kawarabayashi~\cite{kawarabayashi2008improved} states that when $k$ is  a fixed constant, the dependency on $n$ in the runtimes of his algorithms can be made  linear, but no details of a proof to support this statement are given.}
\begin{itemize}
\item The {\sc $T$-Cycle} problem is solvable in time $2^{2^{O(k^{10})}}\cdot n^2$. We remark that using the developments in~\cite{DBLP:journals/jacm/ChekuriC16} on the Grid-Minor Theorem that were first published shortly after the work of Kawarabayashi, it can be easily seen that his algorithm can be made to run in time $2^{O(k^c)}\cdot n^2$ for some large constant $c$.
\item Restricted to planar graphs, the {\sc $T$- Cycle} problem is solvable in time $2^{O(k\log^4 k)}\cdot n^2$.
\end{itemize}

We note that all algorithms mentioned so far are deterministic. 
In 2010, by making novel use of the idea behind the breakthrough technique of algebraic fingerprints~\cite{DBLP:journals/siamcomp/Bjorklund14,DBLP:journals/jcss/BjorklundHKK17}, Bj\"{o}rklund et al.~\cite{DBLP:conf/soda/BjorklundHT12} developed a randomized algorithm for {\sc $T$-Cycle} that runs in time $O(2^k\cdot n)$, which, prior to this manuscript, has remained the fastest known algorithm for this problem even if restricted to planar graphs. We note that 
however, prior to this manuscript, both of  Kawarabayashi's algorithms have remained the best-known deterministic ones for the problem even if restricted to planar graphs.

With the above context in mind, the first of the two  main contributions of our manuscript can be summarized in the following theorem. Specifically, we present the first subexponential-time fixed-parameter algorithm for the {\sc $T$-Cycle} problem on planar graphs.

\begin{restatable}{theorem}{mainFPT}\label{thm:main1}
Restricted to planar graphs, the {\sc $T$-Cycle} problem is solvable in time $2^{O(\sqrt{k}\log k)}\cdot n$.
\end{restatable}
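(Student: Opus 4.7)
The plan is to establish the theorem via a win/win strategy based on treewidth, leveraging the planar excluded-grid theorem together with an irrelevant-vertex argument and dynamic programming on tree decompositions of small width.

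The outer framework proceeds as follows: compute, in linear time, a tree decomposition of $G$ (or an approximation thereof) and branch on whether its width is $O(\sqrt{k}\log k)$. In the small-treewidth case, I would solve \tcycle by dynamic programming on the tree decomposition. The state at each bag would encode (i) the matching of cycle fragments entering the bag, (ii) a bit indicating, for each bag vertex, whether it is interior to the cycle or an endpoint, and (iii) the promise that every terminal already introduced in the processed subtree is incident to at least one selected cycle edge. The global connectivity constraint across bags is handled by the rank-based representation of Bodlaender, Cygan, Kratsch, and Nederlof, yielding a running time of $2^{O(\tw)}\cdot n$; combined with $\tw=O(\sqrt{k}\log k)$, this falls within the target bound.

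In the large-treewidth case, with $\tw(G)=\Omega(\sqrt{k}\log k)$, the planar excluded-grid theorem provides a grid minor of side length $\Omega(\sqrt{k}\log k)$, which is polynomially larger than $\sqrt{k}$. The heart of the argument is to locate an irrelevant vertex $v$ deep inside a large terminal-free subgrid whose deletion preserves the answer: any hypothetical $T$-loop through $v$ can be rerouted around $v$ using concentric cycles in the grid together with planarity of $G$. Applying this reduction exhaustively, while updating the tree decomposition locally, reduces the problem to the small-treewidth case with amortized linear overhead in $n$.

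The main obstacle lies in the irrelevant-vertex argument in the high-treewidth regime. While the high-level intuition is similar to that underlying Kawarabayashi's $2^{O(k\log^4 k)}\cdot n^2$ algorithm, pushing the required grid side length down from roughly $\Omega(k)$ to $\Omega(\sqrt{k}\log k)$ demands a sharper rerouting lemma that genuinely exploits the quadratic gap between the number of terminals and the grid side length; heuristically, the extra $\log k$ factor over $\sqrt{k}$ should correspond precisely to the ``slack'' needed to localize the detour around $v$ without disturbing how the cycle interacts with the remaining $k$ terminals. A secondary engineering obstacle is avoiding a recomputation of the tree decomposition from scratch after each irrelevant-vertex deletion, which likely calls for localized tree-decomposition updates in a protrusion-replacement or Baker-style layering framework in order to preserve the linear dependence on $n$.
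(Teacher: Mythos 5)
Your high-level framework is in the right ballpark---win/win on treewidth, dynamic programming when the treewidth is $O(\sqrt{k}\log k)$, irrelevant-vertex deletion otherwise---and this is also the framework the paper uses. However, there are two genuine gaps, one of which is conceptual.

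First, and most seriously, your framing of what the irrelevant-vertex lemma must accomplish is off. You write that ``pushing the required grid side length down from roughly $\Omega(k)$ to $\Omega(\sqrt{k}\log k)$'' is needed, and that the $\log k$ factor over $\sqrt{k}$ corresponds to rerouting slack. That is not what happens. The paper proves that every $O(\log k)$-isolated vertex (a vertex enclosed by only $O(\log k)$ terminal-free concentric cycles, far fewer than $\sqrt{k}\log k$) is already irrelevant. The $\sqrt{k}$ factor never enters the rerouting argument at all; it enters afterward, via a separate result (the paper's Proposition~\ref{P:jctb}, from Adler et al.) stating that a planar graph with $k$ marked vertices and no $O(\log k)$-isolated vertex has treewidth $O(\sqrt{k}\log k)$. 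So the rerouting lemma you must prove is exponentially stronger than what you describe, and your heuristic justification (``exploiting the quadratic gap between the number of terminals and the grid side length'') is aimed at the wrong target. You also do not give that lemma's proof; the paper establishes it through a combination of a rerouting argument bounding the number of same-type segments of a cheap $T$-loop by a constant (Lemma~\ref{L:main}) and a segment-forest height argument (Theorem~\ref{thm:logarithmic}) combined with the known $O(k)$ bound on the number of segment types. None of these ingredients appears in your proposal, and without them you do not have the core combinatorial content of the theorem.

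Second, your treatment of the linear running time is a placeholder. The paper does not do localized tree-decomposition updates or Baker-style layering; it uses Reed's plane-cutting decomposition into $O(k)$ punctured planes (Proposition~\ref{P:reed}, Corollary~\ref{C:1imp}), linear-time irrelevant-vertex removal in $1$- and $2$-punctured planes (Proposition~\ref{P:oneFace}, Proposition~\ref{L:2Cycle}), and an additional, novel step that explicitly tests whether each of the $O(k\log k)$ boundary vertices is $g(k)$-isolated and deletes it if so. This last step is what lets the paper argue that the reassembled graph as a whole (not just each piece) has no deeply isolated vertex, which is essential for getting treewidth $O(\sqrt{k}\log k)$ rather than $O(k\log k)$; nothing in your sketch recovers it. So the proposal correctly identifies where the difficulty lies, but it neither supplies the rerouting lemma nor identifies the right statement to prove, and it does not give a viable route to linear time.
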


Notably, under the Exponential Time Hypothesis (ETH), the time complexity in Theorem~\ref{thm:main1} is tight up to the logarithmic factor in the exponent, and, on general graphs, no sub-exponential-time (fixed-parameter or not) algorithm exists for {\sc $T$-Cycle}. This can be seen from the fact that {\sc Hamiltonicity} is the special case of {\sc $T$-Cycle} when $T$ is the entire vertex set of the graph, and {\sc Hamiltonicity} is not solvable in times $2^{o(\sqrt{n})}$ and $2^{o(n)}$ on planar and general graphs, respectively, under the ETH~\cite{DBLP:books/sp/CyganFKLMPPS15}.

The existence of a subexponential-time fixed-parameter algorithm for the {\sc $T$-Cycle} problem on planar graphs might seem surprising when considered in the context of two other well-known ``terminals-based'' problems on planar graphs: Specifically, under the ETH, the {\sc Steiner Tree} problem on planar graphs is not solvable in time $2^{o(k)}\cdot n^{O(1)}$~\cite{DBLP:conf/focs/MarxPP18}, and the \dispaths problem was only recently shown to be solvable in time $2^{k^{O(1)}}\cdot n$~\cite{cho2023parameterized}, where, in both cases, $k$ denotes the number of terminals.

Following-up on the work of Bj\"{o}rklund et al.~\cite{DBLP:conf/soda/BjorklundHT12}, Wahlstr\"{o}m~\cite{DBLP:conf/stacs/Wahlstrom13} studied the compressability of {\sc $T$-Cycle}. To discuss Wahlstr\"{o}m's contribution, let us first present the definitions of compression and kernelization. Formally, we say that a (decision) parameterized problem $\Pi$ admits a {\em compression} into a (decision, not necessarily parameterized) problem $\Pi'$ if there exists a polynomial-time algorithm (called a {\em compression algorithm}) that, given an instance $(I,k)$ of $\Pi$, produces an equivalent instance $J$ of $\Pi'$ such that $|J|\leq f(k)$ for some computable function $f$ of $k$. When $f$ is polynomial, then the compression is said to be {\em polynomial-sized} (or, for short, {\em polynomial}), and when $\Pi'=\Pi$, then the compression is termed {\em kernelization}. Kernelization is, perhaps, the most well-studied research subarea of Parameterized Complexity after that of fixed-parameter tractability~\cite{kernelbook}. In fact, kernelization was termed ``the lost continent of polynomial time''~\cite{DBLP:conf/iwpec/Fellows06} by one of the two fathers of the field of parameterized complexity. Indeed, kernelization is, essentially, the only mathematical framework to rigorously reason about the effectiveness of preprocessing procedures, which are ubiquitous in computer science in general.  Here, the focus is on polynomial-sized kernels (and compressions), since the existence of a (not necessarily polynomial) kernel for a problem is equivalent to that problem being fixed-parameter tractable~\cite{cai1997advice}, while, on the other hand, there exist numerous problems known to be fixed-parameter tractable but which do not admit a polynomial kernel unless the polynomial hierarchy collapses~\cite{kernelbook} (with the first examples having been discovered already in 2008~\cite{DBLP:journals/jcss/BodlaenderDFH09}).

Wahlstr\"{o}m~\cite{DBLP:conf/stacs/Wahlstrom13} proved that {\sc $T$-Cycle} admits a compression of size $O(k^3)$. The target problem is a somewhat artificial algebraic problem, which, in particular, is not known to be in NP, and hence this compression does not imply the existence of a polynomial-sized kernelization. Over the years, Wahlstr\"{o}m's result on the compressability of {\sc $T$-Cycle}  has become an extremely intriguing finding in the field of Parameterized Complexity: Since its discovery and until this day, {\sc $T$-Cycle} remains the {\em only} natural problem known to admit a polynomial-sized compression but not known to admit a polynomial-sized kernelization! In fact, the resolution of the kernelization complexity of {\sc $T$-Cycle}  (on general graphs) is one of the biggest problems in the field.
We remark that the ideas behind this compression also yield an alternative $O(2^k\cdot n)$-time algorithm for {\sc $T$-Cycle} (in~\cite{DBLP:conf/stacs/Wahlstrom13}), which is, similarly to Bj\"{o}rklund et al.~\cite{DBLP:conf/soda/BjorklundHT12}'s algorithm, also randomized and based on algebraic arguments.

With the above context in mind, the second of the two  main contributions of our manuscript can be summarized in the following theorem.

\begin{restatable}{theorem}{mainKernel}\label{thm:main2}
Restricted to planar graphs, the {\sc $T$- Cycle} problem admits a $k^{O(1)}\cdot n$-time kernelization algorithm of size $k\cdot \log^{O(1)}k$.
\end{restatable}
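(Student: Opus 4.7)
The plan is to derive the kernel in two stages: an irrelevant-vertex reduction that brings the treewidth down to roughly $\sqrt{k}$, followed by a protrusion-replacement stage that compresses the remaining graph to near-linear size in $k$.

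\textbf{Stage 1: Irrelevant vertex reduction.} Drawing on the structural machinery developed for \cref{thm:main1}, I would first show that if $G$ contains a flat wall $W$ (equivalently, a grid minor) of size $\Omega(\sqrt{k}\log^{O(1)}k)$ whose interior is disjoint from the terminal set $T$, then the central vertex $v$ of $W$ is irrelevant: any $T$-loop using $v$ can be rerouted through a concentric cycle of $W$ while preserving the set of visited terminals, so $(G,T)$ and $(G-v,T)$ are equivalent. Using planar separator and local tree-decomposition tools, such a wall---or a certificate that the treewidth of $G$ is already $O(\sqrt{k}\log^{O(1)}k)$---can be detected in time $k^{O(1)}\cdot n_0$ on the current graph of size $n_0$, so exhaustive application reduces $G$ to a graph of treewidth $O(\sqrt{k}\log^{O(1)}k)$ in total time $k^{O(1)}\cdot n$.

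\textbf{Stage 2: Protrusion replacement.} With treewidth bounded, I would compute a protrusion decomposition in which a ``core'' of size $O(k\log^{O(1)}k)$ contains all terminals and the rest of $G$ consists of protrusions of constant boundary size. For \tcycle, the relevant signature of a protrusion $P$ with boundary $\partial P$ is the collection of pairs $(\mathcal{M},j)$, where $\mathcal{M}$ is a matching on a subset of $\partial P$ realizable by internally vertex-disjoint paths of $P$ that together visit exactly $j$ terminals of $T\cap V(P)$. Since $|\partial P|=O(1)$ and $j\leq k$, there are only $k^{O(1)}$ possible signatures, and I would exhibit for each realizable signature a canonical planar gadget of size $\log^{O(1)}k$ that implements it. Replacing every protrusion by its canonical gadget then yields a kernel of size $k\cdot \log^{O(1)}k$.

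\textbf{Main obstacle.} The main difficulty lies in Stage 2: unlike classical ``local'' problems, \tcycle couples distant parts of the graph through a single global cycle constraint \emph{and} through the terminal count, so the signature must record both a boundary matching and the number of internally covered terminals. I expect the crux to be designing the canonical gadget so that its size is genuinely poly-logarithmic rather than polynomial in $k$; this should follow from a representatives-selection argument that, exploiting planarity, simultaneously realizes all feasible matchings in a common drawing. Once the gadgets are constructed, correctness of the kernel reduces to showing that every $T$-loop in $G$ induces a consistent collection of signatures on the protrusions and conversely that consistent signatures can be stitched into a $T$-loop---a routine exchange argument made possible by planarity and the constant boundary size of each protrusion.
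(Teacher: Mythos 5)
Your Stage~1 roughly parallels the paper's treewidth reduction, but Stage~2 has a real gap, and the running-time requirement is not addressed at all.

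\textbf{Stage 2 rests on a false premise.} You assume that after the treewidth reduction one can compute a protrusion decomposition whose protrusions have \emph{constant} boundary size. There is no reason this should hold. The paper's treewidth reduction (\Cref{thm:twreduction}) produces a set $U$ of size $O(k\log k)$ such that $\tw(G'-U)=O(\log k)$; the generic protrusion-decomposition machinery (\cite[Lemma 15.14]{kernelbook}) then gives protrusions with boundary size and treewidth both $O(\log k)$, not $O(1)$. (Constant boundary would require a modulator to constant treewidth, which is not available.) Your subsequent estimate ``there are only $k^{O(1)}$ possible signatures'' is based entirely on $|\partial P|=O(1)$; with $|\partial P|=O(\log k)$ the number of partial matchings on $\partial P$ alone is $2^{\Theta(\log k\log\log k)}$, quasi-polynomial in $k$, so the signature-enumeration strategy does not deliver $\log^{O(1)}k$-sized gadgets. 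The paper sidesteps this by invoking the planar \dispaths kernel of \cite{DBLP:conf/focs/0001Z23} (their Theorem~8) as a black box: a planar graph of treewidth $\tw$ with $|B|=c$ designated boundary vertices admits a $B$-linkage-equivalent replacement of size $O(c^{12}\tw^{12})$, which for $c,\tw=O(\log k)$ yields $\log^{O(1)}k$-sized protrusion replacements without any enumeration of matchings. Your closing remark about a ``representatives-selection argument'' gestures at exactly this kind of result, but that is the nontrivial heavy lifting, not a routine exchange argument. Also, the terminal count $j$ in your signature is unnecessary: the set $U$ contains all terminals (each terminal is a trivial hole on the boundary of the initial punctured plane), so the protrusions are terminal-free and the signature is purely a boundary-linkage matter.

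\textbf{Linear running time is not addressed.} Even granting the \dispaths black box, applying it to each protrusion gives only a polynomial-time (not $k^{O(1)}\cdot n$) algorithm. The paper's fix is a genuinely different second ingredient: a \emph{nested} protrusion decomposition. It reruns the Reed-style treewidth reduction inside each protrusion relative to $B_i$-linkage equivalence, obtaining subprotrusions with boundary and treewidth $O(\log\log k)$, then treats the \dispaths kernel as an \emph{existential} statement and \emph{guesses} replacement graphs of size $\log^{O(1)}\log k$ (only $2^{\log^{O(1)}\log k}=k^{O(1)}$ candidates), verifying each via minor-containment and \dispaths dynamic programming over the low-treewidth subprotrusion. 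Without this two-level scheme the running time does not come down to $k^{O(1)}\cdot n$. Your proposal has no analogue of this step, so the linear-time claim in the theorem would remain unproven.
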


Again, the bounds in the theorem are almost tight in terms of both time complexity and, more importantly, the size of the reduced instance. First, we cannot expect a time complexity of $o(n)$ as we need to, at least, read the input. (We note that the $k^{O(1)}$ factor is essentially negligible because the time complexity of an exact algorithm to solve this problem afterwards will, anyway, have at least a $2^{\Omega(\sqrt{k})}$ factor under the ETH as argued earlier). Second, we cannot expect the reduced instance to be of size $o(k)$ (or even of size $k-1$), else we can repeatedly reapply the kernelization algorithm until it will yield a constant-sized instance that is solvable in constant time, thereby solving the (NP-hard) problem in polynomial time.

As a side note, we remark that our proof might shed light on the kernelization complexity of {\sc $T$-Cycle} on general graphs as well: We believe that the scheme that we employ---on a high-level, the usage of a polynomial-time treewidth reduction to a polynomial function of $k$ coupled with a polynomial kernel for the combined parameter $k$ plus treewidth, and on a lower level, the specific arguments used in our proofs to implement both procedures---{\em might} be liftable to general graphs.

Combining Theorems~\ref{thm:main1} and~\ref{thm:main2}, we conclude the following corollary, which is, up to a logarithmic factor in the exponent, essentially the best one can hope for.

\begin{corollary}
Restricted to planar graphs, the {\sc $T$-Cycle} problem is solvable in time $2^{O(\sqrt{k}\log k)} + k^{O(1)}\cdot n$.
\end{corollary}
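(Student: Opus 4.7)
The plan is to simply compose the two previously stated main theorems: first invoke the kernelization of Theorem~\ref{thm:main2} as a preprocessing step, and then invoke the subexponential FPT algorithm of Theorem~\ref{thm:main1} on the resulting compact instance. Concretely, given an input instance $(G,T)$ with $|V(G)|=n$ and $|T|=k$, I would first run the kernelization algorithm from Theorem~\ref{thm:main2} in time $k^{O(1)}\cdot n$ to obtain an equivalent planar instance $(G',T')$ with $|V(G')|\leq k\cdot\log^{O(1)}k$ and $|T'|\leq k$.

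Next, I would apply the algorithm of Theorem~\ref{thm:main1} to $(G',T')$. Its runtime is $2^{O(\sqrt{k'}\log k')}\cdot n'$ where $k'\leq k$ and $n'=k\cdot\log^{O(1)}k$, yielding total time $2^{O(\sqrt{k}\log k)}\cdot k\cdot\log^{O(1)}k$ for this second phase. Summing the two phases gives an overall bound of
\[
k^{O(1)}\cdot n \;+\; 2^{O(\sqrt{k}\log k)}\cdot k\cdot\log^{O(1)}k.
\]

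The only mildly nontrivial observation is that the kernel-size factor $k\cdot\log^{O(1)}k$ from the second phase can be absorbed into the exponential term: since $k\cdot\log^{O(1)}k = 2^{O(\log k)}$ and $O(\log k) = o(\sqrt{k}\log k)$, we have $2^{O(\sqrt{k}\log k)}\cdot k\cdot\log^{O(1)}k = 2^{O(\sqrt{k}\log k)}$. Hence the total runtime simplifies to $2^{O(\sqrt{k}\log k)} + k^{O(1)}\cdot n$, exactly the bound claimed.

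There is no real obstacle here — the argument is a clean two-line composition of Theorems~\ref{thm:main1} and~\ref{thm:main2}, and the only bookkeeping point is verifying correctness of the instance passed between the two (which is guaranteed by the definition of a kernelization producing an equivalent instance of the same problem, with $|T'|\leq|V(G')|$ still bounded in terms of $k$) and the asymptotic absorption noted above.
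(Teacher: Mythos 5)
Your proof is correct and mirrors the paper's approach exactly: the paper also obtains this corollary by composing Theorem~\ref{thm:main2} (kernelize in $k^{O(1)}\cdot n$ time to an instance of size $k\log^{O(1)}k$) with Theorem~\ref{thm:main1} (solve the kernel in $2^{O(\sqrt{k}\log k)}$ time, absorbing the $k\log^{O(1)}k$ factor into the exponent). The only point worth being slightly more precise about is that the paper's kernelization leaves the terminal set $T$ unchanged, so $|T'|=k$ is guaranteed rather than merely bounded by the kernel size, which is what makes the $2^{O(\sqrt{k}\log k)}$ bound (rather than $2^{O(\sqrt{k}\,\mathrm{polylog}\,k)}$) hold in the second phase.
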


\subsection{Related Problems} We first note that when the {\sc $T$-Cycle} problem is extended to directed graphs, it becomes NP-hard already when $k=2$~\cite{fortune1980directed}.
Various other extensions/generalizations of {\sc $T$-Cycle} were studied in the literature, including from the perspective of parameterized complexity. For example, Kawarabayashi et al.~\cite{kawarabayashi2010recognizing} developed an algorithm for detecting a $T$-loop whose length has a given parity; for fixed $k$, the running time is polynomial in $n$, but the dependency on $k$ is unspecified. For another example, Kobayashi and Kawarabayashi~\cite{kobayashi2009algorithms} developed an algorithm for detecting an induced $T$-loop in a planar graph in time $2^{O(k^{3/2}\log k)}n^2$.  The survey of additional results of this nature is beyond the scope of this paper.

Highly relevant to the {\sc $T$-Cycle} problem is the \dispaths problem. Here, given a graph $G$ and a set $T=\{(s_i,t_i): i\in\{1,2,\ldots,k\}\}$ of pairs of vertices (termed terminals) in $G$, the objective is to determine whether $G$ contains $k$ vertex-disjoint\footnote{With the exception that a vertex can occur in multiple paths if it is an endpoint in all of them.} paths $P_1,P_2,\ldots,P_k$ so that the endpoints of $P_i$ are $s_i$ and $t_i$. The {\sc $T$-Cycle} problem can be reduced to the \dispaths problem with an overhead of $k!=k^{O(k)}$: Given an instance $(G,T)$ of {\sc $T$-Cycle}, for every ordering $v_0,v_1,\ldots,v_{k-1}$ of $T$, create an instance $(G,T')$ of \dispaths where $T'=\{(v_i,v_{(i+1)\mod k}): i\in\{0,1,\ldots,k-1\}\}$; then, $(G,T)$ is a yes-instance of {\sc $T$-Cycle} if and only if at least one of the constructed instances of \dispaths is. Being foundational to the entire graph minor theory~\cite{DBLP:conf/birthday/Lokshtanov0Z20}, the \dispaths problem is of great importance to both algorithm design and graph theory, and has been intensively studied for several decades. Most relevant to this manuscript is to mention that, currently, the fastest algorithms for \dispaths run in $f(k)\cdot n^2$ time for $f(k)>2^{2^{2^{2^{2^k}}}}$ on general graphs~\cite{DBLP:journals/jct/KawarabayashiKR12}, and in $2^{k^{O(1)}}\cdot n$ time on planar graphs~\cite{cho2023parameterized}. It is worth mentioning that  Korhonen, Pilipczuk, and Stamoulis~\cite{korhonen2024minor} recently provided an FPT algorithm for \textsc{Disjoint Paths} with almost-linear running time dependence on the number of vertices and edges, i.e., with running time $f(k)\cdot (m+n)^{1+o(1)}$. Also, the \dispaths problem does not admit a polynomial kernel w.r.t.~$k$ even on planar graphs~\cite{DBLP:conf/focs/0001Z23}, but it admits a polynomial kernel w.r.t.~$k$ plus the treewidth of the input graph on planar graphs~\cite{DBLP:conf/focs/0001Z23} (on general graphs, it is unknown whether \dispaths admits a polynomial kernel w.r.t.~$k$ plus treewidth).

Also quite relevant to the {\sc $T$-Cycle} problem is the {\sc $k$-Cycle} (or {\sc $k$-Path}) problem, which is also, perhaps, the second or third most well-studied problem in Parameterized Complexity (see, e.g., the dozens of papers cited in~\cite{DBLP:conf/soda/LokshtanovSZ21}). Here, given a graph $G$ and a nonnegative integer $k$, the objective is to determine whether $G$ contains a simple cycle (or path) on $k$ vertices. The techniques used to solve {\sc $k$-Cycle} can be trivially lifted to solve {\sc $T$-Cycle} when parameterized by the {\em sought size of a solution $\ell$} (given as an extra parameter in the input), which can be arbitrarily larger than $T$ (e.g., in the extreme case where the entire graph is a single cycle, $\ell=n$ while $k=1$). In particular, this means that, on general graphs, {\sc $T$-Cycle} is solvable in time $O(2^k 1.66^{\ell-k}\cdot (n+m))$ by a randomized algorithm~\cite{DBLP:journals/jcss/BjorklundHKK17} (or time $O(2.56^k 1.66^{\ell-k}\cdot (n+m))$ by a deterministic algorithm~\cite{DBLP:journals/tcs/Tsur19b}), and on planar graphs,  {\sc $T$-Cycle} is solvable in time $2^{O(\sqrt{\ell})}\cdot (n+m)$ by a deterministic algorithm~(e.g., via~\cite{DBLP:journals/jcss/DornFT12}).

\subsection{Techniques} We present a high-level overview of our proofs in the next section. Here, we only briefly highlight (in a very informal manner) the novelties in our proofs, and put them in the context of the known literature.  Essentially, the following theorem reveals some of the core combinatorial arguments that underlie our algorithmic results. Thus, we believe that this result may be of independent interest. 

\begin{restatable}{theorem}{TWReduction}\label{thm:twreduction}
There exists a $k^{O(1)}\cdot n$-time algorithm that, given an instance $(G,T)$ of {\sc $T$-Cycle} on planar graphs, outputs an equivalent instance $(G',T)$ of {\sc $T$-Cycle} on planar graphs along with a set of vertices $U$ such that:
\begin{enumerate}
\item $G'$ is a subgraph of $G$ whose treewidth is bounded by $O(\sqrt{k}\log k)$, and
\item $|U|\in O(k\log k)$ and the treewidth of $G'-U$ is bounded by $O(\log k)$.
\end{enumerate}
\end{restatable}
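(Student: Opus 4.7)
The plan is to prove the theorem in two phases: a size-reduction phase that produces $G'$, and a structural phase that deduces the treewidth bound on $G'$ and constructs $U$.

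In the first phase, I construct $G'$ by iteratively applying answer-preserving reduction rules. The aim is to maintain an equivalent planar subgraph of the original input while shrinking it, until the current graph contains at most $c \cdot k \log^{2} k$ vertices for a suitable constant $c$. The main engine is an irrelevant-vertex rule derived from the planar excluded-grid machinery: whenever the vertex count exceeds the target threshold, a multi-source BFS from $T$ must exhibit either a ``thick'' BFS layer or a deep BFS radius relative to $k$, and planarity then forces a large grid-like region disjoint from $T$. Inside such a region a central vertex can be shown to be irrelevant, because any candidate \tcycle solution passing through it admits a planar rerouting that avoids it. An efficient planar grid-minor detection subroutine locates both the region and its irrelevant central vertex in $k^{O(1)}$ time per iteration, so that after at most $n$ deletions the whole size-reduction phase runs in $k^{O(1)} \cdot n$ time.

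In the second phase, I derive the treewidth bound and construct $U$. Since $G'$ is planar with $|V(G')| = O(k \log^{2} k)$, the classical planar treewidth bound gives $\tw(G') = O(\sqrt{|V(G')|}) = O(\sqrt{k}\log k)$, which is property (1). For property (2), I perform a multi-source BFS from $T$ in $G'$, producing layers $L_0 = T, L_1, \ldots, L_d$, and fix a period $\ell = \Theta(\log k)$. For each offset $s \in \{0, 1, \ldots, \ell-1\}$ I consider the candidate set $U_s := \bigcup_{j \geq 0} L_{s+j\ell}$; averaging over $s$ guarantees that some offset $s^{*}$ satisfies $|U_{s^{*}}| \leq |V(G')|/\ell = O(k\log k)$, and I take $U := U_{s^{*}} \cup T$. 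Every connected component of $G' - U$ is contained in a window of at most $\ell-1 = O(\log k)$ consecutive BFS layers; the standard Baker-style layering theorem for planar graphs then yields $\tw(G'-U) = O(\ell) = O(\log k)$, giving property (2).

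The principal difficulty is the first phase: establishing a sound and efficient irrelevant-vertex rule for \tcycle on planar graphs. Formally proving that a vertex deep inside a large planar region disjoint from $T$ can be deleted without changing the \tcycle answer requires a careful topological rerouting argument, and this is where I expect the combinatorial heart of the paper to lie. A secondary but real technical challenge is implementing the iterative deletion in total time $k^{O(1)} \cdot n$ rather than $k^{O(1)} \cdot n^{2}$; this will likely require incremental data structures or an amortized analysis so that the BFS tree and the grid-minor decomposition are not recomputed from scratch after each removal.
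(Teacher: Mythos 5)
Your Phase~1 claim---that iterated irrelevant-vertex deletion can shrink the instance to $O(k\log^2 k)$ vertices---is wrong, and this breaks the rest of the argument. Removing $O(\log k)$-isolated vertices (vertices separated from $T$ by $\Theta(\log k)$ nested concentric cycles) bounds the \emph{radial depth} and hence the treewidth, but it does \emph{not} bound the number of vertices. After all such deletions, the remaining graph $G'$ can still have $\Omega(n)$ vertices: picture a planar graph where every vertex lies within radial distance $O(\log k)$ of some terminal (e.g.\ wide but shallow fans hanging off the terminals); no vertex is deeply isolated, so nothing gets deleted, yet the graph is arbitrarily large. Consequently your Phase~2 derivations both fail: $\tw(G')=O(\sqrt{|V(G')|})$ no longer yields $O(\sqrt{k}\log k)$, and the averaging bound $|U_{s^\ast}|\leq |V(G')|/\ell$ no longer gives $O(k\log k)$. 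The paper never claims a vertex-count bound on $G'$; it proves the treewidth bound directly from the \emph{absence of deep concentric cycles} (Proposition~\ref{P:jctb}, a result of Adler et al.), and it obtains $U$ not by BFS layering but as the $O(k\log k)$ boundary vertices arising from Reed's recursive plane-cutting decomposition into $O(k)$ punctured-plane pieces. The bound $\tw(G'-U)=O(\log k)$ then follows because $G'-U$ cannot contain $4g(k)+1$ nested concentric cycles (any such nest would place a deeply boundary-isolated vertex in some piece, which has been deleted), hence no large grid minor.

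You correctly identify the irrelevant-vertex rule as the combinatorial heart, and the Baker-style intuition for bounding $\tw(G'-U)$ is in the right spirit (the paper's ``no long concentric sequence'' argument plays a morally similar role). But you underestimate the algorithmic side: achieving $k^{O(1)}\cdot n$ total time is not a matter of incremental data structures applied to a naive deletion loop; it requires a genuinely different algorithmic design, namely Reed's recursive cut-reduction of the punctured plane into $O(k)$ pieces, followed by batch removal of boundary-isolated vertices per piece, which is a central component of the paper rather than a secondary optimization.
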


Indeed, from here, Theorem~\ref{thm:main1} directly follows by using a $2^{O(tw)}\cdot n$-time algorithm for {\sc $T$-Cycle} parameterized by the treewidth $tw$ of the planar graph $G'$ as a black box~(e.g., via~\cite{DBLP:journals/jcss/DornFT12}), while Theorem~\ref{thm:main2} requires additional non-trivial work, described later in this section.

Our proof of Theorem~\ref{thm:twreduction} consists of three main ingredients. Towards the first ingredient, we consider a sequence of concentric cycles, and classify the ``segments'' of an (unknown) solution $T$-loop that go inside this sequence into types, similarly to how it is done in~\cite{JCTB} for the \dispaths problem. Here, a cheap solution is one that uses as few edges as possible that do not belong to the sequence of concentric cycles. Then, we prove the following result, which is the first ingredient.
\begin{lemma}[Informal Statement of \Cref{L:main}]\label{lem:segmentTypes}
For a ``cheap'' solution $T$-loop and a sequence of concentric cycles that does not contain any terminal,  there can be at most constantly many segments of the same type.
\end{lemma}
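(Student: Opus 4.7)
\medskip

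\noindent\textbf{Proof plan for \Cref{L:main} (informal statement).}

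The plan is to carry out an exchange argument in the style of the planar \dispaths literature (cf.~the JCTB reference cited above): assume for contradiction that the $T$-loop $L$ has too many segments of the same type inside the annulus bounded by the outermost concentric cycle, and then locally reroute two such segments along arcs of the concentric cycles to obtain a strictly cheaper $T$-loop. First, I would fix notation for the annulus $A$ spanned by the concentric cycles $C_1, \ldots, C_r$ and formally define a \emph{segment} of $L$ as a maximal subpath of $L$ that lies in $A$, records where it meets each $C_i$, and uses at least one edge that is not on any $C_i$. The \emph{type} of a segment encodes the combinatorial/homotopy signature of this trajectory (the pair of endpoint-cycles, the cyclic positions of the endpoints, and the sequence of crossings with $C_1, \ldots, C_r$, up to the equivalence used in the type definition of \Cref{L:main}).

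The core combinatorial step is to show that many segments of the same type can be reshuffled along the concentric cycles without leaving $A$ and without cutting the $T$-loop. Concretely, if there are $N$ segments $S_1, \ldots, S_N$ of the same type, then by planarity they can be ordered so that they appear ``in parallel'' between the same pair of cycles $C_a, C_b$, with endpoints sitting in a consistent cyclic order on $C_a$ and on $C_b$. I would then take two consecutive such segments $S_i, S_{i+1}$ in this order and look at the two sub-arcs of $C_a$ (respectively $C_b$) that connect their endpoints along the ``inner'' side of the parallel pair. Replacing $S_i \cup S_{i+1}$ in $L$ by a combination of these concentric arcs together with a shortened trajectory yields another closed 2-regular subgraph $L'$ whose edges outside $\bigcup_i C_i$ are a strict subset of those of $L$ (since at least one non-cycle edge of the original segments is removed, while only concentric-cycle edges are added). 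This is exactly where the cheapness of $L$ is contradicted.

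Two things must be verified for $L'$ to be a legitimate competitor: it must be a single simple cycle, and it must still pass through all terminals. Single-cycleness is handled by choosing the swap orientation (of the two possible reroutings, exactly one merges the two sub-cycles into one; here the ``same type'' hypothesis is crucial, because it guarantees that the endpoints on $C_a$ and on $C_b$ occur in the cyclic order that makes the swap produce a connected cycle rather than splitting into two). Preservation of terminals is where the hypothesis that the concentric cycles carry no terminal is used in full: the pieces of $L$ that are deleted are either interior pieces of $S_i, S_{i+1}$ (which lie in the open annulus, hence off $\bigcup_i C_i$ but still inside $A$) or arcs of $C_a, C_b$, and in either case, the no-terminal assumption for the concentric cycles plus the fact that a segment was defined to have no terminal in its interior (or: terminals are attributed to the ``outside'' portion of $L$) means no terminal can be lost by the rerouting.

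I expect the main obstacle to be making the notion of \emph{type} tight enough that ``same type'' really forces a clean parallel structure amenable to the swap, yet loose enough that the number of distinct types is bounded (so that bounding the multiplicity of each type by a constant yields the total segment bound needed later). In particular, care is needed in the boundary cases where a segment revisits a concentric cycle $C_j$ many times, or where two candidate swap sub-arcs of $C_a, C_b$ both happen to carry other segments of $L$ that would be broken by the reroute. The standard way around the latter is to pick the two segments $S_i, S_{i+1}$ to be \emph{consecutive} in the parallel order, so that the sub-arcs used for rerouting are disjoint from all other segments of the same type, and then to use a pigeonhole argument on the bounded number of types to upgrade ``consecutive pair'' into the claimed constant bound per type.
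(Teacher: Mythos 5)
Your high-level plan (contradict cheapness by rerouting along the concentric cycles) is the same engine the paper uses, and your remark about where the ``no terminal on the concentric cycles'' hypothesis enters matches \Cref{O:three}. But the specific exchange you propose has a genuine gap.

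You propose to take two \emph{consecutive} same-type segments $S_i, S_{i+1}$ and replace $S_i \cup S_{i+1}$ by arcs of the concentric cycle. The claim that ``of the two possible reroutings, exactly one merges the two sub-cycles into one'' is precisely where this fails. If $S_1, S_2$ have the same $C_j$-type with $S_2 \prec S_1$ and endpoints $u_1, v_1, u_2, v_2$ on $C_j$, the same-type condition (together with planarity and simplicity of $L$) forces the endpoints into cyclic order $u_1, u_2, v_2, v_1$ along $C_j$, and forces the two out-segments of $L$ (relative to $S_1, S_2$) to pair $u_1$ with $u_2$ and $v_1$ with $v_2$ — pairings $u_i$-$v_i$ would disconnect $L$, and cross pairings like $u_1$-$v_2$ are blocked by planarity. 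Under this forced combinatorics, deleting $S_1$ and $S_2$ and inserting the $(u_1,u_2)$ arc and the $(v_1,v_2)$ arc produces \emph{two} disjoint cycles, while the ``other'' orientation ($(u_1,v_2)$, $(u_2,v_1)$ arcs on $C_j$) necessarily runs through the other endpoints and creates degree-$\geq 3$ vertices. There is no pairwise swap that works; your ``shortened trajectory'' that is supposed to bridge the two pieces is left unspecified, and that bridge is exactly the hard part.

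The paper circumvents this by requiring \emph{four} segments of the same type (\Cref{L:main} assumes $|\mathcal{S}| > 3$) and rerouting through a \emph{third} segment $S_3$. Concretely, $S_1$ and $S_2$ are removed, part of $S_1$ is re-attached, part of $S_3$ is re-routed through an arc $P''$ on the next inner cycle $C_{j-1}$, and two arcs on $C_j$ ($(u_1,u_2)$ and $(v_2,v_3)$) close the loop; a small case analysis on the out-segment structure (\Cref{L:T2}, using Claim~5.1) shows that this yields a \emph{single} cycle in both possible configurations. The presence of the third nested segment is what supplies the bridge you were missing. Two additional subtleties your plan omits: (i) the argument uses the convex CL-configuration framework, which guarantees (via Observation~\ref{O:free}) that $S_2$ has strictly positive cost — without convexity, $S_2$ might be a single edge on $C_j$ and the exchange would not be strictly cheaper; and (ii) the rerouting uses arcs on \emph{both} $C_j$ and $C_{j-1}$, not just on the one cycle whose type classification you are working with.
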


We note that in~\cite{JCTB}, it is proved that the number of segments of the same type for a \dispaths solution is  $2^{O(k)}$. Our proof of Lemma~\ref{lem:segmentTypes} is based on an elegant (and novel)  re-routing argument   (see Section~\ref{sec:overview}), which has no relation to that in~\cite{JCTB}.  This argument is, in particular, perhaps the cornerstone  of the proof of Theorem~\ref{thm:twreduction}. 

With Lemma~\ref{lem:segmentTypes} at hand, we then turn to prove our second ingredient. 

\begin{lemma}[Informal Statement of \Cref{thm:logarithmic}]\label{lem:segmentNumber}
For a ``cheap'' solution $T$-loop and a sequence of concentric cycles that does not contain any terminal,  no segment goes more than $O(\log k)$ cycles deep into the sequence.
\end{lemma}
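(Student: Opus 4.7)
The plan is to fix a cheap $T$-loop $L$ and a sequence of concentric terminal-free cycles $C_1, \ldots, C_m$ (with $C_i$ enclosing $C_{i+1}$) and to track the ``survival count'' $N_i$ equal to the number of pieces of $L$ that reach the interior of $C_i$. The goal is to establish a geometric-decay estimate of the form $N_{i+1} \le N_i / 2$, together with an initial bound $N_1 = k^{O(1)}$, from which one concludes $N_{c\log k} < 1$ for some constant $c$, yielding the $O(\log k)$ depth bound.

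The initial bound $N_1 = k^{O(1)}$ would follow from Lemma~\ref{lem:segmentTypes} applied at depth one, combined with the observation that the number of distinct types at depth one is polynomial in $k$: each such type is determined by a bounded amount of data describing how the segment interfaces with the exterior of $C_1$, and a cheap $L$ can only maintain polynomially many such interfaces in terms of $k$ (essentially because non-cycle edges are globally scarce). The heart of the argument is the halving step. Here I would classify each segment counted by $N_{i+1}$ by its type inside the annulus bounded by $C_i$ and $C_{i+1}$, recording the cyclic order and pairing of its crossings with the two bounding cycles. Lemma~\ref{lem:segmentTypes} then bounds the number of segments per type by an absolute constant. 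Conversely, I would show that each type realized at depth $i+1$ must account for at least two distinct segments in $\mathcal{S}_i$: were there only one representative, cheapness would permit a local rerouting that eliminates its crossings with $C_{i+1}$ without increasing the number of non-cycle edges, contradicting the assumed structure of $L$ or the definition of its type. Multiplying the two inequalities yields $N_{i+1} \le N_i / 2$.

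The main obstacle, as in analogous concentric-cycles arguments, is calibrating the type notion so that it is simultaneously coarse enough for Lemma~\ref{lem:segmentTypes}'s $O(1)$ multiplicity bound to apply and fine enough for the ``two-for-one'' consumption of shallow segments to go through. Addressing this will require a careful planar case analysis that distinguishes nested from side-by-side crossing patterns inside each annulus and that propagates information from deeper levels without inflating the type count; cheapness of $L$ is the lever that rules out the pathological configurations in which a single type at depth $i$ could extend to depth $i+1$ unaccompanied. Once the single-step halving is in place, iterating it from $N_1 = k^{O(1)}$ exhausts the survival count after $O(\log k)$ layers, which is exactly the claimed bound.
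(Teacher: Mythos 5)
Your overall strategy---track a survival count $N_i$ of segments that penetrate past $C_i$ and drive it down geometrically---is in the same spirit as the paper's argument, but two of the specific claims you lean on do not hold up, and the combination you propose doesn't produce the halving you need.

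First, the arithmetic: you bound ``multiplicity'' per type by an absolute constant $c$ (via \Cref{L:main}, $c=3$) and claim each type realized at depth $i+1$ accounts for at least two segments of $\mathcal{S}_i$. Writing $t$ for the number of types with a representative at depth $i+1$, these two bounds give $N_{i+1}\le c\,t$ and $N_i\ge 2t$, hence $N_{i+1}\le (c/2)\,N_i$, which for $c=3$ is an inequality in the \emph{wrong direction}. You would need the consumption bound to attach two \emph{extra} shallow segments per deep segment, injectively, to get $N_i\ge 2N_{i+1}$; the ``at least two per type'' statement does not do this.

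Second, the per-level halving $N_{i+1}\le N_i/2$ is stronger than what the available rerouting lemma supports, and also stronger than what is actually needed. What \Cref{L:main} forbids is four nested segments of the same $C_j$-type. Translated into your language, this only forces branching (and hence decay of the count) roughly once every \emph{three} levels, not every level. The paper captures exactly this: it builds the $j$-th segment forest (ancestor/descendant via the zone order $\prec$), shows in \Cref{lem:eitheror} and \Cref{lem:atleasttwo} that the width-$3$ subtree rooted at the $3$rd-generation ancestor of any segment must contain a second, incomparable segment (otherwise four consecutive ancestors would share a $C_e$-type, contradicting \Cref{L:main}), and turns this into the recurrence $N \ge 2a^{H-6}$ with $a = 2^{1/3}$ in \Cref{thm:logarithmic}. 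The convexity condition (iii) then links forest height to eccentricity, yielding the $O(\log k)$ depth bound once the total segment count is bounded.

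Third, your proposed ``second rerouting'' (eliminating the crossings of a lone type-representative with $C_{i+1}$ without increasing cost) is a genuinely new claim that is neither established nor obviously true, and the paper does not need it. The only rerouting in the paper is the one in \Cref{L:main}; everything after that is a purely combinatorial/forest-structural argument. Likewise, the initial bound you hand-wave as $N_1 = k^{O(1)}$ is obtained cleanly in \Cref{C:irrelevant} by citing \cite[Lemma~5]{JCTB} ($O(k)$ segment types) together with the constant-per-type bound, giving $O(k)$ segments total. In short: the geometric-decay intuition is right, but the per-level halving, the injectivity needed for the count comparison, and the extra rerouting step all need to be replaced by the segment-forest bookkeeping if the argument is to close.
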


We note that in~\cite{JCTB}, it is proved that no segment (for a \dispaths solution) goes more than $2^{O(k)}$ cycles deep into the sequence. For the proof of our lemma, we make use of one intermediate result from~\cite{JCTB} about \dispaths solutions, which states that there can be  $O(k)$ segments of different types. However, besides that, the proof of our lemma is different, based on a simple analysis of the ``segment-tree'' induced by the segments (see Section~\ref{sec:overview}). We note that  rerouting arguments have been used for other terminal-based routing problems~\cite{kobayashi2009algorithms}. But, in our knowledge, our rerouting arguments are first to achieve a sublinear bound on the number of concentric cycles used by a cheap solution. Further, we believe that another strength of our result is that the arguments used to prove Lemma~\ref{lem:segmentNumber} are rather simple, and therefore easy to reuse in future works.

 In particular, Lemma~\ref{lem:segmentNumber} implies that within a $c\log k$-sized (for some constant $c$) sequence of concentric cycles that does not contain any terminal, every vertex that lies inside the innermost cycles is irrelevant, that is, can be deleted without turning a yes-instance into a no-instance. In particular, this implies that within a $c\log k\times c\log k$-grid minor that does not contain any terminal, the ``middle-most'' vertex is irrelevant. Here, it is important to mention that the main component in the proof of Kawarabayashi~\cite{kawarabayashi2008improved} is a lemma that (essentially) states that within a $ck\times ck$-grid minor  that does not contain any terminal, the ``middle-most'' vertex is irrelevant. Our proof (that is, the arguments briefly described so far) is completely different, and, in particular, yields an exponential improvement (from $ck$ to $c\log k$).
 
 From here, it is easy to provide a $k^{O(1)}\cdot n^2$-time algorithm that given an instance $(G,T)$ of {\sc $T$-Cycle} on planar graphs, outputs an equivalent instance $(G',T)$ of {\sc $T$-Cycle} on planar graphs such that $G'$ is a subgraph of $G$ whose treewidth is bounded by $O(\sqrt{k}\log k)$. Indeed, this can be done by, as long as possible, computing a $c\log k\times c\log k$ grid minor that does not contain any terminal, and removing its ``middle-most'' vertex, where each iteration requires time $k^{O(1)}\cdot n$, and where $O(n)$ iterations are performed in total.
 
 To reduce the time complexity to be linear in $n$, we make use of Reed's~\cite{reedLinear} approach of dividing the plane and working on each ``piece'' simultaneously, together with an argument by Cho et al.~\cite{cho2023parameterized} that further simplifies the pieces such that none of the piece contains a sequence of ``too many'' concentric cycles. Here, the proof is mostly based on  a similar approach described in the aforementioned two papers with minor adaptations. However, we need to add one novel and critical step. Briefly,  by applying the  known approach (with the minor adaptations to our case) we directly obtain pieces such that: {\em (i)} the total number of vertices on their boundaries is $O(k\log k)$; {\em (ii)} no piece contains a sequence of more than $c\log k$ many concentric cycles. This already yields a set of vertices $U$ such that  $|U|\in O(k\log k)$ and the treewidth of $G'-U$ (where $G'$ is the current graph) is bounded by $O(\log k)$, by taking the union of boundaries (that is what we need for the second item in Theorem~\ref{thm:twreduction}). However, from this, the aforementioned two papers (i.e., \cite{reedLinear} and~\cite{cho2023parameterized}) directly get their tree decomposition of $G'$ by (implicitly) creating a tree decomposition for each piece, gluing them together, and putting the vertices of $U$ in all bags. For these two previous papers, this was sufficient---they get pieces of treewidth $2^{O(k)}$ and $U$ of size $2^{O(k)}$, and thus by doing this, they do not lose anything. However, for us, this yields a tree decomposition of width $O(k\log k)$, while we need width $O(\sqrt{k}\log k)$. We overcome this by adding a new step, where we attempt to remove (possibly many) boundary vertices, and after performing this step, we are able to directly argue that the graph $G'$ itself (rather than just each one of its pieces individually) does not have a sequence of more than $c\log k$ many concentric cycles that do not contain any terminal  (see Section~\ref{sec:overview}). Thus, we get the bound $O(\sqrt{k}\log k)$.
 
Lastly, having Theorem~\ref{thm:twreduction} at hand, let us address the additional work we need to perform in order to prove Theorem~\ref{thm:main2}. Here, we use two powerful theorems/machineries:
\begin{itemize}
\item Machinery I~\cite{bodlaender2016meta} (see also~\cite{kernelbook}): Suppose we are given a set of vertices $U$ with $|U|\in O(k\log k)$ so that the treewidth of $G'-U$ is bounded by $O(\log k)$. Then, one can construct in $k^{O(1)}\cdot n$-time a so-called protrusion decomposition of $G'$ with ``very good parameters''. That is, roughly speaking, the vertex set of $G'$ can be partitioned into a ``universal'' part $U^\star\supseteq U$ of size $O(k\log^2 k)$ and $O(k\log^2 k)$ additional ``protrusion'' parts such that each protrusion part induces a graph of treewidth $O(\log k)$, has  $O(\log k)$ many vertices that have neighbors in $U^\star$, while having no neighbors in the other protrusions.
\item Machinery II~\cite{DBLP:conf/focs/0001Z23}:  A generalization of the \dispaths problem on planar graphs, where, roughly speaking, we need to preserve the yes/no answer for every possible pairing of the terminals vertices, admits a kernel of size $O((k'\cdot w')^{12})$ where $k'$ is the number of terminals and $w'$ is the treewidth of the graph, and the output graph is a minor of the input graph.
\end{itemize}

Thus, we can compute the protrusion decomposition using the set obtained from Theorem~\ref{thm:main1} (which, in particular, contains all terminals), and then kernelize each of the protrusions individually using Machinery II, where the terminal set is defined as those vertices having neighbors outside the protrusion. However, this will not yield a time complexity of $k^{O(1)}\cdot n$ since Machinery II only specifies a polynomial (rather than linear) dependency on $n$, and, furthermore, this will yield a huge polylogarithmic dependency on $k$ in the size of the reduced instance.

To overcome this, we employ two novel twists in this approach.  First, we create a ``nested'' protrusion decomposition, or, alternatively, one can think about this as ``bootstrapping'' the entire proof for each protrusion. In more details, for each protrusion $P$, we yet again apply all the arguments in our proofs so far to now find a set of vertices $U_P$ with $|U_P|\in O(\log k\log\log k)$ so that the treewidth of $P-U_P$ is bounded by $O(\log\log k)$. Then, for each protrusion, we yet again compute a protrusion decomposition, but now with each ``subprotrusion'' having only $O(\log\log k)$ many vertices with neighbors outside the subprotrusion, and with treewidth $O(\log\log k)$. Then, the second twist is that instead of directly kernelizing the subprotrusion, we think of Machinery II as an existential rather than an algorithmic result---we know that there {\em exists} a $\log^{O(1)}\log k$-sized graph that can replace our subprotrusion. Then, we simply guess this replacement! That is, we generate all possible $\log^{O(1)}\log k$-sized graphs. For each generated graph $P'$, we test whether: 
\begin{enumerate}
\item $P'$ is equivalent to the subprotrusion with the same terminal set. That is, we must preserve the yes/no answers to the instances corresponding to all possible terminal pairings.
\item $P'$ is a minor of $P$.
\end{enumerate}

This test can be done using a standard dynamic programming-based algorithm over tree decomposition, since both the treewidth of the subprotrusion and the size of $P'$ are bounded by $\log^{O(1)}\log k$. Specifically, we have $2^{\log^{O(1)}\log k}<k^{O(1)}$ many guesses for the replacement, and the ``validity'' of each can be checked in time $k^{O(1)}\cdot n_P$ where $n_P$ is the number of vertices in $P$. We believe that this  concept of ``nested'' protrusion decomposition can find applications  in improving  kernelization algorithms, in terms of both  kernel size and running time, for other terminal-based problems.

Additional discussion and related open problems can be found in \Cref{sec:conc}.

\section{Overview of Our Proofs}\label{sec:overview}

In this section, we provide an overview of our proofs. We will only informally define notions  required to explain the overview. These notions are formally defined in later sections. Furthermore, to ease the presentation, we borrow some figures here from later sections, which will appear again in their respective sections. 
\subsection{FPT Algorithm}
Our first contribution is an FPT algorithm for \tcycle in planar graphs that has a running time subexponential in $k$ (which is, $2^{O(\sqrt{k}\log k)}$) and linear in $n$. Since \tcycle can be solved in $2^{O(\tw)} n$ time (e.g., using~\cite{DBLP:journals/jcss/DornFT12}), a natural way to attack our problem is via the technique of {\em treewidth reduction} by removal of  some {\em irrelevant vertices}. In fact, this technique was used by Kawarabayashi~\cite{kawarabayashi2008improved} to reduce the treewidth to be polynomial in $k$. To get the algorithm that we desire, we need to
\begin{enumerate}
    \item reduce the treewidth to $O(\sqrt{k} \log k)$ by removal of some irrelevant vertices, and
    \item  to ensure that this entire removal procedure is performed in $k^{O(1)}\cdot n$ time.
\end{enumerate}

\smallskip
\noindent\textbf{Irrelevant vertices:} We begin with establishing that if there is a vertex $v$ that is ``sufficiently separated'' from each vertex in $T$, then $v$ is irrelevant. 
A set of cycles $\mathcal{C} = (C_0,\ldots,C_r)$ is \textit{concentric} if the cycles in $\mathcal{C}$ are vertex disjoint and cycle $C_{i-1}$ is contained in the interior of $C_i$ (for $i\in [r]$). See Figure~\ref{fig:OC1} for an illustration. If there exists a sequence of $\ell+1$ concentric cycles $C_0,\ldots,C_{\ell}$ such that $v$ in the interior of $C_0$ and $T$ is in the exterior of $C_{\ell}$, then we say that $v$ is $\ell$-\textit{isolated}. Let $D_i$ denote the disk corresponding to the interior of $C_i$. 

\begin{figure}
\centering
\begin{subfigure}{.5\textwidth}
  \centering
  \captionsetup{justification=centering}
    \includegraphics[width=.92\linewidth]{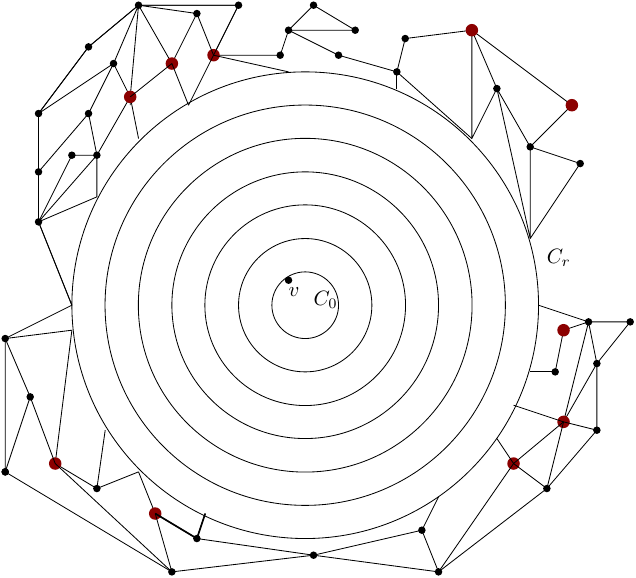}
  \caption{Concentric cycles.}
  \label{fig:OC1}
\end{subfigure}%
\begin{subfigure}{.5\textwidth}
  \centering
  \captionsetup{justification=centering}
  \includegraphics[width=.92\linewidth]{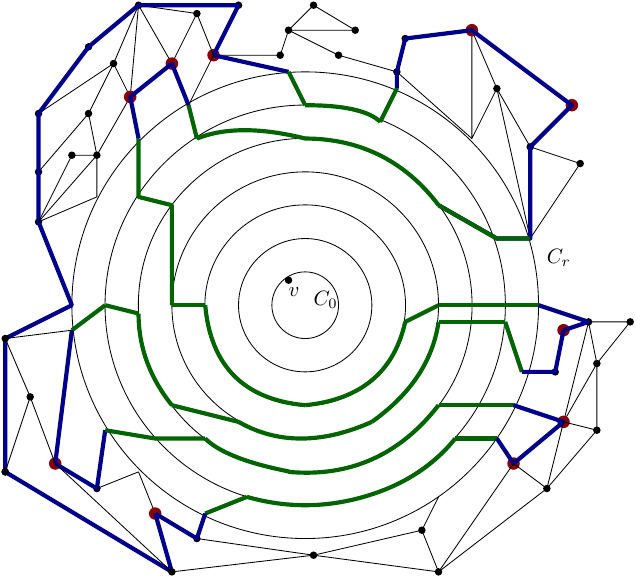}
  \caption{CL-configuration.}
  \label{fig:OC2}
\end{subfigure}
\caption{(a) A set $\mathcal{C}$  of concentric cycles and $v$ is $r$-isolated. (b) A CL-configuration $(\mathcal{C},L)$. The segments of $L$ are represented in  green and the edges of $L$ outside of $D_r$ are shown in blue. Hence, the green edges along with the blue edges combine to give the loop $L$. In both subfigures, the terminal vertices are highlighted in red.} 
\label{fig:test}
\end{figure}

The first component of our algorithm is a proof that establishes that there exists some $g(k)\in O (\log k)$, which we explicitly compute, such that each $g(k)$-isolated vertex is irrelevant, and hence, can be removed from $G$ safely. To this end, we use a notion called {\em CL-configuration}, which is a pair $\mathcal{Q}=(\mathcal{C},L)$, where $\mathcal{C} =(C_0,\ldots,C_r)$ is a set of concentric cycles such that $D_r\cap T =\emptyset$, and $L$ is a $T$-loop. See Figure~\ref{fig:OC2} for an illustration. The connected components of $D_i\cap L$ are said to be $C_i$-\textit{segments}. Intuitively, we say that a $C_j$-segment $S_a$ is in the \textit{zone} of a $C_j$-segment $S_b$ if $S_a$ lies in the region encompassed by $S_b$ and the cycle $C_j$. This can be used to define a partial order $\prec$ on the $C_j$-segments of the set of CL-configuration $\mathcal{Q}$ as follows: we say that $S_a \prec S_b $ if and only if $S_a$ lies in the zone of $S_b$. We need the following notion of segment types (see Figure~\ref{fig:sa} for an illustration.).
\begin{definition}[{\bf Segment types}]
  Let $\mathcal{Q}=(\mathcal{C},L)$ be a depth $r$  CL-configuration of $G$. Moreover, let $S_{1}$ and $S_{2}$ be two $C_j$-segments of ${\cal Q}$ such that $S_i$, for $i\in [2]$, has endpoints $u_i$ and $v_i$. We say that $S_{1}$ and $S_{2}$ have the {\em same $C_j$-type} if:
 \begin{enumerate}
     \item there exist paths $P$ and $P'$  on $C_j$
connecting an endpoint of $S_{1}$ with an endpoint of $S_{2}$ 
such that these paths do not pass through the other endpoints of $S_1$ and $S_2$, 
    \item no  segment of ${\cal Q}$ has both
endpoints on $P$ or on $P'$, and
\item the closed-interior of  the cycle $P\cup S_{1}\cup P'\cup S_{2}$ does not contain
the disk $D_{0}$. 
 \end{enumerate}
A \emph{$C_j$-type of segment} is an equivalence class of segments of ${\cal Q}$.   See Figure~\ref{fig:sb} for an illustration of segment types.
\end{definition}

\begin{figure}[H]
\centering
\begin{subfigure}{.5\textwidth}
  \centering
  \captionsetup{justification=centering}
  \includegraphics[width=0.92\linewidth]{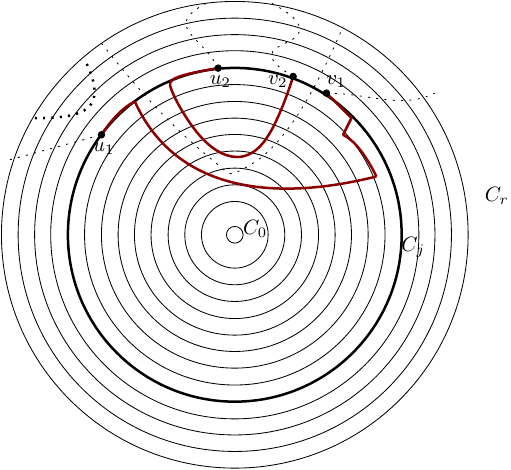}
  \caption{Segments with the same type.}
  \label{fig:sa}
\end{subfigure}%
\begin{subfigure}{.5\textwidth}
  \centering
  \captionsetup{justification=centering}
  \includegraphics[width=0.92\linewidth]{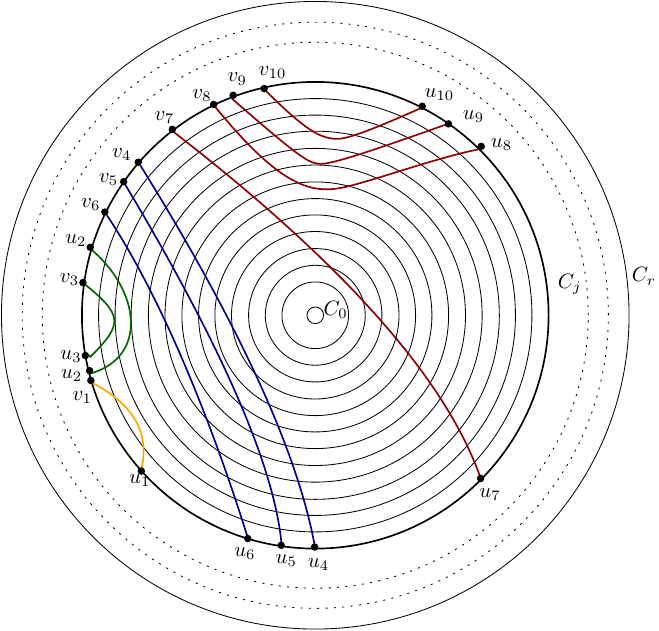}
  \caption{Segment types.}
  \label{fig:sb}
\end{subfigure}
\caption{(a) $S_1$ and $S_2$ are $C_j$-segments with endpoints $u_1,v_1$ and $u_2,v_2$, respectively. Moreover, $P$ and $P'$ are the $(u_1,u_2)$-path and $(v_1,v_2)$-path along $C_j$, respectively. Here, $S_1$ and $S_2$ have the same $C_j$-type. (b) Here, segments $S_1,\ldots,S_{10}$ are $C_j$-segments such that segment $S_i$ has endpoints $u_i$ and $v_i$.  Segments $S_4$ and $S_7$ are not of the same $C_j$-type because of condition~(3) in Definition~\ref{D:ST}, and $S_6$ and $S_2$ are not of the same $C_j$-type because of condition~(2) in Definition~\ref{D:ST}. Further, when $S_6$ and $S_2$ are restricted to $D_{j-2}$, they have the same $C_{j-2}$-type. Finally, $S_{10}\prec S_9 \prec S_8 \prec S_7$, and similarly, $S_6\prec S_5\prec S_4$.}
\label{fig:seg}
\end{figure}

Finally, for any $T$-loop, say $L$, we define a \textit{cost} function that corresponds to the number of edges in $L$ that are not contained in any cycle of $\mathcal{C}$. The main idea behind this is that a $T$-loop has to ``pay'' every time it goes to a ``deeper'' cycle of $\mathcal{C}$. Then, a $T$-loop $L$ is \textit{cheap} if there is no other loop with cost strictly lesser than that of $L$. We would then like to show that any cheap $T$-loop will not go ``very deep'' in $\mathcal{C}$, deeming the deeper cycles of $\mathcal{C}$  irrelevant.

Now, consider a CL-configuration $\mathcal{Q} = (\mathcal{C},L)$ of depth $r$. We are ready to explain our first main lemma, which says that if $L$ is a cheap $T$-loop, then there cannot be more than 3 segments with the same $C_j$-type, for  $j\in \{0,\dots,r\}$. To prove this, we show that if we have more than 3 segments with the same $C_j$-type, then we can reroute $L$ in such a manner that we get a new $T$-loop $L'$ which is cheaper than $L$. See Figure~\ref{fig:exchange} for an illustration of the intuition. This gives us Lemma~\ref{lem:segmentTypes} from Section~\ref{S:intro}.

\begin{figure}
\centering
\begin{subfigure}{.5\textwidth}
  \centering
  \captionsetup{justification=centering}
  \includegraphics[width=.95\linewidth]{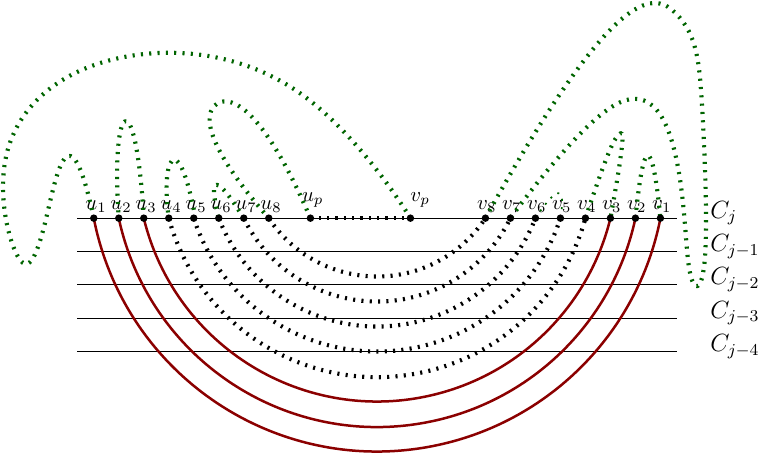}
  \caption{$L$}
  \label{fig:O}
\end{subfigure}%
\begin{subfigure}{.5\textwidth}
  \centering
  \captionsetup{justification=centering}
  \includegraphics[width=.95\linewidth]{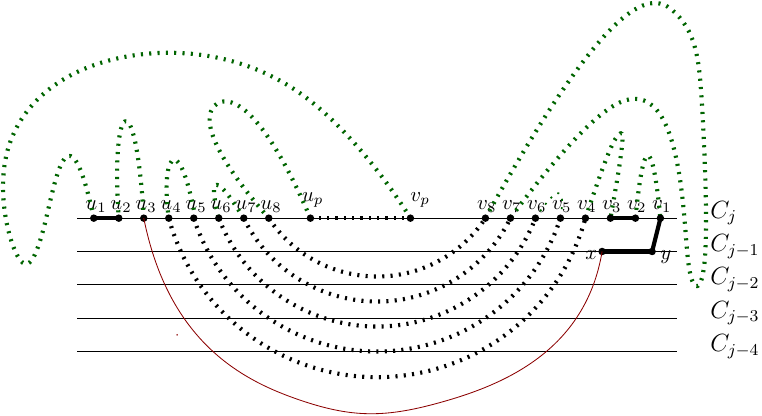}
  \caption{$L'$}
  \label{fig:O2}
\end{subfigure}
\caption{(a) Here, $S_i$ is the $C_j$ segment with endpoints $u_i$ and $v_i$, segments $S_1,S_2,S_3$ are illustrated in red, and segments $S_1,\ldots,S_8$ have the same $C_j$-type. (b) In the rerouting, we remove the segment $S_2$ completely (reducing the cost) and use a part of the segment $S_1$ (not increasing the cost), and additionally use some paths along concentric cycles $C_j$ and $C_{j-1}$ that have cost 0.}
\label{fig:exchange}
\end{figure}

The next component of the proof is to establish, using our above result, that a cheap $T$-loop will not go beyond $O(\log k)$ levels deep, and we compute a $g(k)\in O(\log k)$ such that every vertex that is $g(k)$-isolated, i.e., contained in $D_{r-g(k)-1}$ is irrelevant. To this end, we use a novel idea based on segment forests. In the discussion that follows, $\mathcal{Q}$ is a CL-configutation of depth $r$.

 For every $j \in [r]$, we use the hierarchy induced by $\prec$ to associate a forest structure called a \emph{segment forest}. To begin with, if a $C_j$-segment $S_a$ is in the zone of a $C_j$ segment $S_b$, then we say that $S_b$ is an \emph{ancestor} of $S_a$ and ($S_a$ is a \emph{descendant} of $S_b$). If $S_b$ is an ancestor of $S_a$ such that there is no $C_j$-segment $S_c$ that satisfies $S_a \prec S_c \prec S_b$, then $S_b$ is a parent of $S_a$ (and $S_a$ is a child of $S_b$). Modeling all parent child relationships with edges gives rise to a forest that we call the \emph{$j$-th segment forest} of $\mathcal{Q}$. See \Cref{fig:segmentTreeMain} for an example. 
A $C_j$-segment $S_p$ is an \emph{$m$-th generation} ancestor of $S_q$ if $S_p$ is an ancestor of $S_q$ and the length of the path from $S_p$ to $S_q$ in the $j$-th segment forest is $m$.
The \emph{height} of a segment forest is the length of the longest path from a leaf segment to a segment that has no ancestors.
The \emph{width $m$ $j$-subtree} of a $C_j$-segment $S$ is the tree induced by all $C_j$-segments that are descendants of $S$ at a distance of at most $m$ in the $j$-th segment forest. 

For a segment $S$, its eccentricity $e$ is defined as $e = \min_i V(C_i \cap S)\neq \emptyset$. Let $m$ be any positive integer. We will make in \Cref{lem:eitheror} the observation that in the width $m$ $e$-subtree $T$ rooted at the $k$-th generation ancestor of $S$, either all ancestors of $S$ in $T$ have the same type or there is a segment in $T$  that is not an ancestor of $S$. This observation coupled with \Cref{lem:segmentTypes} will tell us that the width three $e$-subtree rooted at the $3$-rd generation ancestor of a segment $S$ of eccentricity $e$ has a segment that is not an ancestor of $S$. The fact that the height of the $r$-segment forest of $\mathcal{Q}$ is at most logarithmic follows then from an inductive argument. 
In~\cite[Lemma 5]{JCTB}, it is shown that a cheap solution for  \dispaths has  $O(k)$ different types of segments. Using this result, it is easy to check that the same holds true for \tcycle.
Leveraging this fact, we deduce that the height of the $r$-segment forest of $\mathcal{Q}$ is $O(\log k)$. This gives us Lemma~\ref{lem:segmentNumber} from Section~\ref{S:intro}.

\begin{figure}
\centering
\begin{subfigure}{.5\textwidth}
  \centering
  \captionsetup{justification=centering}
  \includegraphics[width=0.9\linewidth]{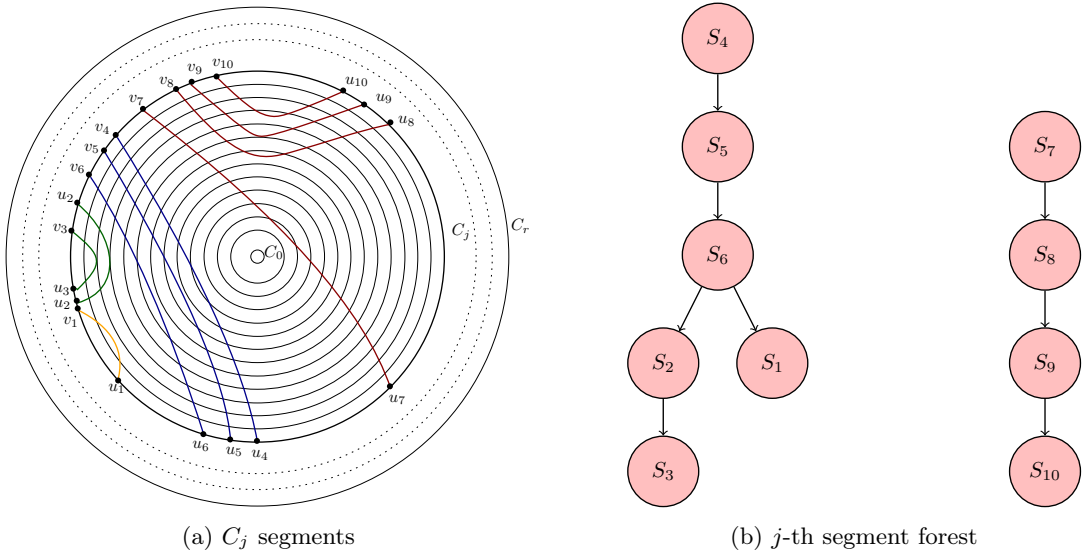}
  \caption{$C_j$ segments}
  \label{fig:sameTypeofSegs}
\end{subfigure}%
\begin{subfigure}{.5\textwidth}
  \centering
  \captionsetup{justification=centering}
\usetikzlibrary{shapes.geometric}
 \resizebox {0.8\textwidth} {!} {
\begin{tikzpicture}
[every node/.style={inner sep=0pt}]
\node (1) [circle, minimum size=32.5pt, fill=pink, line width=0.625pt, draw=black] at (87.5pt, -150.0pt) {\textcolor{black}{$S_4$}};
\node (2) [circle, minimum size=32.5pt, fill=pink, line width=0.625pt, draw=black] at (87.5pt, -200.0pt) {\textcolor{black}{$S_5$}};
\node (3) [circle, minimum size=32.5pt, fill=pink, line width=0.625pt, draw=black] at (87.5pt, -250.0pt) {\textcolor{black}{$S_6$}};
\node (4) [circle, minimum size=32.5pt, fill=pink, line width=0.625pt, draw=black] at (62.5pt, -300.0pt) {\textcolor{black}{$S_2$}};
\node (5) [circle, minimum size=32.5pt, fill=pink, line width=0.625pt, draw=black] at (112.5pt, -300.0pt) {\textcolor{black}{$S_1$}};
\node (7) [circle, minimum size=32.5pt, fill=pink, line width=0.625pt, draw=black] at (237.5pt, -250.0pt) {\textcolor{black}{$S_8$}};
\node (8) [circle, minimum size=32.5pt, fill=pink, line width=0.625pt, draw=black] at (237.5pt, -300.0pt) {\textcolor{black}{$S_9$}};
\node (11) [circle, minimum size=32.5pt, fill=pink, line width=0.625pt, draw=black] at (62.5pt, -350.0pt) {\textcolor{black}{$S_3$}};
\node (6) [circle, minimum size=32.5pt, fill=pink, line width=0.625pt, draw=black] at (237.5pt, -200.0pt) {\textcolor{black}{$S_7$}};
\node (9) [circle, minimum size=32.5pt, fill=pink, line width=0.625pt, draw=black] at (237.5pt, -350.0pt) {\textcolor{black}{$S_{10}$}};
\draw [line width=0.625, ->, color=black] (1) to  (2);
\draw [line width=0.625, ->, color=black] (2) to  (3);
\draw [line width=0.625, ->, color=black] (3) to  (4);
\draw [line width=0.625, ->, color=black] (3) to  (5);
\draw [line width=0.625, ->, color=black] (6) to  (7);
\draw [line width=0.625, ->, color=black] (7) to  (8);
\draw [line width=0.625, ->, color=black] (8) to  (9);
\draw [line width=0.625, ->, color=black] (4) to  (11);
\end{tikzpicture}
}
  \caption{$j$-th segment forest}
  \label{fig:segmentTree}
\end{subfigure}

\caption{Subfigure (a) depicts $C_j$ segments of a CL configuration $\mathcal{Q}$ of depth $r$. Here, segments $S_1,\ldots,S_{10}$ are $C_j$-segments such that segment $S_i$ has endpoints $u_i$ and $v_i$.  All $C_j$-segments of the same type are depicted in a common color. The colors red, blue green and yellow are used to distinguish segment types. Subfigure (b) shows the $j$-th segment forest of $\mathcal{Q}$.}
\label{fig:segmentTreeMain}
\end{figure}

\medskip
\noindent\textbf{Fast Removal of Irrelevant Vertices.} Since we have established that all $g(k)$-isolated vertices are irrelevant, our next task is to remove $g(k)$-isolated vertices in $k^{O(1)}\cdot n$ time.  Note that there is a rather straightforward way to remove such vertices by, as long as possible, computing a $g(k)\times g(k)$ grid minor that does not contain any terminal, and removing its ``middle-most'' vertex. Here, each iteration requires $k^{O(1)}\cdot n$ time, and hence, in total, this algorithm requires $k^{O(1)}\cdot n^2$ time. 

To remove the irrelevant vertices in $k^{O(1)}\cdot n$ time, we use Reed's~\cite{reedLinear} approach of ``cutting'' the plane into more ``manageable'' planes, and then working on these pieces simultaneously. To this end, we consider the ``punctured'' planes, where, intuitively, a $c$-\textit{punctured} plane means a punctured plane with $c$ connected components in its boundary. Note that it is desirable that all terminals lie on the boundary of a punctured plane. This can be achieved, to begin with, by considering the graph to be embedded on a $k$-punctured plane, where each terminal is a trivial hole. Then, the idea of Reed is to recursively cut the plane along some curves of ``bounded'' size into ``nice'' subgraphs, where a subgraph $H$ is \textit{nice} if every vertex of $H$ that is $g(k)$-isolated from the boundary of $H$ is also $g(k)$-isolated in $H$. More specifically, we use the following result of Reed.
\begin{proposition}[Lemma~2 in~\cite{reedLinear}]
    Let $H$ be a nice subgraph of $G$ embedded on a $c$-punctured plane $\boxdot$ such that $c>2$. Then, we can compute both a non-crossing proper closed curve $J$ contained in $
    \boxdot$ and a (possibly empty) set $X$ of vertices that are $g(k)$-isolated from the boundary of $\boxdot$ in $O(|V(H)|)$ time such that 
    \begin{enumerate}
        \item the number of vertices of $H-X$ intersected by $J$ is at most $6g(k)+6$, and
        \item $\boxdot\setminus J$ contains at most 3 components each of which has less than $c$ holes.
    \end{enumerate}
\end{proposition}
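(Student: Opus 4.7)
The plan is to combine a BFS-based pruning step (to absorb ``deep'' vertices into $X$) with a topological splicing step (to build the closed curve $J$).

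First, I would run a multi-source BFS in $H$ whose sources are the vertices incident to $\partial\boxdot$; this costs $O(|V(H)|)$ time. Writing $d(v)$ for the resulting BFS distance from $v$ to $\partial\boxdot$, I set $X:=\{v\in V(H):d(v)>g(k)\}$. For each $v\in X$, the level-set boundaries $\partial(\{u:d(u)\le i\})$ for $i=0,1,\ldots,g(k)$ yield $g(k)+1$ closed curves in $\boxdot$ that separate $v$ from $\partial\boxdot$; after a small perturbation these curves can be realised as vertex-disjoint cycles of $H$ encircling $v$, so $v$ is $g(k)$-isolated from $\partial\boxdot$. Since $H$ is nice, $v$ is in fact $g(k)$-isolated in $H$, as required.

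Next, after pruning $X$, every remaining vertex lies within BFS distance $g(k)$ of $\partial\boxdot$. Consequently, for any two holes $B_i,B_j$ of $\partial\boxdot$, there is a $B_i$-to-$B_j$ path in $H-X$ of length at most $2g(k)+1$. Using $c>2$, I would fix three distinct holes $B_1,B_2,B_3$ and, with two further linear-time BFSs, extract shortest paths $P_{12}$ from $B_1$ to $B_2$ and $P_{13}$ from $B_1$ to $B_3$ in $H-X$.

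Finally, I would form $J$ by slightly thickening $P_{12}$ and $P_{13}$ and closing them off with arcs running along the topological boundary $\partial\boxdot$ itself; being drawn on $\partial\boxdot$, these arcs contribute no vertex of $H-X$. Choosing the cyclic arrangement at $B_1$ appropriately, the result is a non-crossing proper closed curve $J$ whose complement in $\boxdot$ splits into at most three components, each containing a non-empty strict subset of the $c$ holes and hence fewer than $c$ of them. Each of $P_{12},P_{13}$ contributes at most $2g(k)+1$ vertices of $H-X$, and a bounded number of extra vertices arise from the gluing, giving the bound $6g(k)+6$.

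The main obstacle I anticipate is the splicing step. One has to perform a careful case analysis on the cyclic orders in which $P_{12}$ and $P_{13}$ enter $B_1$ (and exit at $B_2,B_3$) to verify that (i) the thickened paths can be glued into a single non-crossing closed curve, (ii) the resulting partition of $\boxdot$ has at most three components, and (iii) no component ends up inheriting all $c$ holes. A secondary technical point is to certify that the BFS level cuts from Step~1 can be faithfully realised as true concentric cycles witnessing $g(k)$-isolation, via a perturbation argument that replaces each level boundary by an actual cycle of $H$ bordering the corresponding BFS region.
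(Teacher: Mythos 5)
This proposition is quoted verbatim from Reed~\cite{reedLinear} and is not proved in the paper at all; it is used as a black box (it reappears as \Cref{P:reed} in Section~\ref{sec:reeddeco}). Your attempt to rederive it contains a genuine gap.

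The flaw is the inference ``since every remaining vertex lies within distance $g(k)$ of $\partial\boxdot$, any two holes $B_i,B_j$ are joined by a path in $H-X$ of length at most $2g(k)+1$.'' This does not follow: $\partial\boxdot$ is the \emph{union} of all $c$ hole boundaries, so a surviving vertex is only guaranteed to be near \emph{some} hole, not near either of the two you picked. A concrete counterexample already exists for $c=3$: take a long thin rectangle with three small holes cut along its length at the left, centre, and right. Every vertex is radially close to one of the three holes, so your pruning step removes nothing, yet the shortest path from the left hole to the right hole can be made arbitrarily long. Hence the curves $P_{12},P_{13}$ you splice into $J$ may cross far more than $6g(k)+6$ vertices of $H-X$, and condition~(1) fails. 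Reed's construction does not globally prune and then rely on the residual graph having small diameter; it selects a specific radial geodesic (or Y-system of geodesics), lets $X$ consist precisely of the \emph{deep middle} of that chosen curve, and argues the two ends of the curve are short because they are near their own endpoints on the boundary --- the isolation argument is local to the curve, not global. A secondary issue: $\ell$-isolation is defined through concentric cycles, i.e.\ via the radial distance $\mathsf{d^R}$ (cf.\ \Cref{P:oneFace}), not ordinary BFS distance in $H$; your ``perturbation'' of BFS level sets into concentric cycles is doing real work and is not routine --- running the search in the radial graph from the start is the correct move.
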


A careful analysis shows that by a recursive application of this \textit{cut reduction} procedure, we finally get at most $4k+8$ components, each of which is embedded on either a $2$-punctured plane or a $1$-punctured plane and the total number of vertices on the boundaries of these punctured planes is ${O(k g(k))}$, i.e., $O(k \log k)$. Since we apply this cut reduction  $O(k)$ times, the cutting-plane procedure takes $O(k\cdot n)$ time in total.

The novel component of this part of our algorithm is to decide for each vertex $v$ on the boundary of these punctured planes if $v$ is $g(k)$-isolated in $G$, and remove if it is. We can perform this check in $O(n)$ time for one vertex, and since there are  $O(k \log k)$ vertices altogether on the boundaries of punctured planes, we can remove all $g(k)$-isolated vertices from the boundaries of punctured planes in $O( k\log k\cdot n)$ time. After this step, no vertex on the boundary of a punctured plane is $g(k)$-isolated.  Finally, we use the irrelevant vertex removal algorithm of Cho et al.~\cite{cho2023parameterized} which, given a $c$-punctured plane $\boxdot$ where $c\leq 2$, removes all vertices from $\boxdot$ that are  $4g(k)$-isolated from the boundary of $\boxdot$. Their algorithm in total requires $O(n)$ time for each $c$-punctured plane ($c\in [2]$), and since there are $O(k)$ such planes, this step  requires $O
(k \cdot n)$ time in total.

At this point, we prove that no vertex in $G$ is $5g(k)$-isolated. In particular,we show that if a vertex $v$ in $\boxdot$  is $5g(k)$-isolated, then there exists a vertex on the boundary of $\boxdot$ that is $g(k)$-isolated. An  illustration of the proof is provided in the Figure~\ref{fig:boundaryToIn}.
\begin{figure}
    \centering
    \includegraphics[scale=0.9]{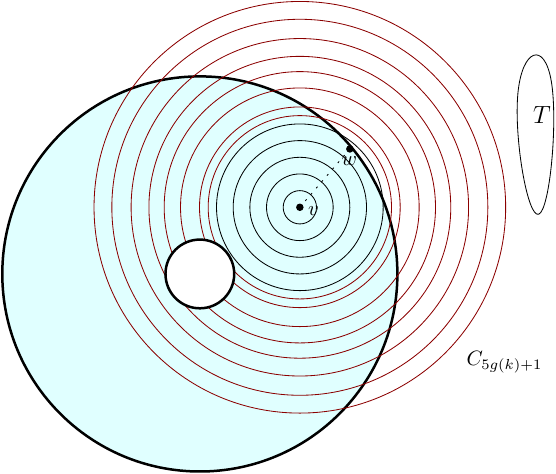}
    \caption{A $2$-punctured plane $\boxdot$ is depicted (in blue) by its two holes highlighted in bold.  $v$ is a vertex in $\boxdot$ which is $5g(k)+1$-isolated in $G$, and hence there is a sequence of $C_0,\ldots,C_{5g(k)+1}$ concentric cycles separating $v$ and $T$. Since $v$ is not $4g(k)$-isolated from boundary of $\boxdot$, there is some boundary vertex $w$ of $\boxdot$ such that $w\in D_{4g(k)}$. In this case, a sequence of concentric cycles $C_{4g(k)+1},\ldots,C_{5g(k)+1}$ separate $w$ from $T$, implying that $w$ is $g(k)$-isolated.}
    \label{fig:boundaryToIn}
\end{figure}

Finally, using a result of Adler et al.~\cite{adler2011tight} (Proposition~\ref{P:jctb}), it follows that if there is no vertex in a planar graph $G$ that is $5g(k)$-isolated from a set of vertices $T$ of size $k$, then the treewidth of $G$ is  $O(\sqrt{k} g(k))$, i.e.,  $O(\sqrt{k} \log k$. This gives us Theorem~\ref{thm:twreduction} from Section~\ref{S:intro}. Indeed, the treewidth of the reduced graph, say $G'$, obtained after removing irrelevant vertices from $G$, is $O(\sqrt{k}\log k)$. Moreover, let  $U$ be the set of all boundary vertices of all $O(k)$ many punctured planes ($|U| \in O(k \log k)$), then observe that in $G'-U$, there is no sequence of $4g(k)+1$ concentric cycles, and hence the treewidth of $G'-U$ is upper bounded by $4g(k)$, i.e., $O(\log k)$.  Now, a dynamic-programming based algorithm can be used to solve our problem in $2^{O(\sqrt{k})\log k}\cdot n$ time. We refer to the procedure described in the above steps as Reed decomposition algorithm.

\subsection{Kernelization Algorithm}

The main tool used in our kernelization algorithm is the so-called \emph{protrusion decomposition}. Please refer to \cite[Chapter 15]{kernelbook} for a detailed discussion on this topic. Below, we provide a brief overview  starting with a few definitions. 

Let $G$ be a graph.
Given a vertex set $U \subseteq V(G)$, its \emph{boundary} $\partial U$, is the set of vertices in $U$ that have at least one neighbor outside of $U$.   A vertex set $W \subseteq V(G)$ is called a \emph{treewidth-$\eta$-modulator} if $\tw(G-W) \leq \eta$. A vertex set $Z\subseteq V(G)$ is a \emph{$q$-protrusion} if $\tw(Z) \leq q$ and $|\partial Z| \leq q$. Finally, given a vertex set $S \subset V(G)$, we use $S^+ = N_G[S]$ to denote the set of vertices that are in the closed neighborhood of $S$ in $G$.

A partition $X_0,X_1,\dots,X_\ell$  of vertex set $V(G)$ of a graph $G$ is called an \emph{$(\alpha,\beta,\gamma)$-protrusion decomposition} of $G$ if $\alpha$, $\beta$ and $\gamma$ are integers such that:
\begin{itemize}
    \item $|X_0| \leq \alpha$,
    \item $\ell \leq \beta$,
    \item $X_i $ for every $i \in [\ell]$ is a $\gamma$-protrusion of $G$,
    \item for $i \in [\ell]$, $N_G(X_i) \subseteq X_0$ (here, $N_G(X_i)$ denotes the open neighborhood of $X_i$ in $G$).
\end{itemize}

For $i \in [\ell]$, the sets $X_i$ are called \emph{protrusions}.
For every $i \in [\ell]$, we set $B_i$ = $X_i^+ \setminus X_i$. 

A central result about protrusions is that if a planar graph $G$ has a treewidth-$\eta$ modulator $S$,  then $G$ has a $(O(|\eta| \cdot |S|),O(|\eta| \cdot |S|),O(|\eta|))$-protrusion decomposition with $X_0 \supseteq S$~\cite[Lemma 15.14]{kernelbook}. Moreover, such a protrusion can be computed efficiently using  Algorithm~\ref{alg:protcomp}. The key motivation behind our use of protrusion decompositions for kernelization is \Cref{thm:twreduction}, which readily gives us a graph $\tilde{G}$ and a vertex set $U \subset V(\tilde{G})$ of size $O(k \log k)$ that is a treewidth-$O(\log k)$-modulator of $\tilde{G}$, where $(G,T)$ and  $(\tilde{G},T)$  are equivalent  as \tcycle instances. 

Apart from protrusions, the other main tool we use is a blackbox from \cite{DBLP:conf/focs/0001Z23}. In \cite{DBLP:conf/focs/0001Z23}, the authors define the notion of $B$-linkage equivalence. Specifically, two graphs $G_1,G_2$ with a common vertex set $B$ are said to be $B$-linkage equivalent w.r.t. \dispaths if for every set of pairs of terminals $M \in B^2$, the disjoint path instances $(G_1,M)$ and $(G_2,M)$ are equivalent. The key result we use from \cite{DBLP:conf/focs/0001Z23} is that  a planar graph $G$ can be replaced by a smaller planar graph $H$ that is $B$-linkage equivalent w.r.t. \dispaths, where the size of $H$ is polynomial in the treewidth of $G$ and $|B|$. This leads us to our first kernelization algorithm outlined in \Cref{fig:kerfigone} that serves as a forerunner to the linear time algorithm summarized in \Cref{fig:kerfigtwo}.

\begin{figure}[H]
\centering
\includegraphics[scale=0.7]{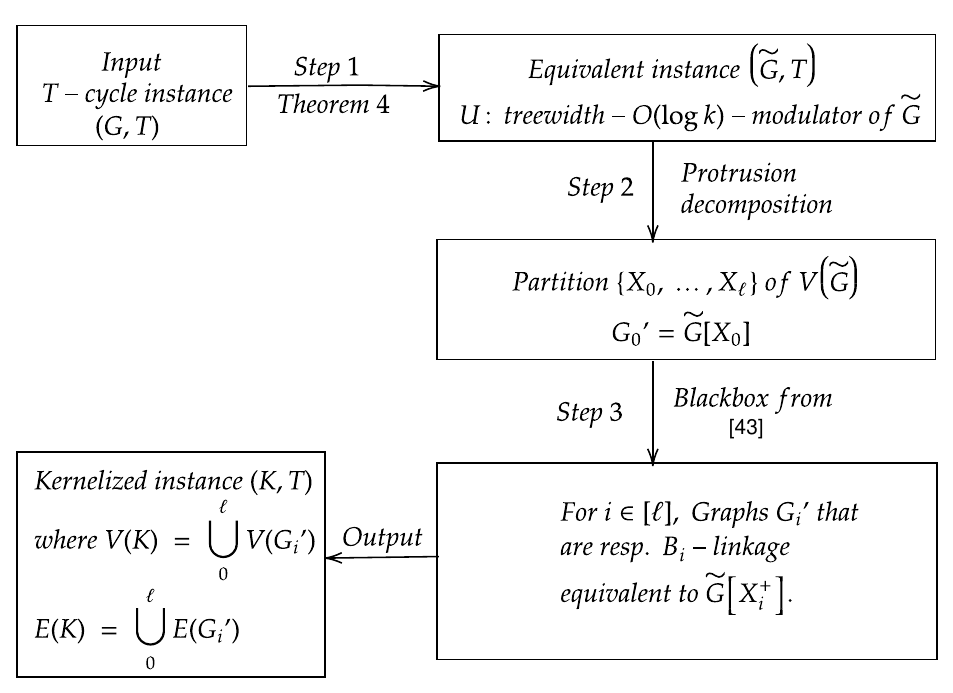}
\caption{Here, we depict the Steps in Kernelization Algorithm-I. The arrows are labelled with the Step$\#$ and the tool used, and every box contains the output data associated to that box.} \label{fig:kerfigone}
\end{figure}

One of the key insights behind the linear time kernelization algorithm is to treat the black box result from   \cite{DBLP:conf/focs/0001Z23} as an existential result rather than an algorithmic one. Thus, instead of using the algorithm from \cite{DBLP:conf/focs/0001Z23} for replacing graphs with smaller $B$-linkage equivalent counterparts, one could guess the replacement graph instead. One obstacle in doing this is that $|B_i|$ where $B_i = X_i^+ \setminus X_i$ and the treewidth of $\tilde{G}[X_i^+]$ for $i\in [\ell]$ are both $O(\log k)$ making the guesses expensive. A potential remedy, which is our other key insight, is to apply a nested protrusion decomposition. However, a roadblock to  this remedy is that we do not have a small treewidth modulator for graphs $\tilde{G}[X_i^+]$ right away. This is fixed by adapting the Reed decomposition algorithm for \textsc{$T$-Cycle}  to  $B_i$-linkage equivalence w.r.t. \tcycle. We stress here that the notion of linkage equivalence w.r.t. \dispaths differs from the notion of linkage equivalence w.r.t. \tcycle. Below, we give an  intuitive definition for linkage equivalence w.r.t. \tcycle. 

 Given a \tcycle instance $(G,T)$, let $G_1$ and $G_1'$ be subgraphs of $G$ with a common set of vertices $B \subset V(G_1) \cap V(G_1')$. Then, $G_1$ is \emph{$B$-linkage equivalent  w.r.t. \tcycle} to $G_1'$ if 
\begin{itemize}
    \item a $T$-loop in $G$ can be written as a union of collections of paths $\mathcal{P}_1\cup \mathcal{P}_2$ with endpoints of paths in  $\mathcal{P}_1\cup \mathcal{P}_2$ lying in $B$ such that paths in $\mathcal{P}_1$ use edges that belong to $G_1$ and paths in $\mathcal{P}_2$ use edges outside of $G_1$ if and only if in the graph $H$ obtained by replacing $G_1$ with $G_1'$ in $G$ there is a $T$-loop that can be written as a  union of collections of paths $\mathcal{P}_1'\cup \mathcal{P}_2$ where paths in $\mathcal{P}_1'$ use edges that belong to $G_1'$.
\end{itemize}

    A vertex $v \in G_1$ is \emph{$B$-linkage irrelevant} if $G_1 - v$ is $B$-linkage equivalent w.r.t. \tcycle to $G_1$. It is easy to check that vertices that are $B$-linkage irrelevant in $G_1$ are also irrelevant for the \tcycle problem in $G$.

Returning to our discussion, an adaptation of the Reed decomposition algorithm allows us to delete $B_i$-linkage irrelevant vertices from $\tilde{G}[X_i^+]$ for every $i\in[\ell]$. This has two important outcomes. First, for every $i\in[\ell]$, we get smaller graphs $G_i'$ that are $B_i$-linkage equivalent to  $\tilde{G}[X_i^+]$. Second, for every $i\in [\ell]$, the Reed decomposition algorithm also computes sets $B_i' \supseteq B_i$ such that the sets $B_i'$ are treewidth-$O(\log \log k)$-modulators of $G_i'$. In fact, the sets $B_i'$ are boundary vertices in the Reed decomposition algorithm.   

\begin{figure}
\centering
\includegraphics[scale=0.7]{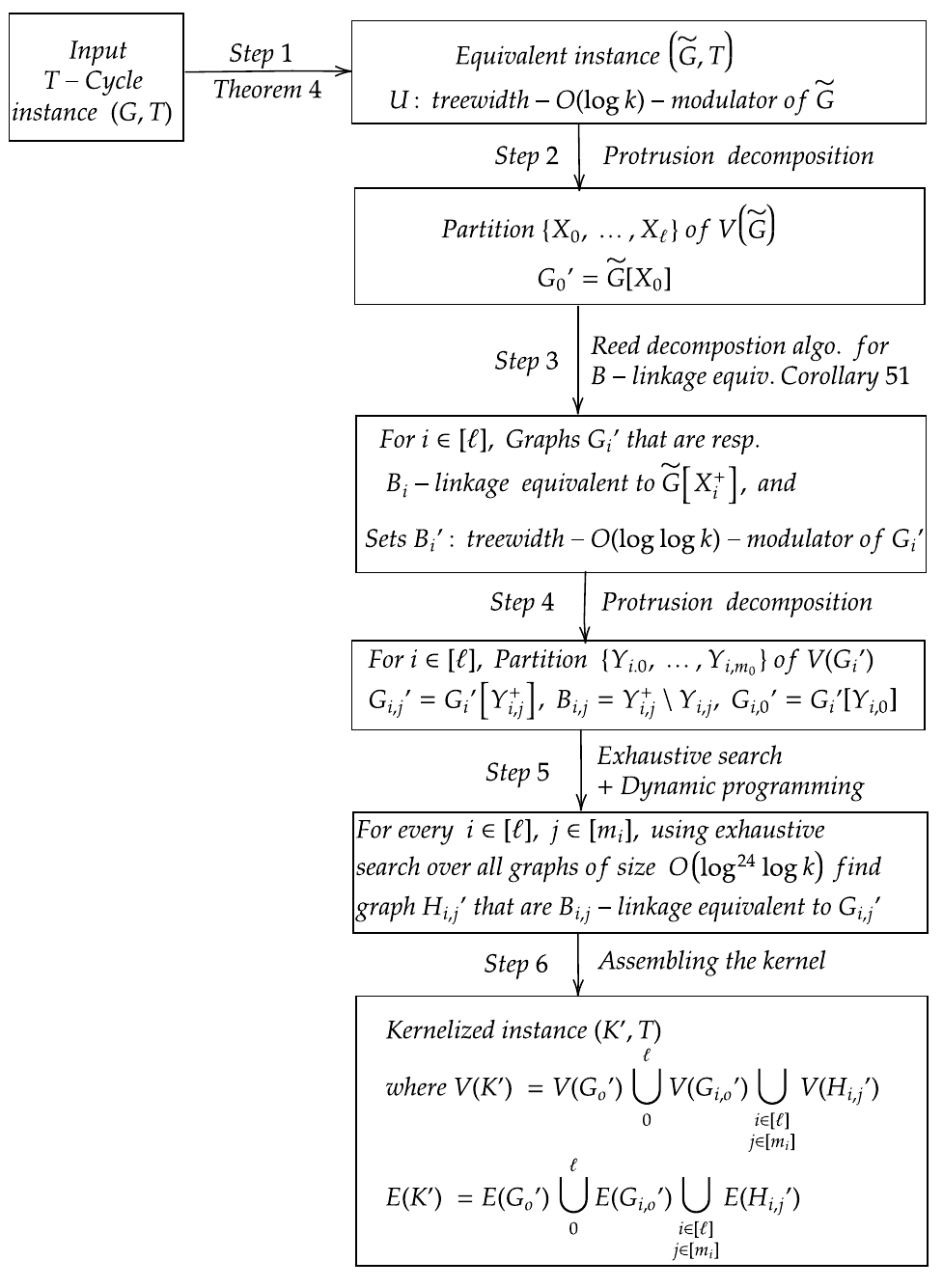}
\caption{Here, we depict the Steps in Kernelization Algorithm-II. The arrows are labelled with the Step$\#$ and the tool used, and every box contains the output data associated to that box.} \label{fig:kerfigtwo}

\end{figure}

With that at hand, we can apply protrusion decomposition to graphs $G_i'$ giving us ``subprotrusions". Specifically,  computing a protrusion decomposition of $G_i'$ for every $i\in[\ell]$, gives a partition $\{Y_{i,0},\dots,Y_{i,m_i}\}$ of $V(G_i')$. For $i \in [\ell]$, $j \in [m_i]$, let $G_{ij}' = G_i'[Y_{i,j}^+] $, where $Y_{i,j}^+ = N_{G_{i}'}[Y_{i,j}]$, and $B_{i,j} = Y_{i,j}^+ \setminus Y_{i,j}$. Also, let  $G_{i,0}' = G_i'[Y_{i,0}]$ for $i\in [\ell]$.  

To now apply \cite{DBLP:conf/focs/0001Z23} as an existential result, we need to guess graphs of size $\log^{O(1)}\log k$. Checking if a guessed graph is $B_{i,j}$ linkage equivalent w.r.t. \dispaths to a subprotrusion $G_{i,j}'$ and replacing it while preserving planarity involves the use of two dynamic programming subroutines: \textsc{Minor Containment} and \dispaths, the details of which can be found in \Cref{sec:ltkernel}. We do these replacements for all $i \in [\ell]$, $j \in [m_i]$. The replacement graphs for $G_{i,j}'$ are denoted by $H_{i,j}'$.

The kernel for the\tcycle problem is given by assembling the graph $K'$ where 
    \[ V(K') = V(G_0') \bigcup_{i\in [\ell]} V(G_{i,0}') \bigcup_{\substack{i\in [\ell] \\ j\in [m_i]}} V(H_{i,j}') \text{ and }  E(K') = E(G_0') \bigcup_{i\in [\ell]} E(G_{i,0}') \bigcup_{\substack{i\in [\ell] \\ j\in [m_i]}} E(H_{i,j}').\]

\Cref{fig:kerfigtwo} outlines the key steps in the linear time kernelization algorithm. We show in \Cref{sec:ltkernel} that $|V(K')|\leq k \log^4 k$, and $K'$ can be computed in $k^{O(1)}\cdot n$ time.

\section{Preliminaries}\label{S:prelim}

For integers $i\leq j$, let $[i,j]=\{i, i+1,\ldots, j-1, j\}$. Moreover, for $\ell\in \mathbb{N}$, let $[\ell] = [1,\ell]$. 
Given a set $A$, the Cartesian product $A\times A$ is denoted by $A^2$.

\subsection{Graph Theory}
For a graph $G$, let $V(G)$ and $E(G)$ denote the vertex set and edge set of $G$, respectively. When $G$ is clear from the context, we denote $|V(G)|$ by $n$. For a vertex set $X \subseteq V(G)$, we use $N_G(X)$ to denote the open neighborhood of $X$ in $G$ and $N_G[X]$ to denote the closed neighborhood of $X$ in $G$. For a graph $G$ and a vertex set $U\subseteq V(G)$, we use $G[U]$ to denote the graph induced by $U$. Moreover, let $G-U = G[V(G)\setminus U]$, and for $x\in V(G)$, let $G-x = G-\{x\}$. 

For two distinct vertices $u,v\in V(G)$, a $(u,v)$-\textit{path} is a path with endpoints $u$ and $v$. Given $S,X,Y\subseteq V(G)$, we say that $X$ \textit{separates} $X$ and $Y$ if $G-S$ has no path with an endpoint in $X$ and an endpoint in $Y$. A \textit{cycle} is a connected subgraph of $G$ where each vertex has degree 2. A cycle $C$ \textit{separates} $X$ and $Y$ if $V(C)$ separates $X$ and $Y$. 
Given a subset $T\subseteq V(G)$, a $T$-\textit{loop} is a simple cycle $C$ such that $V(T)\subseteq V(C)$. We are interested to study the following problem.



\Pb{\tcycle}{A graph $G$, and a set $T \subseteq V(G)$ of vertices.}{Question}{Does $G$ have a cycle $C$ such that $T\subseteq V(C)$?}
A vertex $v$ in $G$ is \textit{irrelevant} if $G-v$ has a $T$-loop whenever $G$ has a $T$-loop. When we say that we delete a path $P$ from a graph $G$, we delete all internal vertices of $P$ from $G$.

A graph $H$ is a \textit{minor} of $G$ if we can obtain $H$ from $G$ by using the following three operations: vertex deletion, edge deletion, and edge contractions. If $H$ is a minor of $G$, we also say that $G$ has a $H$ minor. A $p\times q$-grid is the graph obtained by the Cartesian product of two paths on $p$ and $q$ vertices, respectively. Treewidth, formally defined below, is a measure of how ``tree-like'' a graphs is.

\begin{definition}[{\bf Treewidth}]\label{D:tw}
 A \emph{tree decomposition} of a graph $G$ is a pair $(T,\chi)$, where $T$ is a tree and $\chi$ is a mapping from $V(T)$ (called \emph{bags}) to subsets of $V(G)$, i.e., $\chi: V(T) \rightarrow 2^{V(G)}$, satisfying the following properties. 
\begin{enumerate}
    \item For every $uv \in E(G)$, there exists $t\in V(T)$, such that $\{u,v\} \subseteq \chi(t)$.
    \item For every $v \in V(G)$, the subgraph of $T$ induced by the set $T_v = \{ t \in V(T)~|~v \in \chi(t)\}$ is a non-empty tree.
\end{enumerate}

\noindent The \emph{width} of a tree decomposition $(T,\chi)$ is $\max_{t\in V(T)} |\chi(t)| - 1$. The \emph{treewidth} of $G$ is the minimum possible width of a tree decomposition of $G$. The mapping $\chi$ is extended from vertices of $T$ to subgraphs of $T$. In particular, for a subgraph $U$  of $T$,  $\chi(U) = \bigcup_{v\in V(U)} \chi(v)$.
\end{definition}

\medskip
\noindent\textbf{Planarity.} 
A \textit{planar graph} is a graph that can be embedded in the Euclidean plane. Whenever we consider a planar graph $G$, we consider an embedding of $G$ in the plane as well, and, to simplify notation, we do not distinguish between a vertex of $G$ and the point of the plane used in the drawing to represent the vertex or between an edge and the arc representing it. The \textit{faces}  of a planar graph are the regions bounded by the edges, including the outer infinitely large region. For a planar graph $G$, the \emph{radial distance} between two vertices $u$ and $v$, denoted $\mathsf{d^R}(u,v)$, is one less than the minimum length of a sequence of vertices that starts at $u$ and ends at $v$, such that every two consecutive vertices in the sequence lie on a common face.
As observed in~\cite{JCTB}, combining results from~\cite{gu2012improved} and~\cite{robertson1991graph}, we have the following proposition.

\begin{proposition}[\cite{JCTB}]\label{P:tw}
 Any planar graph with treewidth at least $4.5k$ has a $(k \times k)$-grid minor.   
\end{proposition}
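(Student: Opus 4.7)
The plan is to combine two classical results: one relating treewidth and branchwidth for general graphs (Robertson--Seymour), and one giving a sharp grid-minor excluded bound for planar graphs in terms of branchwidth (Gu--Tamaki). I expect the entire proof to be a one-line chain of inequalities, with no combinatorial insight needed beyond invoking these two black boxes correctly and keeping the constants straight.

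First, I would recall the standard two-sided relation between branchwidth $\mathrm{bw}(G)$ and treewidth $\mathrm{tw}(G)$ of any graph $G$, namely
\begin{equation*}
\mathrm{bw}(G) \ \leq\ \mathrm{tw}(G)+1 \ \leq\ \tfrac{3}{2}\,\mathrm{bw}(G),
\end{equation*}
due to Robertson and Seymour~\cite{robertson1991graph}. Assuming the hypothesis $\mathrm{tw}(G) \geq 4.5k$, the right inequality immediately yields
\begin{equation*}
\mathrm{bw}(G) \ \geq\ \tfrac{2}{3}\bigl(\mathrm{tw}(G)+1\bigr) \ >\ \tfrac{2}{3} \cdot 4.5k \ =\ 3k.
\end{equation*}

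Second, I would invoke the planar branchwidth--grid bound of Gu and Tamaki~\cite{gu2012improved}, which states that every planar graph $G$ contains a $\bigl(\lfloor \mathrm{bw}(G)/3 \rfloor \times \lfloor \mathrm{bw}(G)/3 \rfloor\bigr)$-grid as a minor. Plugging in $\mathrm{bw}(G) \geq 3k$ produces a $(k \times k)$-grid minor, which is the desired conclusion.

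The only potentially delicate point is matching constants: I want to be sure the proposition's constant $4.5 = 9/2$ is exactly what the composition of the two inequalities gives, and that the argument does not secretly require a strict inequality. The chain above does exactly this: the bound $\mathrm{tw}(G) \geq 4.5k$ gives $\mathrm{bw}(G) \geq 3k$ (with room to spare from the ``$+1$''), and the Gu--Tamaki bound then gives a grid minor of side $\lfloor 3k/3\rfloor = k$. Since both cited results are used as black boxes, this completes the proof; no planar-specific decomposition or rerouting argument needs to be built from scratch.
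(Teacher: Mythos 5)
Your proof is correct and matches exactly what the paper (implicitly, via its citation chain) does: Proposition~\ref{P:tw} is stated in the paper without proof, attributed to~\cite{JCTB}, which itself derives it by combining the Robertson--Seymour branchwidth/treewidth inequality~\cite{robertson1991graph} with the Gu--Tamaki planar branchwidth/grid-minor bound~\cite{gu2012improved}, precisely as you do. Your careful tracking of the constants (in particular, that $\mathrm{tw}(G)\geq 4.5k$ yields $\mathrm{bw}(G)\geq 3k+1$ thanks to the ``$+1$'' and integrality, which is then fed into the Gu--Tamaki bound) is the right way to verify the $4.5$.
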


\medskip
\noindent
\textbf{Concentric Cycles.}
Let $G$ be a planar graph. Consider a cycle $C$ of $G$ and let $D$ be the closed disk associated to $C$.  We say that $D$ is \textit{internally chordless} if there is no path in $G$ whose endpoints are vertices  of $C$ and whose edges belong to the open interior of $C$. Let $\mathcal{C} = (C_0,\ldots,C_r)$ be a sequence of cycles in $G$. By $D_i$, we denote the closed interior of $C_i$ (for $i\in [0,r]$). $\mathcal{C}$ is \textit{concentric} if, for $i\in [0,r]$, $C_i$ is contained in the open interior of $D_{i}$. Furthermore, $\mathcal{C}$ is said to be \textit{tight} in $G$, if, additionally,
\begin{enumerate}
    \item $D_0$ is internally chordless.
    \item For $i\in [0,r]$, there is no cycle in $G$ that is contained in $D_{i+1}\setminus D_i$ and whose closed interior $D$  has the property $D_i \subsetneq D \subsetneq D_{i+1}$. 
\end{enumerate}

A vertex $v$ is said to be $\ell$-\textit{isolated} if there exists $\ell+1$ concentric cycles $C_0,\ldots,C_\ell$ such that $v$ is contained in $D_0$ and no vertex of $T$ is contained in $D_{\ell}$. Intuitively, $v$ is separated from each vertex in $T$ by $\ell$ concentric cycles $C_1,\ldots,C_\ell$. See Figure~\ref{fig:OC1} for an illustration. The following result from~\cite{JCTB} shall be useful for us.
\begin{proposition}[\cite{JCTB}]\label{P:jctb}
    There exists an algorithm that, in $2^{(r.\sqrt{|T|})^{O(1)}}\cdot n$ time, given a $n$-vertex planar graph $G$, $T\subseteq V(G)$, and $r\in \mathbb{N}_0$, either outputs a tree decomposition $G$ of width at most $9\cdot(r+1)\cdot \lceil \sqrt{|T|+1}\rceil$ or an internally chordless cycle $C$ of $G$ such that there exists a tight sequence of cycles $C_0,\ldots,C_r$ in $G$ where
    \begin{itemize}
        \item $C_0 = C$, and
        \item all vertices of $T$ are in open exterior of $C_r$.
    \end{itemize}
\end{proposition}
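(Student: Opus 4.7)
My plan is a win/win argument driven by the planar grid-minor theorem. In the first step, I would run a planar-graph treewidth algorithm (e.g., via Bodlaender's DP) that, in $2^{O(w)}\cdot n$ time, either produces a tree decomposition of $G$ of width at most $w := 9(r+1)\lceil\sqrt{|T|+1}\rceil$ or certifies $\tw(G)>w$. In the former case, we output this tree decomposition and stop. Otherwise, set $k := 2(r+1)\lceil\sqrt{|T|+1}\rceil$; since $w \geq 4.5k$, Proposition~\ref{P:tw} guarantees a $(k\times k)$-grid minor $M$ of $G$, which can be computed in the same time budget.

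In the second step, I would partition $M$ into $\lceil\sqrt{|T|+1}\rceil^2 \geq |T|+1$ pairwise-disjoint sub-grids of side $2(r+1)$. Working with a flat planar embedding of $M$, the outer boundaries of these sub-grids bound pairwise interior-disjoint closed disks, so each terminal lies in at most one of them; by pigeonhole, at least one sub-grid $\Gamma$ has an enclosed disk disjoint from $T$. Lifting the $r+1$ concentric rings of $\Gamma$ back to $G$ via the branch sets of $M$ yields a sequence $C_0^\star \subset \cdots \subset C_r^\star$ of concentric cycles with $D_r^\star \cap T = \emptyset$.

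In the third step, I would turn this into a tight sequence whose innermost cycle is internally chordless. For chordlessness: while $D_0^\star$ contains a path joining two distinct vertices of $C_0^\star$ through the open interior of the disk, the path splits $C_0^\star$ into two arcs, and replacing $C_0^\star$ by the cycle formed by this path together with whichever arc still encloses at least one vertex strictly decreases the enclosed vertex count, ensuring termination in $O(n)$ steps. An analogous inward-shrinking applied between consecutive $C_i$ and $C_{i+1}$ enforces the tightness condition (whenever a proper cycle lies strictly between $C_i$ and $C_{i+1}$ in $D_{i+1}\setminus D_i$, we replace $C_i$ by it). All such moves only push cycles inward, preserving $D_r \cap T = \emptyset$, and we output $C := C_0$.

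The main obstacle is the pigeonhole step: ensuring that the planar disk enclosed by $\Gamma$'s outer boundary is truly terminal-free. A priori, vertices of $G$ not covered by any branch set of $M$ --- possibly including terminals --- could be embedded inside this disk. I would bypass this by using a \emph{flat} grid minor, meaning one where every vertex of $G$ embedded inside the disk of a sub-grid of $M$ lies in the branch set of a grid vertex of that sub-grid; such flat grid minors can be obtained for planar graphs within the same time budget by absorbing stray vertices and edges into neighboring branch sets along the faces they inhabit. With this in place, the pigeonhole isolates a genuinely terminal-free disk, the construction produces the required $C$, and the overall running time is $2^{(r\sqrt{|T|})^{O(1)}}\cdot n$, matching the claim.
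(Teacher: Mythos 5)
Your overall win/win scheme---running an exact planar treewidth computation, extracting a $k\times k$ grid minor via \Cref{P:tw} when the treewidth is large, pigeonholing over $\lceil\sqrt{|T|+1}\rceil^2$ sub-grids of side $2(r+1)$, lifting the concentric square rings to $G$, and then tightening---is the right shape of argument and is essentially the route taken in~\cite{adler2011tight}. Two small corrections: Bodlaender's exact algorithm runs in $2^{O(w^3)}\cdot n$ time rather than $2^{O(w)}\cdot n$ (still within the budget since $w^3=(r\sqrt{|T|})^{O(1)}$), and you do need an \emph{exact} algorithm here, because the width bound $9(r+1)\lceil\sqrt{|T|+1}\rceil$ is precisely $4.5$ times the sub-grid side length, so a constant-factor approximation would leave a gap where neither case of the proposition is certified. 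Separately, the ``flat grid minor'' tangent is largely a red herring: the pigeonhole counts terminals by where they are \emph{embedded}, not by which branch set they belong to, so a stray terminal inside a sub-grid's disk is already counted; moreover, absorbing such a vertex into a branch set does not remove it from the disk, so the flattening would not help with the concern you raise. What one actually needs (and what does hold for planar grid minors, since the branch sets are connected, pairwise disjoint, and laid out in the plane according to the grid's own embedding) is simply that the $\lceil\sqrt{|T|+1}\rceil^2$ closed disks are pairwise disjoint.

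The genuine gap is the tightening step, whose description is internally inconsistent. You say to ``replace $C_i$ by'' an offending cycle $C'$ with $D_i\subsetneq D'\subsetneq D_{i+1}$, which pushes $C_i$ \emph{outward}, and then assert ``all such moves only push cycles inward.'' The outward version has no clean termination argument: enlarging $D_i$ can re-introduce offending cycles in $D_i\setminus D_{i-1}$ and new chords of $D_0$, the chordlessness pass pushes $C_0$ back inward, and your proposed potential does not decrease monotonically. The fix, consistent with your own ``inward'' remark, is to replace $C_{i+1}$ (not $C_i$) by $C'$. Then every move---shrinking $C_0$ along a chord, or lowering some $C_{i+1}$ to an offending $C'$---strictly decreases the potential $\sum_{i=0}^{r}|\{\text{faces of }G\text{ contained in }D_i\}|$, concentricity is preserved, and $D_r\cap T=\emptyset$ is preserved because $C_r$ only ever shrinks. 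Relatedly, your potential ``number of enclosed vertices'' for the chordlessness of $D_0$ does not strictly decrease when the chord is a single edge whose two resulting subdisks each contain no vertex of $G$; counting enclosed faces (or edges) instead makes that step airtight. With these repairs, the proof goes through.
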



\medskip
\noindent\textbf{Segments.}
Let $G$ be a planar graph and let $D$ be the closed interior of a cycle $C$ of $G$. Given a path $P$ in $G$, a subpath $P'$ of $P$ is a {\em $D$-segment} of $P$, if $P'$ is a non-empty (possibly edgeless) path obtained as a connected component of the intersection of $P$ with $D$. Given a $T$-loop $L$, we say that $P'$ is a $D$-segment of $L$ if $P'$ is a $D$-segment of a subpath $P_0$ of $L$. 

\begin{figure}
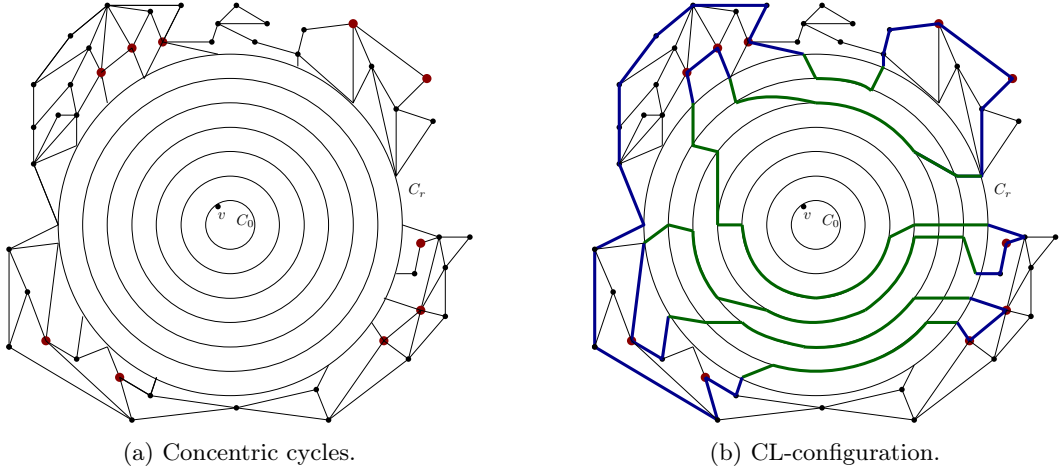

\centering
\begin{subfigure}{.5\textwidth}
  \centering
  \captionsetup{justification=centering}
  \includegraphics[width=.8\linewidth]{OCC1.pdf}
  \caption{Concentric cycles.}
  \label{fig:OC1M}
\end{subfigure}%
\begin{subfigure}{.5\textwidth}
  \centering
  \captionsetup{justification=centering}
  \includegraphics[width=.8\linewidth]{OCC0.pdf}
  \caption{CL-configuration.}
  \label{fig:OC2M}
\end{subfigure}
\caption{(a) A sequence $\mathcal{C}$  of concentric cycles and $v$ is $r$-isolated. (b) A CL-configuration $(\mathcal{C},L)$. The segments of $L$ are represented in  green and the edges of $L$ outside of $D_r$ are shown in blue. Hence, the green edges along with the blue edges combine to give the loop $L$. In both subfigures, the terminal vertices are highlighted in red.} 
\label{fig:testM}
\end{figure}

\medskip
\noindent\textbf{CL-configurations.} A \textit{CL-configuration}, formally defined below, is used to capture the interactions between a sequence of concentric cycles $\mathcal{C}$ and a $T$-loop $L$. See Figure~\ref{fig:OC2M} for an illustration. 

\begin{definition}[{\bf CL-configuration}]\label{D:CL}
Given a planar graph $G$, a pair $\mathcal{Q} = (\mathcal{C},L)$ is a {\em CL-configuration} of $G$ of \textit{depth} $r$ if $\mathcal{C}= (C_0,\ldots,C_r)$ is a sequence of concentric cycles in $G$, $L$ is a $T$-loop in $G$, and $D_r$ does not contain any vertex from $T$. The $D_r$-segments of $L$ are said to be the \textit{segments} of $\mathcal{Q}$. 
\end{definition}

For a segment $P$ of $\mathcal{Q}$, consider the minimum $i$ such that $V(C_i\cap P) \neq \emptyset$. Then, the \textit{eccentricity} of $P$ is $i$. 
Given a cycle $C_i\in \mathcal{C}$ and a segment $P$ of $\mathcal{Q}$,  the $i$-\textit{chords}  of $P$ are the connected components of $P\cap \mathsf{int}(D_i)$. Similarly, for every $i$-chord $X$ of $P$, we define the $i$-\textit{semichords} of $P$ as the connected components of the set $X\setminus D_{i-1}$. Notice that $i$-chords as well as $i$-semichords are open arcs. Given a segment $P$ that does not have any $0$-chord, we define its \textit{zone} as the connected component of $D_r\setminus P$ that does not contain the open-interior of $D_0$. (A zone is an open set.) 
Following \cite{JCTB}, we define convex segments and convex CL-configurations.
\begin{definition}[{\bf Convex Segments and CL-configurations.}]\label{D:CLConvex}
    A segment $P$ of  ${\cal Q}$ is {\em convex}
if the following three conditions are satisfied:
\begin{itemize}
\item[(i)] $P$ has no $0$-chord. 
\item[(ii)] For every $i\in [r]$,  the following holds:
\begin{itemize}
\item[a.] $P$ has at most one $i$-chord, 
\item[b.] if  $P$ has  an  $i$-chord, then $P\cap C_{i-1}\neq\varnothing$.
\item[c.] Each $i$-chord of $P$ has exactly two $i$-semichords.
\end{itemize}
\item[(iii)] If $P$ has eccentricity $i<r$, there 
is another segment inside the   zone of $P$
with eccentricity $i+1$.
\end{itemize}
A CL-configuration ${\cal Q}$ is \emph{convex}  if all its segments are convex.
\end{definition}

Next, we define a notion of ``\textit{cost}'' for a CL-configuration $\mathcal{Q}= (\mathcal{C},L)$, which corresponds to the number of edges in $L$ that do not belong to $\mathcal{C}$.
\begin{definition}[{\bf Cheap Loops}]\label{D:CheapLoop}
Let $G$ be a planar graph and $\mathcal{Q}= (\mathcal{C},L)$ be a CL-configuration of $G$ of depth $r$. For a subgraph $H$ of $G$, let $c(H) = |E(H)\setminus \bigcup_{i\in [0,r]} E(C_i)|$.
$L$ is {\em $\mathcal{C}$-cheap} if there is no other CL-configuration $\mathcal{Q}'=(\mathcal{C},L')$ such that $L'$ is a $T$-loop and $c(L) >c(L')$. 
\end{definition}

\section{Irrelevant Vertex Argument}
Recall that a vertex $v$ of $G$ is irrelevant if $G-v$ has a $T$-loop whenever $G$ has a $T$-loop. In this section, we will establish that, for some $c>1$, if $G$ has a CL-configuration $\mathcal{Q}=(\mathcal{C},L)$ where $\mathcal{C} = (C_0,\ldots,C_r)$ such that $r>c \log k$, then every vertex contained in $D_{r-c\log k}$ is irrelevant, i.e., each $c\log k$-isolated vertex is irrelevant. For the rest of this section, we assume that $G$ is a planar graph, $T\subseteq V(G)$, and $k=|T|$. 

\subsection{Segments of the Same Type}\label{S:same}
Let $\mathcal{Q}=(\mathcal{C},L)$ be a depth $r$ convex CL-configuration of $G$. In this section, we will establish that if $L$ is $\mathcal{C}$-cheap, then the segments of $L$ behave nicely and do not intersect ``deeper'' cycles of $\mathcal{C}$. To this end, we will use the following notion of segment types.

\begin{definition}[{\bf Segment types}]\label{D:ST}
 See Figure~\ref{fig:sameType} for an illustration. Let $\mathcal{Q}=(\mathcal{C},L)$ be a depth $r$ convex CL-configuration of $G$. Moreover, let $S_{1}$ and $S_{2}$ be two $C_j$-segments of ${\cal Q}$ such that $S_i$, for $i\in [2]$, has endpoints $u_i$ and $v_i$. We say that $S_{1}$ and $S_{2}$ are {\em parallel}, and  write $S_1\parallel S_2$ if:
 \begin{enumerate}
     \item there exist paths $P$ and $P'$  on $C_j$
connecting an endpoint of $S_{1}$ with an endpoint of $S_{2}$ 
such that these paths do not pass through the other endpoints of $S_1$ and $S_2$, 
    \item no  segment of ${\cal Q}$ has both
endpoints on $P$ or on $P'$, and
\item the closed-interior of  the cycle $P\cup S_{1}\cup P'\cup S_{2}$ does not contain
the disk $D_{0}$. 
 \end{enumerate}
A \emph{$C_j$-type of segment} is an equivalence class of segments of ${\cal Q}$ under the relation $\parallel$.    
\end{definition}

\begin{figure}
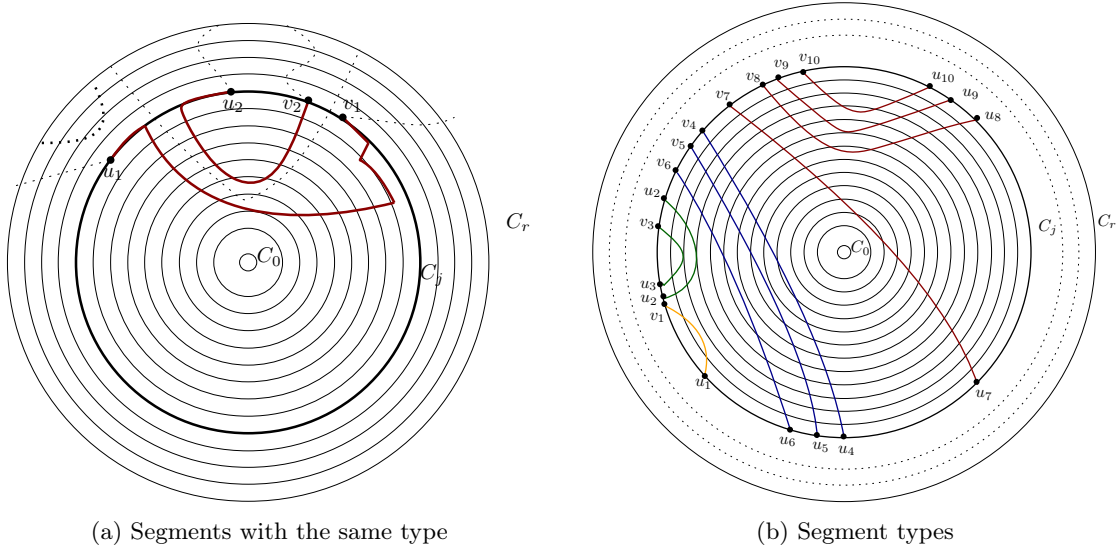

\centering
\begin{subfigure}{.5\textwidth}
  \centering
  \captionsetup{justification=centering}
  \includegraphics[width=0.9\linewidth]{sameType.pdf}
  \caption{Segments with the same type}
  \label{fig:sameType}
\end{subfigure}%
\begin{subfigure}{.5\textwidth}
  \centering
  \captionsetup{justification=centering}
  \includegraphics[width=0.9\linewidth]{sameType2.pdf}
  \caption{Segment types}
  \label{fig:segments}
\end{subfigure}

\caption{(a) Here $S_1$ and $S_2$ are $C_j$-segments with endpoints $u_1,v_1$ and $u_2,v_2$, respectively. Moreover, $P$ and $P'$ are the $(u_1,u_2)$-path and $(v_1,v_2)$-path along $C_j$, respectively. Here $S_1 \parallel S_2$. (b) (i)Here, segments $S_1,\ldots,S_{10}$ are $C_r$-segments such that segment $S_i$ has endpoints $u_i$ and $v_i$.  The segments $S_4$ and $S_7$ are not of the same $C_r$-type because of condition~(3) in Definition~\ref{D:ST} and $S_6$ and $S_2$ are not of the same $C_r$-type because of condition~(2) in Definition~\ref{D:ST}. Further notice that when $S_6$ and $S_2$ are restricted to $D_{r-2}$, they have the same $C_{r-2}$-type. 
    (ii) Here, $S_{10}\prec S_9 \prec S_8 \prec S_7$, and similarly, $S_6\prec S_5\prec S_4$}
\label{fig:segmentFigure}
\end{figure}

 

Let $\mathcal{S}$ be a set of segments of $\mathcal{Q}$ having the same $C_j$-type. We can define a partial order $\prec$ on $\mathcal{S}$ as follows: we say that $S_a \prec S_b $ if and only if $S_a$ lies in the zone of $S_b$. See Figure~\ref{fig:segments} for a reference. 


In this section, we will establish that a cheap $T$-loop cannot have ``many'' segments of the same type in a convex CL-configuration. Let $P$ (resp., $S$) is a path (resp., segment) of $G$ with endpoints $u$ and $v$. Then, $int(P)$ (resp., $int(S)$ denote the set $V(P)\setminus\{u,v\}$ (resp., $V(S)\setminus \{u,v\}$). We need the following notion of \textit{out-segments} to proceed further.


\begin{definition}[{\bf Out-segments}]\label{D:OS}
    Consider a convex CL-configuration $\mathcal{Q}=(\mathcal{C},L)$ of depth $r$ and let $\mathcal{S} = \{S_1,\ldots S_p\}$ be a set of segments of the same $C_j$-type ($j\in [r]$) such that $S_{i+1} \prec S_i$ and $p >3$.  Then, the $\mathcal{S}$-out-segments are the connected components of $L\setminus(int(S_1)\cup int(S_2) \cup int(S_3))$. Observe that, for $i\in [3]$, $|V(S_i)>2|$ due to the definition of convex segments, and hence $|int(S_i)|>1$. Since we remove three disjoint, connected, and non-empty subparts of a cycle, we have three $\mathcal{S}$-out-segments, and let them be denoted by $O_1,O_2,O_3$. When $\mathcal{S}$ is clear from the context, we refer to a $\mathcal{S}$-out-segments as simply an out-segment. An $\mathcal{S}$-out-segment $O$ with distinct endpoints $x,y$ is referred as $x,y$-out-segment. See Figure~\ref{fig:basic2} for an illustration. 
\end{definition}


Now, we present some easy observations concerning segments and out-segments that will shall be helpful for us later. We begin by observing that if some cycle $C$ contains all $\mathcal{S}$-out-segments, then $C$ is a $T$-loop indeed.
\begin{observation}\label{O:three}
    Let $\mathcal{Q}=(\mathcal{C},L)$ be a depth $r$ CL-configuration of $G$ and let $\mathcal{S} = \{S_1,S_2,S_3\ldots,S_p\}$ be a set of the segments of the same $C_j$-type ($j\in [r]$). Further, let   $O_1,\ldots,O_3$ be the $\mathcal{S}$-out-segments. Then, if there exists a cycle $C$ such that $O_1,O_2,O_3$ are connected subgraphs of $C$, then $C$ is a $T$-loop.
\end{observation}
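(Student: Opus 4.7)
\medskip
\noindent\textbf{Proof Plan for \Cref{O:three}.} The plan is to show that every terminal of $T$ lies in the vertex set of some $\mathcal{S}$-out-segment, from which the conclusion follows immediately since each $O_i$ is a connected subgraph of $C$. First, I would recall that by \Cref{D:CL}, the disk $D_r$ associated with the outermost cycle $C_r$ of the CL-configuration $\mathcal{Q}$ contains no terminal, i.e., $V(D_r) \cap T = \emptyset$.

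Next, I would observe that for every $i \in [3]$, the segment $S_i$ is a $C_j$-segment of $\mathcal{Q}$, and by the definition of segments (as connected components of $D_j \cap L$ viewed as subpaths of a subpath of $L$), we have $V(S_i) \subseteq D_j \subseteq D_r$, since $j \in [r]$ and $\mathcal{C}$ is concentric. In particular, $V(S_i) \cap T = \emptyset$, and therefore $\mathrm{int}(S_i) \cap T = \emptyset$. Consequently, removing the vertices of $\mathrm{int}(S_1) \cup \mathrm{int}(S_2) \cup \mathrm{int}(S_3)$ from $L$ does not delete any terminal.

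Since $L$ is a $T$-loop, we have $T \subseteq V(L)$. Combining this with the previous observation gives
\[
T \;\subseteq\; V(L) \setminus \big(\mathrm{int}(S_1) \cup \mathrm{int}(S_2) \cup \mathrm{int}(S_3)\big) \;=\; V(O_1) \cup V(O_2) \cup V(O_3).
\]
By hypothesis, each $\mathcal{S}$-out-segment $O_i$ is a connected subgraph of the cycle $C$, so $V(O_i) \subseteq V(C)$ for every $i \in [3]$. Hence $T \subseteq V(C)$, and since $C$ is a (simple) cycle by assumption, $C$ is a $T$-loop.

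I do not expect a substantial obstacle here: the only subtle point is making sure that no terminal is accidentally eliminated when we excise $\mathrm{int}(S_1) \cup \mathrm{int}(S_2) \cup \mathrm{int}(S_3)$ from $L$, and this is immediate from the definition of CL-configuration (terminals are forbidden inside $D_r$). The endpoints of each $S_i$ also lie in $D_r$ (they sit on $C_j$), so even if they are not removed, they are not terminals and their inclusion or exclusion does not affect the argument.
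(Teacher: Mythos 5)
Your proof is correct and takes essentially the same approach as the paper's: both decompose $V(L)$ into the interiors of $S_1, S_2, S_3$ together with the out-segments, then invoke the fact that no terminal lies in $D_r$ (hence none lies in any $S_i$) to conclude $T \subseteq V(O_1) \cup V(O_2) \cup V(O_3) \subseteq V(C)$. Your version is slightly more explicit in spelling out why $V(S_i) \cap T = \emptyset$, but the argument is identical in substance.
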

\begin{proof}
    Since $V(L) = V(S_1)\cup V(S_2) \cup V(S_3)\cup V(O_1) \cup V(O_2) \cup V(O_3)$ (by definition of out-segments) and $T\cap (V(S_1)\cup V(S_2) \cup V(S_3)) = \emptyset$ (by definition of CL-configurations), it follows that $T\subseteq V(O_1) \cup V(O_2) \cup V(O_3)$. Hence, if some cycle $C$ contains $O_1,O_2,O_3$ as subgraphs, then $C$ is a $T$-loop.
\end{proof}

Next, we observe a rather straightforward fact.
\begin{observation}\label{O:free}
    Consider a convex CL-configuration $\mathcal{Q} = (\mathcal{C},L)$ of depth $r$ and let $\mathcal{S} = \{S_1,\ldots,S_p\}$ be the segments of the same $C_j$-type such that $p>3$ and $S_p \prec \ldots \prec S_1$. Then, $c(S_2) >0$. 
\end{observation}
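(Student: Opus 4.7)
The plan is to argue by contradiction. Assume $c(S_2) = 0$. Then every edge of the path $S_2$ lies in $\bigcup_{i=0}^{r} E(C_i)$. Since $S_2$ is a connected subpath of $L$ and the cycles $C_0, \ldots, C_r$ are pairwise vertex-disjoint, I would first conclude that all edges of $S_2$ belong to a single cycle $C_{i_0}$.

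Next, I would show $i_0 = j$, which forces $S_2 \subseteq C_j$. This is the main technical step and uses planarity: since $S_2$ is a connected component of $D_j \cap L$, the two $L$-edges incident to the endpoints of $S_2$ but not contained in $S_2$ must leave $D_j$. In the planar embedding, such an edge can only cross $C_j = \partial D_j$ at a shared vertex, and since the far endpoint of this edge lies outside $D_j$, the corresponding endpoint of $S_2$ itself must lie on $C_j$. Both endpoints of $S_2$ therefore lie on $C_j$; but they also lie on $C_{i_0}$ because $S_2 \subseteq C_{i_0}$. Since cycles in $\mathcal{C}$ are vertex-disjoint, $i_0 = j$, and so $S_2 \subseteq C_j$.

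Finally, I would derive the desired contradiction using $S_3$, which exists because $p > 3$ and satisfies $S_3 \prec S_2$. Since $S_2$ lies entirely on $C_j = \partial D_j$, removing $S_2$ from $D_j$ does not disconnect the closed disk, and the resulting set still contains $\mathsf{int}(D_0)$. Hence, the zone of $S_2$ (the component of $D_j \setminus S_2$ not containing $\mathsf{int}(D_0)$, in analogy with the zone definition for segments of $\mathcal{Q}$) is empty. Thus no $C_j$-segment can lie in the zone of $S_2$, contradicting $S_3 \prec S_2$.

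I expect the main obstacle to be fully rigorizing the planarity step, namely that the boundary vertices of a connected component of $D_j \cap L$ necessarily lie on $C_j$. The intuition is clear from the planar embedding---a curve leaving a closed topological disk must pierce its boundary, and in a planar graph this piercing can only occur at a vertex of $C_j$---but writing this down cleanly requires explicitly citing the embedding-consistency of $D_j$ with the graph. The subsequent zone argument is then an elementary topological observation.
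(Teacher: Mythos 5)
Your argument is correct, and it takes a genuinely different route from the paper's. The paper argues directly: it invokes convexity/tightness together with $S_3\prec S_2$ to deduce that $S_2$ must reach a vertex of $C_{j-1}$, and then exhibits an edge of $S_2$ with exactly one endpoint on $C_j$ (which therefore lies on no cycle of $\mathcal{C}$, giving cost $\geq 1$). You instead argue by contradiction: if $c(S_2)=0$, then since consecutive edges of $S_2$ share a vertex and the cycles of $\mathcal{C}$ are pairwise vertex-disjoint, all of $S_2$'s edges lie on a single $C_{i_0}$; the endpoints of $S_2$ lie on $C_j$ (by the planar Jordan-curve argument you sketch), forcing $i_0=j$; and a boundary arc of $D_j$ has empty zone, contradicting $S_3\prec S_2$. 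Your version is arguably cleaner and more self-contained: it needs only that $\mathcal{C}$ is concentric (hence vertex-disjoint), that $C_j$-segments have their endpoints on $C_j$, and the definition of $\prec$ --- it does not appeal to convexity or tightness of $\mathcal{C}$ at all, unlike the paper's one-line ``as otherwise $\mathcal{C}$ is not tight'' justification. The trade-off is that the paper's route gives a slightly stronger structural conclusion (that $S_2$ actually descends to $C_{j-1}$), which resonates with the convexity machinery used elsewhere, whereas yours proves exactly what the observation asserts and nothing more. One small point worth making explicit in your write-up: in the step ``all edges of $S_2$ belong to a single $C_{i_0}$,'' you should also handle the degenerate case where $S_2$ is edgeless (a single vertex on $C_j$); the conclusion $S_2\subseteq C_j$ then holds trivially and the zone is again empty, so the contradiction still goes through.
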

\begin{proof}
    Since $S_2$ and $S_3$ are convex segments of the same $C_j$-type and $S_3\prec S_2$, it follows that $S_2$ contains a vertex of the cycle $C_{j-1}$ as well (as otherwise the $\mathcal{C}$ is not tight), and hence $S_2$ contains at least one edge $uv$ such that at most one of $u,v$ is in $C_j$, i.e., $|\{u,v\} \cap V(C_j)| \leq 1$, and hence $c(S_2) \geq c(uv) \geq 1$.
\end{proof}

Now, we prove the main lemma of this section that establishes that in a cheap solution, there cannot be many segments of the same type. 
\begin{lemma}\label{L:main}
   Let $\mathcal{Q}=(\mathcal{C},L)$ be a depth $r$ convex CL-configuration of $G$. Moreover, let $\mathcal{S}$ be a set of segments of the same $C_j$-type, for some $j\in [r]$. If $|\mathcal{S}|>3$, then $L$ is not a $\mathcal{C}$-cheap $T$-loop. 
\end{lemma}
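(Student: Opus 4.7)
The plan is to show that four or more nested segments of the same $C_j$-type allow us to reroute $L$ into a strictly cheaper $T$-loop $L'$, contradicting the $\mathcal{C}$-cheapness of $L$. Segments of the same $C_j$-type are nested (condition (2) of Definition~\ref{D:ST} forbids the crossing arrangement), so $\mathcal{S}$ is a chain under $\prec$. Using $|\mathcal{S}|>3$, I select three segments $S_1\succ S_2\succ S_3$ from $\mathcal{S}$ with $S_3\prec S_2\prec S_1$; this is the same triple for which Observation~\ref{O:free} applies and guarantees $c(S_2)>0$. Denote the endpoints of $S_i$ on $C_j$ by $u_i,v_i$. The ``same-type'' conditions together with the nesting force the six endpoints to appear along $C_j$ in the cyclic order $u_1,u_2,u_3,v_3,v_2,v_1$ (up to relabeling of which endpoint of each $S_i$ is called $u_i$). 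Let $O_1,O_2,O_3$ be the three $\mathcal{S}$-out-segments, which together with $S_1,S_2,S_3$ partition the edges of $L$ and, by Observation~\ref{O:three}, contain every terminal.

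Next I construct $L'$ by removing $\mathrm{int}(S_2)$ from $L$ and replacing it by a zero-cost detour. The out-segments $O_1,O_2,O_3$ are retained in full, since each has endpoints among $\{u_i,v_i\}$ and must be present in $L'$ to cover $T$. I keep (at least part of) $S_1$ and $S_3$ as bridges between the out-segment endpoints, and close the cycle by using arcs on the concentric cycles $C_j$ and $C_{j-1}$. The key features are that (i) sub-arcs of $S_1$ and $S_3$ have cost no larger than $c(S_1)$ and $c(S_3)$, respectively, and (ii) any arc along $C_j$ or $C_{j-1}$ contributes $0$ to $c(\cdot)$ by Definition~\ref{D:CheapLoop}. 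The convexity conditions of Definition~\ref{D:CLConvex} are what make the detour available: condition (ii)(b) gives us that $S_2$ reaches $C_{j-1}$, and since $S_3\prec S_2$, the portion of $C_{j-1}$ cut off by $S_2$ lies in the zone of $S_2$ and can be used to cross ``inside'' $S_2$ without touching $\mathrm{int}(S_2)$.

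The cost comparison then settles the lemma: the original loop splits as $c(L)=c(O_1)+c(O_2)+c(O_3)+c(S_1)+c(S_2)+c(S_3)$, while the rerouted loop satisfies $c(L')\le c(O_1)+c(O_2)+c(O_3)+c(S_1)+c(S_3)$, so $c(L')\le c(L)-c(S_2)$, and Observation~\ref{O:free} gives $c(S_2)>0$. Hence $c(L')<c(L)$, contradicting the assumption that $L$ is $\mathcal{C}$-cheap. Note that the argument only uses $S_1,S_2,S_3$; the fact that $|\mathcal{S}|\ge 4$ is implicitly required only via Observation~\ref{O:free} to guarantee $c(S_2)>0$ through the convexity of $S_2$.

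The main obstacle I expect is verifying rigorously that the detour actually yields a \emph{simple} cycle rather than a pair of cycles. Since $L$ traverses the six endpoints in some cyclic order that is constrained (but not uniquely determined) by the nesting $S_3\prec S_2\prec S_1$ and by how the out-segments connect them, a short case analysis is needed to check that in every case one can choose the $C_j$- and $C_{j-1}$-arcs so that the resulting closed walk is a simple cycle passing through each retained $O_i$ exactly once. The convexity conditions (each $i$-chord of $S_2$ having exactly two $i$-semichords, and $S_2$ meeting $C_{j-1}$) are precisely the combinatorial inputs ensuring that such a choice of arcs exists in each case; once this is verified, the cost inequality above finishes the proof.
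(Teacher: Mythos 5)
Your proposal is correct and takes essentially the same approach as the paper: pick three nested same-type segments $S_1\succ S_2\succ S_3$, delete $S_2$ (saving $c(S_2)>0$ by Observation~\ref{O:free}), reconnect using sub-arcs of $S_1$, $S_3$ together with zero-cost arcs along $C_j$ and $C_{j-1}$, and invoke Observation~\ref{O:three} to see the result is still a $T$-loop. The case analysis you flag as the main obstacle is exactly what the paper handles via its Claim 4.3 (which pins down the out-segment endpoint pairing, forcing WLOG a $(v_1,v_2)$-out-segment and leaving only two sub-cases) and via the explicit detour paths $P$, $P'$, $P''$; one small slip is that the $C_{j-1}$-detour actually relies on $S_1$ and $S_3$ reaching $C_{j-1}$, while $S_2$ reaching $C_{j-1}$ is what underlies $c(S_2)>0$.
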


\begin{proof}
    Suppose that $\mathcal{S} = \{S_1,\ldots,S_p\}$ such that for $i \in [p-1]$, $S_{i+1} \prec S_i$. Let $S_i$ intersects $C_j$ at vertices $u_i$ and $v_i$, i.e., the vertices in $\{u_1,\ldots, u_p, v_1,\ldots,v_p\}$ appear on $C_j$ in the order $(u_1,\ldots,u_p,v_p,\ldots,v_1)$ (or the reverse order). We note that possibly $u_p=v_p$, i.e., the segment $S_p$ intersects $C_j$ at a single vertex, say $u$.
    See Figure~\ref{fig:segments} for an illustration.   If $p\leq 3$, then  our claim is vacuously true. So, we assume that $p>3$. Let $O_1,O_2,O_3$ be the $\mathcal{S}$-out-segments. Since $\mathcal{S}$ is clear from the context, we will refer to $\mathcal{S}$-out-segments as simply out-segments to ease the notation. Further, we shall use the  following terminology. For two consecutive vertices $x,y$ in the sequence $(u_1,\ldots,u_p,v_p,\ldots,v_1)$, let $(x,y)$-$C_j$-path denote the $(x,y)$-path along $C_j$ that does not contain any vertex from $\{u_1,\ldots,u_p,v_1,\ldots,v_p\}\setminus \{x,y\}$.We begin with establishing  the following useful claim.
    

   \begin{claim}\label{L:T2}
    For $i\in [3]$, no out-segment can have $u_i,v_i$ as its endpoints, no out-segment can have $u_1,v_2$, and symmetrically, $v_1,u_2$ as its endpoints. Further, either $u_1,u_2$  or $v_1,v_2$ are endpoints of an out-segment.
\end{claim}
\begin{proofofclaim}
    First, observe that if some out-segment, say $O$, is a  $(u_i,v_i)$-out-segment, then the segment $S_i$ along with the out-segment $O$ induces a cycle, say $C$, that does not contain any segments other that $S_i$. Since $V(C)\subsetneq V(L)$ and $E(C)\subsetneq E(L)$, and $C$ is a cycle, $L$ cannot be a cycle. Second, note that if there is a $(u_1,v_2)$-out-segment $O$, then there is no way to connect $v_1$ to the cycle due to planarity and the definition of same segment type. The proof for the fact that there is no $(v_1,u_2)$-out-segment is symmetric.
    
     Finally, observe that each vertex from $\{u_1,u_2,u_3,v_1,v_2,v_3\}$ can participate in at most one out-segment since each of these vertices is participating in a segment as well, and the degree of each vertex in a cycle is 2. Hence, there can be at most two out-segments with endpoints from $\{u_3,v_3\}$, and thus, there should be at least one out-segment, say $O$, with both its endpoints from $\{u_1,u_2,v_1,v_2\}$. Since $O$ cannot have $u_1,v_1$ and $u_2,v_2$ as its endpoints, and $O$ cannot have $u_1,v_2$ and $v_1,u_2$ as its endpoints (as proved above), the only two possible endpoint pairs  for $O$ are $u_1,u_2$ and $v_1,v_2$. Hence, $O$ has either  $u_1,u_2$ as endpoints or $v_1,v_2$ as endpoints. This completes our proof.   
\end{proofofclaim}

\begin{figure}
    \centering
    \includegraphics[scale=0.8]{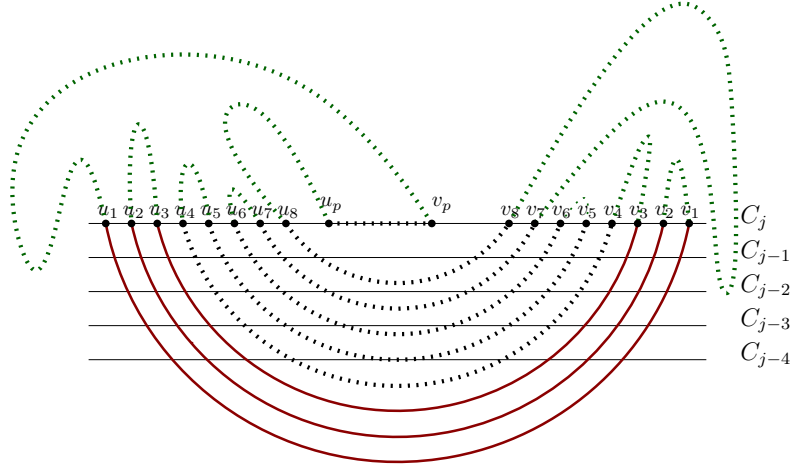}
    \caption{Here $\mathcal{S} = \{S_1,\ldots,S_8\}$ are segments of the same $C_j$-type such that $S_8\prec \cdots \prec S_1$ and $S_i$ has endpoints $u_i$ and $v_i$. The segments $S_1,S_2,$ and $S_3$ are marked in red. Other segments are marked in dotted black lines. All out-segments  are represented as dotted curves and note that all segments other than $S_1,S_2,S_3$ are a subpart of the $u_1,v_3$-out-segment.} 
    \label{fig:basic2}
\end{figure}

Due to Claim~\ref{L:T2}, we know that either $u_1,u_2$  or $v_1,v_2$ are endpoints of the same out-segment, say $O$. Without loss of generality, let us assume that $O$ is a $(v_1,v_2)$-out-segment.  We will construct a $T$-loop $L'$ such that $c(L')<c(L)$. See Figure~\ref{fig:basic3} for an illustration. Let $x$ be a vertex of $S_3$ such that (i) $x$ is on the cycle $C_{j-1}$, (ii) and there exists a $(x,v_3)$-path $P$  along $S_3$ such that each vertex of $P$ has eccentricity at least $j-1$. Similarly, let $y$ be a vertex of $S_1$ such that (i) $y$ is on cycle $C_{j-1}$, (ii) and there exists a $(y,v_1)$-path $P'$  along $S_1$ such that each vertex of $P'$ has eccentricity at least $j-1$. Finally, let $P''$ be the $(x,y)$-path along $C_{j-1}$ such that the interior of $P''$ lies completely in the interior of the zone of the segment $S_1$. We will obtain $L'$ from $L$ by performing the following operations (while performing the operations, the subgraphs may be disconnected). See Figure~\ref{fig:basic4} for an illustration.

\begin{figure}
    \centering
    \includegraphics[scale=0.8]{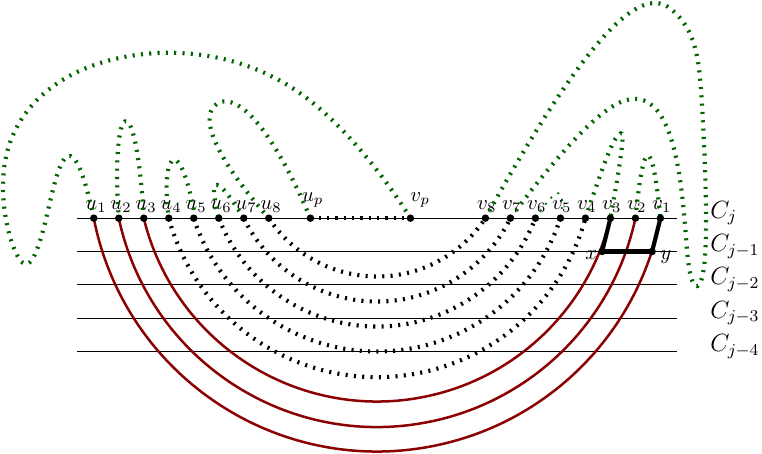}
    \caption{$x,v_3$-path marked in bold black is $P$, $y,v_1$-path in bold is $P'$, and $x,y$-path in bold is $P''$.}
    \label{fig:basic3}
\end{figure}

\begin{figure}[!htb]
    \centering
    \includegraphics[scale=0.9]{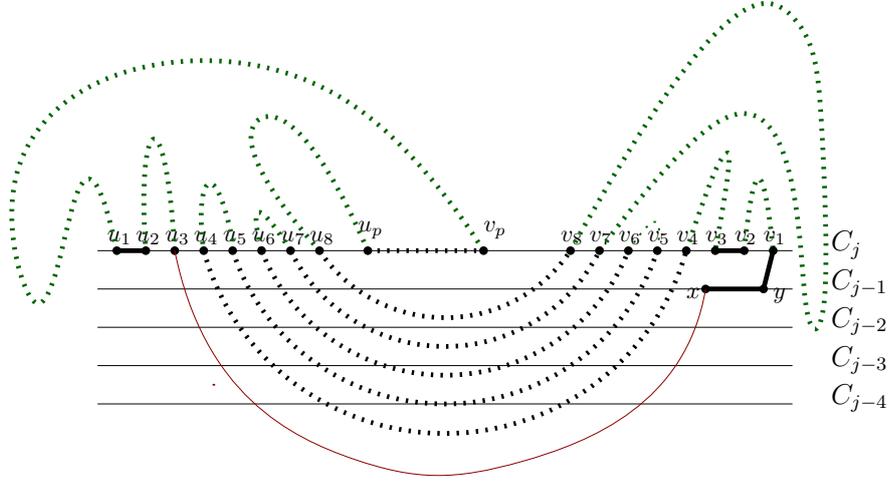}
    \caption{Construction of $L'$ from $L$.}
    \label{fig:basic4}
\end{figure}

    \begin{itemize}
        \item Remove the subpaths corresponding to the segments $S_1$ and $S_2$.
        \item Add the $(u_1,u_2)$-$C_j$-path.
        \item Remove the path $P$ and add the paths $P'$ and $P''$.  
        \item Finally, add the $(v_2,v_3)$-$C_j$-path.
    \end{itemize}


    First, it is not difficult to see that $L'$ is a cycle.
    To see this, first observe that $L'$ has a path from $u_1$ to $u_2$. Also, $L'$ has a path from $v_3$ to $v_2$, from $v_2$ to $v_1$ and from $v_1$
 to $u_3$ (and hence a path from $v_3$ to $u_3$). From \Cref{L:T2}, there is no out-segment from $u_3$ to $v_3$ in $L$. Hence, apart from the $v_1$,$v_2$-out segment, the other two out-segments in $L$ have endpoints either (i) $u_1,u_3$ and (ii) $u_2, v_3$ or (i) $u_1,v_3$ and (ii) $u_2, u_3$, both of which are preserved by $L'$. If follows that every vertex in $L'$ has degree $2$ and $L'$ is a connected graph. 
    
    Next, observe that since we do not remove any out-segment from $L$ to obtain $L'$, $L'$ is a $T$-loop (Observation~\ref{O:free}). Finally, observe that since we remove all the edges of $S_2$ that do not intersect $C_{j-1}$, (in particular, we remove all the edges of $S_2$ that contribute to $c(L)$) (Observation~\ref{O:three}) and only add paths either along the cycle $C_j$ ($(u_1,u_2)$-$C_j$-path and $(v_2,v_3)$-$C_j$-path) or along the cycle $C_{j-1}$ ($(x,y)$-path along $C_{j-1}$), which have cost 0 (by definition), we have that $c(L')< c(L)$. This completes our proof.
\end{proof}

\subsection{Logarithmic depth}\label{S:depth}

The goal of this section is to show that unless there is an irrelevant vertex, the eccentricity of any segment of a CL-configuration is at least $r-O(\log k)$.
To this end, we begin this section by defining ancestor-descendant and parent-child relationship between segments. 

\begin{definition}[\bf{Ancestor and descendant segments}]
Let $\mathcal{Q}=(\mathcal{C},L)$ be a  convex CL-configuration of $G$ of depth $r$, let $j\in[r]$, and let $S$ and $T$ be $C_j$-segments of $\mathcal{Q}$.
    We say that a segment $S$ is an \emph{ancestor} of $T$ (equivalently, $T$ is a \emph{descendant} of $S$) if $T \prec S$, that is, if $T$ is in the zone of $S$.
\end{definition}

Note that if $S$ is the ancestor of $T$, then the eccentricity of $S$ is less than or equal to that of $T$.

\begin{definition}[\bf{Chain of segments}]
Let $\mathcal{Q}=(\mathcal{C},L)$ be a  convex CL-configuration of $G$  of depth $r$, and let $j\in[r]$.  
    We say that there is a \emph{chain of $C_j$-segments of size $i$} from segment $S$ to segment $T$ if there exist segments $S_0 = S, S_1, \dots, S_i = T$ of ${\cal Q}$ such that for any $j, k \in [0,i]$, we have $j<k $ if and only if $S_k$ is in the zone of $S_j$, and for any consecutive segments $S_{k-1}$ and $ S_k$ for $k\in [i]$, there does not exist a segment $S'$ of ${\cal Q}$ such that $S'$ is an ancestor of $S_{k}$ and $S_{k-1}$ is an ancestor of $S'$.
\end{definition}

\begin{definition}[\bf{Generation of ancestor and descendant segments, Parent and child segments}]
Let $\mathcal{Q}=(\mathcal{C},L)$ be a  convex CL-configuration of $G$ of depth $r$, and let $j\in[r]$.
    We say that a $C_j$-segment $S$ is an \emph{($i$-th generation) ancestor} of a $C_j$-segment $T$ if there is a chain of size $i$ from $S$ to $T$. We say that a $C_j$-segment $P$ is a \emph{parent} of a $C_j$-segment $Q$ (equivalently $Q$ is a \emph{child} of a segment $P$) if $P$ is a first generation ancestor of $Q$.
\end{definition}

Next, we define the notions of a segment forest of a CL-configuration, the height of a segment in the  segment forest, and width $k$ subtrees of the segment forest.

\begin{definition}[\bf{Segment forest}]
    Given a CL-configuration $\mathcal{Q}=(\mathcal{C},L)$ of depth $r$, and some $j \in [r]$, the \emph{$j$-th segment forest} is a forest $F = (V,E)$, where $V$ is the set of $C_j$-segments of $\mathcal{Q}$ and the edge set $E$ models parent-child relationsip of $C_j$-segments. 
\end{definition}

\begin{figure}[htp]
\centering

  \includegraphics[scale=0.7]{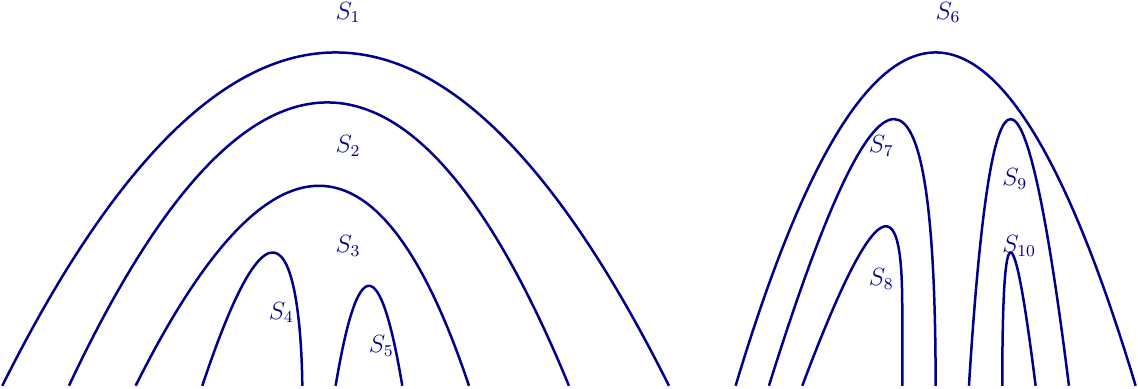}

\bigskip

\usetikzlibrary{shapes.geometric}

\resizebox {0.5\textwidth} {!} {
\begin{tikzpicture}
[every node/.style={inner sep=0pt}]
\node (1) [circle, minimum size=32.5pt, fill=pink, line width=0.625pt, draw=black] at (87.5pt, -150.0pt) {\textcolor{black}{$S_1$}};
\node (2) [circle, minimum size=32.5pt, fill=pink, line width=0.625pt, draw=black] at (87.5pt, -200.0pt) {\textcolor{black}{$S_2$}};
\node (3) [circle, minimum size=32.5pt, fill=pink, line width=0.625pt, draw=black] at (87.5pt, -250.0pt) {\textcolor{black}{$S_3$}};
\node (4) [circle, minimum size=32.5pt, fill=pink, line width=0.625pt, draw=black] at (62.5pt, -300.0pt) {\textcolor{black}{$S_4$}};
\node (5) [circle, minimum size=32.5pt, fill=pink, line width=0.625pt, draw=black] at (112.5pt, -300.0pt) {\textcolor{black}{$S_5$}};
\node (6) [circle, minimum size=32.5pt, fill=pink, line width=0.625pt, draw=black] at (262.5pt, -200.0pt) {\textcolor{black}{$S_6$}};
\node (7) [circle, minimum size=32.5pt, fill=pink, line width=0.625pt, draw=black] at (237.5pt, -250.0pt) {\textcolor{black}{$S_7$}};
\node (8) [circle, minimum size=32.5pt, fill=pink, line width=0.625pt, draw=black] at (237.5pt, -300.0pt) {\textcolor{black}{$S_8$}};
\node (9) [circle, minimum size=32.5pt, fill=pink, line width=0.625pt, draw=black] at (287.5pt, -250.0pt) {\textcolor{black}{$S_9$}};
\node (10) [circle, minimum size=32.5pt, fill=pink, line width=0.625pt, draw=black] at (287.5pt, -300.0pt) {\textcolor{black}{$S_{10}$}};
\draw [line width=0.625, ->, color=black] (1) to  (2);
\draw [line width=0.625, ->, color=black] (2) to  (3);
\draw [line width=0.625, ->, color=black] (3) to  (4);
\draw [line width=0.625, ->, color=black] (3) to  (5);
\draw [line width=0.625, ->, color=black] (6) to  (7);
\draw [line width=0.625, ->, color=black] (6) to  [in=113, out=302] (9);
\draw [line width=0.625, ->, color=black] (7) to  (8);
\draw [line width=0.625, ->, color=black] (9) to  (10);
\end{tikzpicture}
}
    \caption{The top subfigure shows segments of a CL-configuration $\mathcal{Q}=(\mathcal{C},L)$. This gives rise to the segment forest shown in the bottom subfigure. The directed edges depict parent child relationship. The height of leaves $S_4,S_5,S_8$ and $S_{10}$ is $0$. The height of $S_1$ is $3$ and the height of $S_6$ is $2$.}
    \label{fig:segforest}
\end{figure}

\begin{definition}[\bf{Height}]
    Let $\mathcal{Q}=(\mathcal{C},L)$ be a CL-configuration. Let $s$ be the size of the maximum chain in a \emph{segment forest} of $\mathcal{Q}$ from  a leaf node to a segment $S$. 
    Then, the \emph{height of $S$} in that segment forest is equal to $s-1$. The height of a leaf node is zero. The height of a segment forest $F$ is the maximum height of a segment in $F$.
\end{definition}

We refer the reader to \Cref{fig:segforest} for an example of a segment forest.

Note that a segment has at most one parent, and every segment that is not a leaf in a segment forest has at least one child.

\begin{definition}[\bf{Width $m$ $j$-subtree rooted at a segment}]
    For a $C_j$-segment $P$ of  a  CL-configuration $\mathcal{Q}=(\mathcal{C},L)$, the subtree  of the \emph{segment forest}  of $\mathcal{Q}$  induced by the vertex corresponding to segment $P$ along with all $C_j$-segments that are $i$-th generation descendants of $P$ for all $i \in [m]$ is called the \emph{width $m$ $j$-subtree rooted at $P$}. 
\end{definition}

\begin{figure}
    \centering
\usetikzlibrary{shapes.geometric}
\resizebox {0.5\textwidth} {!} {
\begin{tikzpicture}
[every node/.style={inner sep=0pt}]
\node (1) [circle, minimum size=31.25pt, fill=pink, line width=0.625pt, draw=black] at (87.5pt, -75.0pt) {\textcolor{black}{$S_1$}};
\node (2) [circle, minimum size=31.25pt, fill=pink, line width=0.625pt, draw=black] at (87.5pt, -125.0pt) {\textcolor{black}{$S_2$}};
\node (3) [circle, minimum size=31.25pt, fill=lightgray, line width=0.625pt, draw=black] at (50.0pt, -175.0pt) {\textcolor{black}{$S_3$}};
\node (4) [circle, minimum size=31.25pt, fill=pink, line width=0.625pt, draw=black] at (125.0pt, -175.0pt) {\textcolor{black}{$S_4$}};
\node (5) [circle, minimum size=31.25pt, fill=lightgray, line width=0.625pt, draw=black] at (50.0pt, -237.5pt) {\textcolor{black}{$S_5$}};
\node (6) [circle, minimum size=31.25pt, fill=pink, line width=0.625pt, draw=black] at (125.0pt, -237.5pt) {\textcolor{black}{$S_6$}};
\node (7) [circle, minimum size=31.25pt, fill=lightgray, line width=0.625pt, draw=black] at (25.0pt, -312.5pt) {\textcolor{black}{$S_7$}};
\node (8) [circle, minimum size=31.25pt, fill=lightgray, line width=0.625pt, draw=black] at (75.0pt, -312.5pt) {\textcolor{black}{$S_8$}};
\node (9) [circle, minimum size=31.25pt, fill=pink, line width=0.625pt, draw=black] at (75.0pt, -375.0pt) {\textcolor{black}{$S_9$}};
\node (10) [circle, minimum size=31.25pt, fill=pink, line width=0.625pt, draw=black] at (237.5pt, -162.5pt) {\textcolor{black}{$S_{10}$}};
\node (11) [circle, minimum size=31.25pt, fill=pink, line width=0.625pt, draw=black] at (200.0pt, -237.5pt) {\textcolor{black}{$S_{11}$}};
\node (12) [circle, minimum size=31.25pt, fill=pink, line width=0.625pt, draw=black] at (275.0pt, -237.5pt) {\textcolor{black}{$S_{12}$}};
\node (13) [circle, minimum size=31.25pt, fill=pink, line width=0.625pt, draw=black] at (275.0pt, -325.0pt) {\textcolor{black}{$S_{13}$}};
\draw [line width=0.625, ->, color=black] (1) to  (2);
\draw [line width=0.625, ->, color=black] (2) to  (3);
\draw [line width=0.625, ->, color=black] (2) to  (4);
\draw [line width=0.625, ->, color=black] (3) to  (5);
\draw [line width=0.625, ->, color=black] (4) to  (6);
\draw [line width=0.625, ->, color=black] (5) to  (7);
\draw [line width=0.625, ->, color=black] (5) to  (8);
\draw [line width=0.625, ->, color=black] (8) to  (9);
\draw [line width=0.625, ->, color=black] (10) to  (11);
\draw [line width=0.625, ->, color=black] (10) to  (12);
\draw [line width=0.625, ->, color=black] (12) to  (13);
\end{tikzpicture}
}
    \caption{Here, we show a segment forest. The width $2$ subtree rooted at the second generation ancestor of $S_8$ (which is $S_3$) is shown in grey.}
    \label{fig:depth-m}
\end{figure}
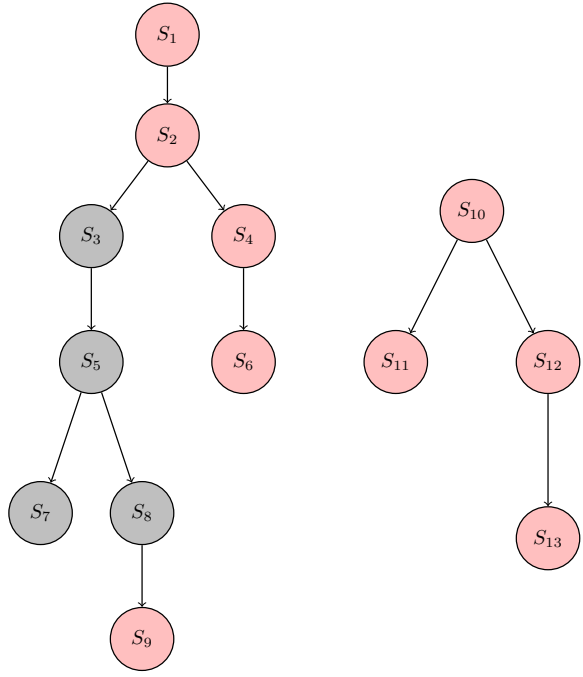

\begin{lemma} \label{lem:eitheror}
Let $\mathcal{Q}=(\mathcal{C},L)$ be a  convex CL-configuration of $G$.
    Let $e$ be the eccentricity of a segment $S$. For any given $m$, suppose that $S$ has a $k$-th generation ancestor that we denote by $U$. Let $T$ be the width $m$ $e$-subtree rooted at $U$. Then, 
    \begin{enumerate}[a.]
        \item  either all ancestors of $S$ in $T$ have the same $C_e$-type
        \item or there exists at least one other segment $R \neq S$ in $T$ that is not an ancestor of $S$.
    \end{enumerate}
\end{lemma}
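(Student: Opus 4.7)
The plan is to argue by contraposition: I will assume every segment in $T$ other than $S$ is an ancestor of $S$ (so case (b) fails), and deduce that all ancestors of $S$ lying in $T$ share the same $C_e$-type. To set up, let $U = A_0, A_1, \ldots, A_k = S$ denote the unique chain of $C_e$-segments in the $e$-th segment forest with each $A_{i+1}$ a child of $A_i$. The ancestors of $S$ lying in $T$ are exactly the prefix $A_0, A_1, \ldots, A_j$ with $j = \min(k-1, m)$, so it suffices to show that for each $0 \le i < j$, the pair $A_i, A_{i+1}$ are parallel.

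First I would observe that two of the three conditions in Definition~\ref{D:ST} are automatic for such a pair. Because $A_{i+1}$ is a child of $A_i$, its two endpoints lie on the subarc $\alpha$ of $C_e$ bounding the zone of $A_i$. Let $P, P'$ be the two subarcs of $\alpha$ connecting the endpoints of $A_i$ to those of $A_{i+1}$ and avoiding the other endpoints; then $(P, P')$ witnesses condition~(1). Furthermore, the cycle $P \cup A_i \cup P' \cup A_{i+1}$ bounds a region lying inside the zone of $A_i$, which by definition does not enclose $D_0$, so condition~(3) is satisfied as well.

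The crux, and where I expect the real subtlety to lie, is verifying condition~(2). Suppose toward contradiction that some $C_e$-segment $R^*$ of $\mathcal{Q}$ has both endpoints on $P$ (the $P'$ case is symmetric). Since $P \subseteq \alpha$, both endpoints of $R^*$ lie in the zone of $A_i$, so $R^*$ is a descendant of $A_i$ in the $e$-th segment forest. On the other hand, $P$ is disjoint from the subarc $\beta$ of $\alpha$ running between the two endpoints of $A_{i+1}$, and $\beta$ is precisely the $C_e$-arc that bounds the zone of $A_{i+1}$; hence the endpoints of $R^*$ lie outside the zone of $A_{i+1}$, so $R^*$ is not a descendant of $A_{i+1}$. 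Consequently, the segment-forest path from $A_i$ to $R^*$ begins with a child $R_1$ of $A_i$ distinct from $A_{i+1}$ (possibly $R_1 = R^*$).

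Finally, $R_1$ sits at depth $i + 1 \le j \le m$ in the forest rooted at $U$, so $R_1 \in T$. Since $A_{i+1}$ is the unique child of $A_i$ lying on the tree-path from $U$ to $S$, $R_1$ is not an ancestor of $S$; moreover $R_1 \ne S$ because $R_1$'s depth is at most $j < k$. This produces a segment in $T$ that is distinct from $S$ and not an ancestor of $S$, contradicting the standing hypothesis. Hence condition~(2) must hold, so $A_i$ and $A_{i+1}$ are parallel. Ranging $i$ over $\{0, 1, \ldots, j-1\}$ then yields case~(a), completing the contrapositive and establishing the lemma.
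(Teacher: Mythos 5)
Your proof takes a genuinely different route from the paper's. The paper works directly: it picks an arbitrary ancestor $A$ with $A \not\parallel S$, notes that conditions~(1) and~(3) of the parallelism relation hold automatically for the pair $(A,S)$, and concludes from the failure of condition~(2) that some segment $R$ has both endpoints on $P$ or $P'$; such a segment lies in the zone of $A$ on the far side of $P$ from $S$, hence is not an ancestor of $S$. You instead argue by contraposition and climb the chain $U=A_0,\dots,A_k=S$ one edge at a time, and when condition~(2) fails for a consecutive pair $(A_i,A_{i+1})$ you pass from the violating segment $R^*$ to the first child $R_1$ of $A_i$ along the forest path towards $R^*$. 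This extra step actually buys something the paper's one-line proof omits: it pins $R_1$ to depth $i+1 \le m$, so $R_1$ demonstrably lies in $T$. The paper never verifies that the segment $R$ it exhibits belongs to $T$, which case~(b) requires.

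That said, there is a gap in what your contrapositive delivers. You establish $A_0 \parallel A_1 \parallel \cdots \parallel A_j$, i.e.\ the ancestors of $S$ in $T$ are mutually of one $C_e$-type, but you stop one link short of $S=A_k$ itself. The reading of case~(a) that the paper's proof adopts, and that \Cref{lem:atleasttwo} actually uses, is that these ancestors all have the \emph{same type as $S$}: the downstream argument needs ``$S$ and its three ancestors'' to give four segments of one type, and mutual parallelism of the ancestors alone does not supply the fourth. The fix is to run your argument for $i=k-1$ as well (valid whenever $m\ge k$, which holds in the only application with $m=k=3$): the resulting $R_1$ is a child of $A_{k-1}$ other than $S$, has depth $k\le m$, hence lies in $T$, giving $A_{k-1}\parallel S$ and closing the chain. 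As written, you prove a weaker statement than the one the paper relies on.
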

\begin{proof}
    Let $A$ be an ancestor of $S$ that does not have the same $C_e$-type. Since $A$ is an ancestor, there exist two paths $P$ and $P'$ on $C_e$ connecting an endpoint of $S$ with an endpoint of $A$ that do not pass through the other endpoint of $A$. Since $A$ does not have the same $C_e$ type as $S$ there exists a segment $R$ with both endpoints in $P$ (or both endpoints in $P'$). Such a segment cannot be an ancestor of $S$, proving the claim.
\end{proof}

The two cases of \Cref{lem:eitheror} are depicted in \Cref{fig:either-or}.

\begin{figure}[!htb]
    \centering
    \includegraphics[scale=0.7]{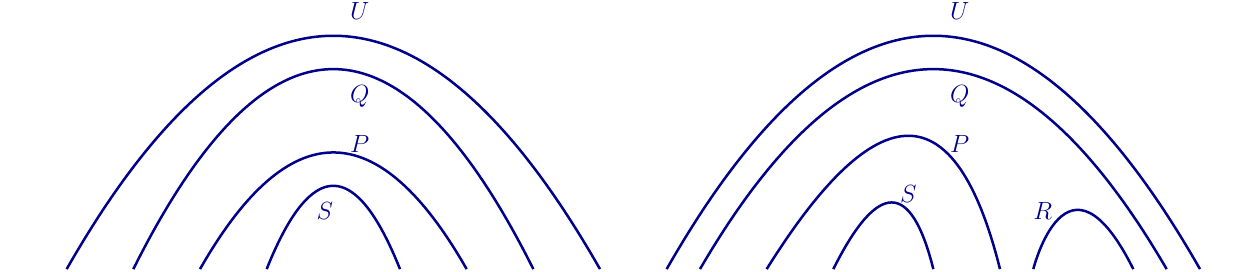}
    \caption{In both subfigures, $S$ is a segment of eccentricity $e$, for some $e$, and $U$ is its $3$-rd generation ancestor. In the left subfigure, all ancestors of $S$ up to $U$ have the same $C_e$-type, whereas in the right subfigure $Q$ does not have the same $C_e$ type, and this is because of the existence of another segment $R$ in the subtree rooted at $U$ that is not an ancestor of $S$ and has eccentricity less than or equal to $e$.}
    \label{fig:either-or}
\end{figure}

\begin{lemma} \label{lem:atleasttwo}
    Consider a CL-configuration $\mathcal{Q}=(\mathcal{C},L)$ of a planar graph $G$. Let $e$ be the eccentricity of a segment $S$. Suppose that $S$ has a $3$-rd  generation ancestor that we denote by $U$. Let $T$ be the width $3$ $e$-subtree rooted at $U$. Then, there exists one other segment  $R \neq S$ that is not an ancestor of $S$ with eccentricity less than or equal to $e$.
\end{lemma}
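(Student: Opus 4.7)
The plan is to obtain the lemma as an immediate consequence of the dichotomy in \Cref{lem:eitheror} combined with the ``at most three segments of the same $C_e$-type'' bound from \Cref{L:main}. Implicit in the setting (and, I believe, intended by the authors) is that $L$ is $\mathcal{C}$-cheap, since \Cref{L:main} is the only available source for such a cardinality bound; I will assume this throughout. The idea is that alternative (a) of \Cref{lem:eitheror} forces four pairwise distinct $C_e$-segments of the same type, which is forbidden, so alternative (b) must hold and yields $R$.

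Concretely, I would name the three intermediate segments on the chain from $U$ down to $S$ in the $e$-th segment forest as $U_1$ (parent of $S$), $U_2$ (grandparent of $S$), and $U_3 = U$ (great-grandparent of $S$); by the definitions of chain and of parent/child these four $C_e$-segments $S, U_1, U_2, U_3$ are pairwise distinct. I then invoke \Cref{lem:eitheror} with $m = 3$ and the $3$-rd generation ancestor $U$ of $S$. In alternative (a), $U_1$, $U_2$, $U_3$ all share the $C_e$-type of $S$, so the set $\{S, U_1, U_2, U_3\}$ is a collection of four $C_e$-segments of a single $C_e$-type; this contradicts \Cref{L:main} under the assumption that $L$ is $\mathcal{C}$-cheap. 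Hence alternative (b) holds, producing a segment $R \neq S$ in $T$ that is not an ancestor of $S$.

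It remains to check the eccentricity bound. Since $R$ lies in the width $3$ $e$-subtree $T$, it is a node of the $e$-th segment forest, i.e., a $C_e$-segment, so in particular $R$ intersects $D_e$ and the corresponding segment of $\mathcal{Q}$ has eccentricity at most $e$, as required.

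The only real obstacle is the implicit appeal to $\mathcal{C}$-cheapness: without it, four same-type ancestors are not ruled out and alternative (a) of \Cref{lem:eitheror} could legitimately occur. Apart from this (and the minor bookkeeping that four generations in a chain give four genuinely distinct $C_e$-segments), the argument is a one-line application of \Cref{lem:eitheror} backed by a counting contradiction with \Cref{L:main}.
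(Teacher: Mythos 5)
Your proof is correct and matches the paper's argument almost verbatim: assume for contradiction that alternative (b) of \Cref{lem:eitheror} fails, deduce via alternative (a) that $S$ and its three ancestors $U_1, U_2, U_3 = U$ share a $C_e$-type, and derive a contradiction with \Cref{L:main} since that gives four segments of one type. You correctly flag the hidden hypothesis: both the lemma statement and the paper's proof omit that the CL-configuration must be convex and $L$ must be $\mathcal{C}$-cheap for \Cref{L:main} to bite, and this is indeed implicitly assumed throughout Section~4.2. Your additional remark justifying the eccentricity bound on $R$ (as a $C_e$-segment it must touch $C_e$, hence has eccentricity at most $e$) is a small but worthwhile elaboration that the paper leaves unstated.
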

\begin{proof}
    Towards a proof by contradiction, suppose that there is no such segment $R$ in $T$, then we are in case (a.) of Lemma~\ref{lem:eitheror}. Therefore, $S$ and its three ancestors have the same $C_e$ type. So, there exist four segments of the same $C_e$-type, which contradicts Lemma~\ref{L:main}.  
\end{proof}
    
\begin{theorem} \label{thm:logarithmic}
    Let $h$ be the height of a segment $S$ and let $N$ be the number of segments in the subtree rooted at $S$  in the $r$-th segment forest of a  CL-configuration $\mathcal{Q}=(\mathcal{C},L)$ of depth $r$. Then, $h \leq \log_a N + \hconstant$, where $a = 2^{\frac{1}{\hconstant}}$. 
\end{theorem}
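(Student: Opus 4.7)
The plan is to prove the bound by strong induction on $h$. The base case $h \leq 3$ is immediate: the subtree contains at least $S$, so $N \geq 1 \geq a^0 \geq a^{h-3}$. For the inductive step with $h \geq 4$, I would fix a leaf of $S$'s subtree at depth $h$ and let $S = P_0, P_1, P_2, P_3, \ldots, P_h$ be the corresponding root-to-leaf path. Since $P_3$ has $S$ as its third-generation ancestor, \Cref{lem:atleasttwo} applied to $P_3$ (with $e$ equal to the eccentricity of $P_3$) yields a segment $R \neq P_3$ in the width-$3$ $e$-subtree rooted at $S$ that is not an ancestor of $P_3$.

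Using convexity of the CL-configuration (each $C_r$-segment meets each inner disk $D_j$ in at most one connected piece, by \Cref{D:CLConvex}), there is an order-preserving correspondence between $C_e$-segments and $C_r$-segments of eccentricity at most $e$. Through this correspondence, $R$ lifts to a $C_r$-segment $R'$ that is a depth-at-most-$3$ descendant of $S$ in the $r$-th segment forest, disjoint from the chain $P_1, P_2, P_3$. Hence, the subtrees of $P_3$ and $R'$ in the $r$-th segment forest are vertex-disjoint. Applying the inductive hypothesis to each, the subtree of $P_3$ has height at least $h-3$ and therefore contains at least $a^{(h-3)-3} = a^{h-6}$ segments, while the subtree of $R'$ contributes at least $a^{h(R') - 3} \geq 1$. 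Since $a^3 = 2$, the target bound is $N \geq 2 \cdot a^{h-6} = a^{h-3}$.

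The main obstacle is the case when $R'$'s subtree has small height, so that the crude sum $a^{h-6} + 1$ does not reach the required $2 \cdot a^{h-6}$ for large $h$. To close this gap, I would iterate \Cref{lem:atleasttwo} along the longest path: applying it at $P_6, P_9, \dots, P_{3\lfloor h/3\rfloor}$ yields successively deeper off-path descendants living in pairwise disjoint nested width-$3$ windows. These branchings are all distinct and, by invoking the inductive hypothesis on each of their subtrees and using $a^3 = 2$ as the doubling factor across every three levels, their contributions compound geometrically. Carefully orchestrating this nested recursion while threading the inductive hypothesis through the forest correspondence constitutes the crux of the argument.
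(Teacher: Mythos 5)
Your strategy mirrors the paper's: induct on height, walk down a longest root-to-leaf chain $S = P_0, P_1, \ldots, P_h$, apply \Cref{lem:atleasttwo} with $P_3$ playing the role of the segment and $S$ playing the role of the third-generation ancestor, and try to extract two disjoint subtrees of height $h - \hconstant$ inside the width-$\hconstant$ subtree rooted at $S$, each contributing $a^{h-6}$ nodes via the induction hypothesis. The paper's proof is, in essence, exactly this sentence: it asserts that \Cref{lem:atleasttwo} furnishes \emph{two} segments of height $H-\hconstant$ in the width-$\hconstant$ subtree rooted at $S$, and doubles.

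The gap you flag, however, is real, and your proposed patch does not close it. \Cref{lem:atleasttwo} hands you a segment $R$ incomparable to $P_3$ (so the two subtrees are disjoint), but gives no lower bound on the height of $R$; the induction hypothesis applied to $R$'s subtree may therefore contribute only a single extra vertex, and the naive count is $a^{h-6}+1$, far short of $2a^{h-6}$. Iterating the lemma along the spine at $P_6, P_9, \ldots$ does not repair this: each window hands you one new off-path segment whose own subtree may again be trivial, so the iteration accumulates only $\Theta(h)$ extra vertices over the whole chain, not the required $a^{h-\hconstant}=2^{(h-\hconstant)/\hconstant}$. For the contributions to ``compound geometrically'' you would need each off-path branch to itself have height comparable to the remaining spine (so that the doubling really does occur inside \emph{every} width-$\hconstant$ window), and nothing in your argument supplies that lower bound on the heights of the $R_j$. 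Your final paragraph in effect acknowledges that this is exactly what is missing. A secondary concern is the asserted order-preserving correspondence between $C_e$-segments and $C_r$-segments of eccentricity at most $e$: the paper never establishes such a correspondence and works directly in the $r$-th segment forest, so this lifting step would need its own justification before you can lean on it.
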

\begin{proof}
        We use induction to prove the claim. For the base case, for every $h \leq \hconstant$, it follows that $h \leq \log_a N + \hconstant$. Now, suppose that the inequality holds true for all segments of height less than $H \geq 4$. We will show that it is also true for segments of height $H$. 
        Let $S$ be a segment of height $H$ and let $N$ be the number of segments in the subtree rooted at $S$. 
        We have $\hconstant = \log_a 2$, which gives 
        \begin{align*}
        H &= \log_a(2) + H - \hconstant \\
          &= \log_a(2 a^{H-6}) + \hconstant 
        \end{align*} 
By induction hypothesis, the subtrees rooted at segments of height $H-\hconstant$ have at least $a^{H-6}$ nodes. Then, by Lemma~\ref{lem:atleasttwo}, there exist at least two segments of height $H-\hconstant$ in the width $\hconstant$ subtree rooted at $S$. Hence, $N \geq 2 a^{H-6}$, proving the claim.
\end{proof}

\begin{corollary}\label{C:irrelevant}
  There exist constants $c_1$ and $c_2$ such that in a  CL-configuration $\mathcal{Q}=(\mathcal{C},L)$ of depth $r$, if the eccentricity of a segment is less than $r - (c_1 \log k + c_2)$, then there exists an irrelevant vertex in $\mathcal{Q}=(\mathcal{C},L)$.
\end{corollary}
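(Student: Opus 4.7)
The plan is to combine Lemma~\ref{L:main} with an adaptation of \cite[Lemma~5]{JCTB} to \tcycle in order to bound the total number of $C_r$-segments of a cheap $T$-loop, then apply Theorem~\ref{thm:logarithmic} to obtain a logarithmic height bound on the $r$-th segment forest, and finally translate that height bound into an eccentricity lower bound using property (iii) of Definition~\ref{D:CLConvex}. From this eccentricity bound, an irrelevant vertex can be read off directly.

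First, since $\mathcal{Q}$ witnesses that $G$ has a $T$-loop, the set of $\mathcal{C}$-cheap $T$-loops is non-empty; let $L^\star$ be one, and assume by a standard rerouting along the cycles of $\mathcal{C}$ (following \cite{JCTB}) that $(\mathcal{C},L^\star)$ is convex. By the adaptation of \cite[Lemma~5]{JCTB} from \dispaths to \tcycle mentioned in Section~\ref{sec:overview}, the number of distinct $C_r$-types of segments of $L^\star$ is at most $C_0 k$ for some absolute constant $C_0$. Combining this with Lemma~\ref{L:main}, which caps the size of each $C_r$-type of a cheap loop at three, the $r$-th segment forest of $(\mathcal{C}, L^\star)$ has at most $N := 3 C_0 k$ vertices.

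Next, Theorem~\ref{thm:logarithmic} bounds the height in the $r$-th segment forest of any segment of $L^\star$ by $\log_a N + 3 \leq c_1 \log k + c_2$ for suitably chosen absolute constants $c_1, c_2$, where $a = 2^{1/3}$. To turn this into an eccentricity bound, take any segment $S$ of eccentricity $e < r$ in $L^\star$ and iterate property (iii) of Definition~\ref{D:CLConvex}: this yields a sequence of segments $S = T_0, T_1, \ldots, T_{r-e}$ such that $T_j$ is in the zone of $T_{j-1}$ and has eccentricity $e+j$. Since each $T_j$ is then a descendant of every earlier $T_i$, each $T_j$ lies on the unique forest path from $S$ to $T_{r-e}$, so this path has length at least $r - e$ in the $r$-th segment forest. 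Hence the subtree rooted at $S$ has height at least $r - e$, and every segment of $L^\star$ satisfies $r - e \leq c_1 \log k + c_2$.

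Finally, suppose $L$ has a segment $S$ with eccentricity $e < r - (c_1 \log k + c_2)$, and pick a vertex $v$ of $S$ lying on $C_e$, which exists by the definition of eccentricity. The bound above forces every segment of $L^\star$ to have eccentricity strictly greater than $e$; since no such segment can meet $V(C_e)$ (in a plane graph, any arc crossing into the interior of $D_e$ must pass through a vertex of $C_e$), we conclude $v \notin V(L^\star)$. Therefore $L^\star$ is a $T$-loop in $G - v$, proving that $v$ is irrelevant. The main obstacle I anticipate is the third step: showing carefully that iterating property (iii) produces segments whose descendant relations compose so that the forest distance from $S$ to the endpoint of the chain is indeed at least $r - e$; relatedly, the assumed convexity of $(\mathcal{C}, L^\star)$ (needed for both Lemma~\ref{L:main} and Definition~\ref{D:CLConvex}(iii)) has to be justified via an explicit convexification step applied to a $\mathcal{C}$-cheap $T$-loop.
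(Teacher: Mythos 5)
Your argument is correct and takes essentially the same route as the paper: adapt \cite[Lemma~5]{JCTB} to bound the number of $C_r$-types by $O(k)$, combine with Lemma~\ref{L:main} to get $O(k)$ segments total, apply Theorem~\ref{thm:logarithmic} to bound the $r$-th segment-forest height by $O(\log k)$, and translate this into irrelevance of a deep vertex. You have merely spelled out the final translation (height bound $\Rightarrow$ eccentricity lower bound on segments of the $\mathcal{C}$-cheap loop $L^\star$ by iterating Definition~\ref{D:CLConvex}(iii), then reading off a vertex on $C_e$ avoided by $L^\star$), which the paper compresses into ``the claim of the corollary follows''; the convexification step you flag as a potential gap is likewise implicit in the paper and is handled in the standard way, noting that rerouting along cycles of $\mathcal{C}$ does not increase cost.
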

\begin{proof}
    From \cite[Lemma 5]{JCTB},  we know that a cheap solution for the \dispaths problem has  $O(k)$ different types of segments, where $k$ is the number of terminal pairs in the \dispaths instance. Note that the above theorem is true irrespective of the choice of the pairing $\{ (s_1,t_1), \ldots,  (s_k,t_k)\}$. In particular, this is true for all pairings which have $t_{i} = s_{i+1}$ for $i\in [k-1]$ and $t_k = s_0$. That is, it is true for all possible pairings that are needed for the \textsc{T-Cycle} problem. Hence, the \textsc{T-Cycle} problem has  $O(k)$ different types of segments.
    
    Since there is a constant number of segments of the same type, \Cref{thm:logarithmic} implies  that the height of the $r$-th segment forest is $O(\log k)$. The claim of the corollary follows.
\end{proof}

As mentioned in Section~\ref{S:intro}, it is easy to get a $2^{O(\sqrt{k}\log k)}\cdot n^2$ time algorithm for \tcycle from here. Indeed,  since all $c\log k$-isolated vertices are irrelevant (Corollary~\ref{C:irrelevant}), we can, as long as possible, compute a $c\log k\times c\log k$ grid minor that does not contain any terminal, and remove its ``middle-most'' vertex. Since each iteration of this procedure requires time $k^{O(1)}\cdot n$, and $O(n)$ iterations are performed in total, this process will take  $k^{O(1)}\cdot n^2$ time in total. Now, since the reduced graph now has treewidth  $O(\sqrt{k}\log k)$ (follows from Proposition~\ref{P:jctb}), a $2^{tw}\cdot n$ time algorithm for \tcycle coupled with above arguments, gives us a $2^{O(\sqrt{k}\log k)}\cdot n^2$ time algorithm for \tcycle. In the following section, we optimize the irrelevant vertices removal and remove ``sufficiently many'' irrelevant vertices in time linear in $n$.

\section{Fast Removal of Irrelevant Vertices} \label{sec:reeddeco}
In this section, we will provide a linear-time algorithm to compute and remove some irrelevant vertices such that the remaining graph has bounded treewidth. Our work builds on techniques developed by Reed~\cite{reedLinear} to solve $k$-\textsc{Realizations} in linear time, which were later used to obtain a linear-time algorithm for \textsc{Planar Disjoint Paths}~\cite{cho2023parameterized}. To this end, we first need the notion of \textit{punctured planes}.

\begin{definition}[{\bf Punctured Plane}]
    A $c${\em-punctured plane} $\boxdot$ is the region obtained by removing $c$ open holes from the plane. The \textit{boundary} of $\boxdot$ is the union of the boundaries of its holes. The vertices on the boundary of $\boxdot$ are called the {\em boundary vertices}.
\end{definition}

Let $(G,T)$ be an instance of \tcycle such that $G$ is a planar graph and $T\subseteq V(G)$ is the set of terminals. Observe that we can get a $k$-punctured plane $\boxdot$ from $G$ by considering each point where a terminal lies as a (trivial) hole.

\medskip
\noindent\textbf{Preliminaries.} A curve on the plane is called \textit{proper} if it intersects the (planar) graph $G$ only at its vertices. For two vertices $u,v \in V(G)$, recall that $\mathsf{d^R}(u,v)$ denotes the radial distance between $u$ and $v$.
Similarly, for $X,Y \subseteq V(G)$, let $\mathsf{d^R}(X,Y) = \min_{x\in X, y\in Y} \mathsf{d^R}(x,y)$, and let $\mathsf{d^R}(x,Y)= \mathsf{d^R}(\{x\},Y)$. 
Recall that there exists some $g(k) \in O(\log k)$ such that each $g(k)$-isolated vertex  is irrelevant (\Cref{C:irrelevant}), which can be explicitly derived from the proof of \Cref{C:irrelevant}. A subgraph $H$ of $G$ embedded on a $c$-punctured plane $\boxdot$ is \textit{nice} if any cycle  separating a vertex $v$ of $H$ from the boundary of $\boxdot$ also separates $v$ from $T$ in $G$.
A vertex $v\in V(H)$ is $\ell$-{\em boundary-isolated in $\boxdot$} if there is a sequence of $\ell$ concentric cycles that separates $v$ from boundary of $\boxdot$.  Observe that if a vertex $v\in V(H)$  is $\ell$-boundary-isolated in $\boxdot$ and $H$ is a nice subgraph, then $v$ is $\ell$-isolated in $G$ (from $T$). 


Reed~\cite{reedLinear} proved the following result, which is an essential component of our algorithm.
\begin{proposition}[Lemma~2 in~\cite{reedLinear}]\label{P:reed}
    Let $H$ be a nice subgraph of $G$ embedded on a $c$-punctured plane $\boxdot$, and let $Y$ be the set of all $g(k)$-boundary-isolated vertices of $H$ in $\boxdot$. If $c\geq3$, then we can compute both a non-crossing proper closed curve $J$ contained in $\boxdot$ and a (possibly empty) subset $X\subseteq Y$ in $O(|V(H)|)$ time such that
    \begin{enumerate}
        \item the number of vertices of $H-X$ intersected by $J$ is at most $6g(k)+6$, and
        \item $\boxdot \setminus J$ contains at most three connected components each of which has less than $c$ holes.  
    \end{enumerate}
\end{proposition}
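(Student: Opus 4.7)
My plan is to follow Reed's~\cite{reedLinear} approach via BFS in the radial graph of $\boxdot$ (two vertices being radially adjacent if they share a face of the embedding of $H$ restricted to $\boxdot$). Since $c \geq 3$, I fix three distinct hole boundaries $\partial_1, \partial_2, \partial_3$ of $\boxdot$ and consider BFS frontiers from each simultaneously. The construction branches on whether these three boundaries are all pairwise radially close or whether some pair is radially far apart. In either branch, $J$ is produced as the boundary of a thin tubular neighbourhood of a small one-dimensional network inside $\boxdot$.

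\textbf{Case 1 (all pairs close).} If $\mathsf{d^R}(\partial_i, \partial_j) \leq 2g(k)+1$ for every $1 \leq i < j \leq 3$, I compute radial shortest paths realising these pairwise distances and glue them into a Steiner-tree-like structure $\Upsilon$ with leaves on $\partial_1, \partial_2, \partial_3$ and total size at most $6g(k)+6$. By a standard planar-topology construction, the boundary $J$ of a thin tubular neighbourhood of $\Upsilon$ in $\boxdot$ is a non-crossing proper closed curve crossing only the vertices of $\Upsilon$. Since $\Upsilon$ connects all three chosen hole boundaries, $\boxdot \setminus J$ has at most three connected components, and each inherits a strict subset of the $c$ holes, hence at most $c-1$ of them. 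Setting $X = \emptyset$ satisfies the proposition.

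\textbf{Case 2 (some pair is far).} Without loss of generality $\mathsf{d^R}(\partial_1, \partial_2) > 2g(k)+1$. Then BFS from $\partial_1$ produces at least $g(k)+1$ topologically outermost cycle-components $C_1, \dots, C_{g(k)+1}$, each separating $\partial_1$ from $\partial_2$, before the frontier ever meets $\partial_2$. Choosing among these a cycle $C$ of minimum length and letting $J$ trace $C$ gives a non-crossing proper closed curve; by a counting argument that balances cycle lengths against total BFS-layer size, the vertices of $C$ that happen to be $g(k)$-boundary-isolated can be absorbed into $X \subseteq Y$, leaving at most $6g(k)+6$ vertices of $H-X$ crossed by $J$. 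Since $C$ separates $\partial_1$ from $\partial_2$, the two components of $\boxdot \setminus J$ each contain strictly fewer than $c$ holes.

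\textbf{Main obstacle and running time.} The delicate step is Case 2: BFS layers in the radial graph need not be connected when $\boxdot$ has multiple holes, so one must consistently take the topologically outermost cycle-component with respect to $\boxdot$, and one must use the niceness of $H$ to argue that these components are genuine separators in $G$ rather than just in $H$. The quantitative balance between which vertices of $C$ go into $X$ and which remain crossings of $J$ also requires care, and this is exactly where the constant $6g(k)+6$ is pinned down. For the running time, both cases can be detected in a single sweep: grow BFS frontiers from $\partial_1, \partial_2, \partial_3$ simultaneously and stop as soon as either pairwise distances $\leq 2g(k)+1$ are certified, or some BFS reaches depth $g(k)+1$ without hitting another hole; this yields the claimed $O(|V(H)|)$ running time.
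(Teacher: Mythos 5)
This proposition is not proved in the paper; it is quoted as Lemma~2 of Reed~\cite{reedLinear} and used as a black box to drive the cut-reduction in \Cref{C:1imp} and \Cref{L:fin}. There is therefore no in-paper proof to compare your sketch against, and a real evaluation would require consulting Reed's argument directly.

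Taking the sketch on its own terms, the key quantitative step in Case~2 has a genuine gap. You absorb into $X$ the vertices of the chosen BFS cycle $C$ that are $g(k)$-boundary-isolated, but $X$ must lie inside $Y$, and a vertex of $C$ at radial depth between $g(k)+1$ and $2g(k)+1$ from $\partial_1$ need not be $g(k)$-boundary-isolated at all: it may be radially close to a different hole $\partial_j$ with $j\notin\{1,2\}$, and then it is simply not in $Y$. Nothing in your appeal to ``a counting argument that balances cycle lengths against total BFS-layer size'' shows that only $6g(k)+6$ such non-$Y$ vertices survive on $C$, and it is precisely this interaction with the remaining holes that Reed's argument has to control. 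Case~1 is also underexplained: the boundary of a tubular neighbourhood of a tree $\Upsilon$ joining three holes is a single closed curve whose removal naively gives two pieces, and the tree merges those three chosen holes into a single boundary component of one piece, so ``each inherits a strict subset of the $c$ holes'' is an assertion rather than a deduction; it needs an explicit topological argument about how the hole count drops on each side.
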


\begin{figure}
    \centering
    \includegraphics[scale=1.2]{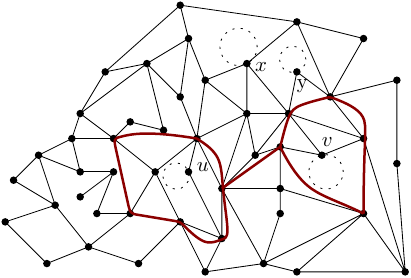}
    \caption{Here, originally $G$ is embedded on a 4-punctured plane where each puncture is illustrated with dotted circle. The cut is illustrated in red, and after the cut reduction, we have three punctured planes: (a.) a $2$-punctured plane with a puncture incident to vertex $u$; (b.) a $2$-punctured plane with a puncture incident to vertex $v$; and (c.) a $3$-puncture plane with punctures incident to vertices $x,y$. For all three punctured planes one of the  punctures is given by the cut.}
    \label{fig:exampleCut}
\end{figure}

In each application of Proposition~\ref{P:reed}, we get at most three components, each of which contains a reduced number of punctures than the input. We call each such application of Proposition~\ref{P:reed} a {\em cut reduction}.  See Figure~\ref{fig:exampleCut} for an illustration of cut reduction. A straightforward corollary of Proposition~\ref{P:reed} is that we can decompose $G$ into $O(k)$  many subgraphs, each embeddable on a $2$-punctured plane with a boundary of size  $O(k \log k)$ in time linear in $n$.
\begin{corollary}\label{C:1imp}
    Let $G$ be a planar graph and $T\subseteq V(G)$ be a set of terminals. Then, in $O(k\cdot n)$ time, after removing some $g(k)$-isolated vertices from $G$, we can decompose $G$ into $O (k)$ nice subgraphs, each embedded on either a $2$-punctured or a $1$-punctured plane, and the total number of boundary vertices in all of these punctured planes, combined, is $O(k \log k)$.
`\end{corollary}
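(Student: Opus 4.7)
The plan is to build the decomposition by iterating Proposition~\ref{P:reed} on a natural initial embedding of $G$ while maintaining niceness as an invariant, and then bound the number of components, boundary vertices, and running time using a potential-function argument.

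First, I would embed $G$ on an initial $k$-punctured plane $\boxdot_0$ by treating each terminal $t\in T$ as a trivial (point) hole, so the set of boundary vertices is exactly $T$. In this configuration, the single current subgraph $G$ is trivially nice: any cycle that separates a vertex $v$ from the boundary of $\boxdot_0$ automatically separates $v$ from $T$ in $G$. Then I would repeatedly apply Proposition~\ref{P:reed}: while some subgraph $H$ in the current decomposition is embedded on a $c$-punctured plane $\boxdot$ with $c\geq 3$, invoke the proposition to obtain a curve $J$ and a set $X$ of $g(k)$-boundary-isolated vertices. By the niceness of $H$, every vertex of $X$ is $g(k)$-isolated in $G$ and hence irrelevant by \Cref{C:irrelevant}; delete $X$ from $G$. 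Splitting $H-X$ along $J$ yields at most three subgraphs, each embedded on a punctured plane with strictly fewer than $c$ holes and contributing at most $6g(k)+6$ new boundary vertices along $J$. One must also verify that niceness is preserved by this operation: any cycle in a child subgraph that separates a vertex $v$ from the new boundary automatically separates $v$ from the parent's boundary, since $J$ itself is part of the child's boundary and is what separates the child from the rest of $\boxdot$, and the inductive niceness of $H$ then yields separation from $T$ in $G$.

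To bound the number of final subgraphs, I would use the potential function $\Phi = \sum_{H} \max(0, c_H - 2)$ summed over the current subgraphs, where $c_H$ denotes the number of holes of $H$'s punctured plane. Initially $\Phi_0 = k - 2$. A cut on a component with $c \geq 3$ holes produces at most three pieces with hole counts $c_1,c_2,c_3$ satisfying $c_i<c$ and $c_1+c_2+c_3\leq c+O(1)$ (the additive $O(1)$ coming from $J$ contributing a new boundary component to each piece). A routine case analysis then gives $\Delta\Phi \leq -1$ per cut, so the total number of cuts is $O(k)$, and hence the number of final subgraphs is $O(k)$, each embedded on a $1$- or $2$-punctured plane. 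Since the initial boundary has $k$ vertices (the terminals) and each of the $O(k)$ cuts contributes at most $6g(k)+6 = O(g(k))$ boundary vertices, the total boundary size is $k + O(k)\cdot O(g(k)) = O(k\log k)$ using $g(k) \in O(\log k)$ from \Cref{C:irrelevant}.

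For the running time, each application of Proposition~\ref{P:reed} on a subgraph $H$ runs in $O(|V(H)|)$ time. A vertex of $G$ lies in at most $O(k)$ subgraphs over the course of the algorithm, since the hole count of the subgraph containing a fixed vertex strictly decreases with each cut applied to it and remains at least $1$. Summing over all cuts yields the claimed $O(k\cdot n)$ total time. The main obstacle I expect is the niceness-preservation step: one must argue carefully, using the topology of the punctured plane and the fact that $J$ is a non-crossing proper closed curve properly contained in $\boxdot$, that a separation inside a child subgraph (whose boundary mixes inherited terminal-holes with the cut curve $J$) lifts to a separation in $G$ from the full terminal set $T$, since this is what justifies removing the $g(k)$-boundary-isolated vertices $X$ as irrelevant at each step.
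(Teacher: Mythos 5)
Your overall approach is the same as the paper's: embed $G$ on a $k$-punctured plane (one trivial hole per terminal), repeatedly invoke Proposition~\ref{P:reed} on any piece with at least $3$ holes while removing the returned $g(k)$-boundary-isolated vertices (which are $g(k)$-isolated in $G$ by niceness, hence irrelevant by \Cref{C:irrelevant}), and bound the number of cuts by $O(k)$. Your niceness-preservation discussion and the bound on new boundary vertices per cut ($6g(k)+6$) are both in line with what the paper uses. The point of divergence is how the $O(k)$ bound on the number of cuts is justified: the paper simply cites Reed's own accounting (Section~3.1 of~\cite{reedLinear}) giving at most $4k+8$ subproblems, whereas you try to give a self-contained potential-function argument.

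There is a genuine gap in that potential argument. With $\Phi = \sum_H \max(0,c_H-2)$ and the only guarantees being ``each piece has fewer than $c$ holes'' and ``the total hole count of the pieces is $c + O(1)$'' (each piece gains one boundary component from $J$), the ``routine case analysis'' does \emph{not} yield $\Delta\Phi \leq -1$. Concretely, if $J$ is a simple closed curve cutting a $c$-punctured piece into two pieces, the hole counts satisfy $c_1 + c_2 = c+2$, and choosing $c=10$, $c_1=9$, $c_2=3$ gives $\Delta\Phi = (9-2)+(3-2)-(10-2)=0$, i.e., $\Phi$ does not strictly decrease. More generally, for a cut producing $p$ pieces with total new holes equal to $p$, the worst-case change is $\Delta\Phi \le 2-p$, which is only nonpositive and can be $0$ for $p=2$. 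So your potential does not by itself bound the number of cuts by $\Phi_0 = k-2$, and the conclusion that the process terminates in $O(k)$ cuts is not established by this argument. The fix is either to cite Reed's Section~3.1 directly (as the paper does) for the $4k+8$ bound on subproblems, or to use a more refined accounting that exploits structural guarantees of the cut curve beyond what is stated in Proposition~\ref{P:reed} (e.g., that each cut strictly decreases the maximum hole count of the piece it is applied to, combined with an argument bounding how many pieces of each hole count can arise). As stated, the ``routine case analysis'' step is the missing idea.
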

\begin{proof}
     We begin with a $k$-punctured planar graph $G$, and in each application of the \textit{cut reduction} procedure of Proposition~\ref{P:reed} on a $c$-punctured plane (for $c>3$), we obtain at most three instances, each  on a $c-1$-punctured plane. As pointed out by Reed in Section~3.1 in~\cite{reedLinear}, a careful analysis shows that we consider at most $4k+8$ subproblems, and hence we apply cut reduction at most $4k+8$ times. Hence, we decompose $G$ into $4k+8 \in O(k)$ nice subgraphs. Next, observe that in each application of the cut reduction procedure of Proposition~\ref{P:reed}, we obtain a proper curve intersecting $H$ in at most $6g(k)+6 \in O(\log k)$ vertices, and hence, in each cut reduction,  $O(\log k)$  vertices of $G$ are marked as boundary vertices. Since we apply cut reduction only $4k+8$ times, the number of total newly introduced boundary vertices are $O(k \log k)$. Finally, since each application of the cut-reduction from Proposition~\ref{P:reed} takes $O(n)$ time and we apply this procedure  $O(k)$ times, the total running time of our procedure takes $O(k\cdot n)$ time.  Our claim follows.
\end{proof}

\medskip
\noindent \textbf{Deleting Irrelevant Vertices on a 1-punctured plane.}
The next component of our algorithm is a procedure to remove all $g(k)$-boundary-isolated vertices from a graph $H$ embedded on a 1-punctured plane in linear time. We note that this approach was also considered in~\cite{cho2023parameterized} to obtain a linear time algorithm for \textsc{Planar Disjoint Paths}.\footnote{In~\cite{cho2023parameterized} and~\cite{reedLinear}, $g(k) \in 2^{O(k)}$. We note that although we fix $g(k)\in O(\log k)$ to ease the presentation, all of our results also generalize for any $g:\mathbb{N}\rightarrow\mathbb{N}$.}
Let $H$ be a nice subgraph of $G$ embedded on a 1-punctured plane $\boxdot$, and let $V_0\subseteq V(H)$ be the set of vertices that lie on the boundary of $\boxdot$. Further, for $i>0$, let $V_i = \{v~|~ v\in V(H) ~\& ~{\mathsf{d^R}}(v,V_0) = i\}$. Note that we can partition $V(H)$ into  $V_0, \ldots, V_{\ell}$ in $O(|V(H)|)$ time  using standard data structures representing an embedding for planar graphs (\textit{doubly connected edge list})~\cite{CGBook}. The following result from~\cite{cho2023parameterized} establishes that we can remove all vertices of $V_{j}$, where  $j > g(k)$.
\begin{proposition}[~\cite{cho2023parameterized}]\label{P:oneFace}
    Let $H$ be a nice subgraph of $G$ embedded on a 1-punctured plane $\boxdot$, and let $V_0$ be the set of boundary vertices of $H$. Moreover, for $i>0$, let $V_i= \{x~|~ x\in V(H)~\&~ \mathsf{d^R}(x,V_0) =i\}$. Then, for $v\in V_i$:
    \begin{enumerate}
        \item $v$ is $(i-1)$-boundary-isolated in $\boxdot$,
        \item and no sequence of $i$ concentric cycles exists in $G$ that separates $v$ and $V_0$.
    \end{enumerate}
\end{proposition}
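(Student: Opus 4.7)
My plan is to prove the two parts separately, handling (2) first since it requires only a short combinatorial argument, and then constructing the cycles for (1) explicitly from face-shells around the hole. For (2), I argue by contradiction: suppose $\mathcal{C}'=(C_0',\ldots,C_{i-1}')$ is a sequence of $i$ concentric cycles that separates $v$ from $V_0$, with $v$ lying strictly inside all of them and $V_0$ lying strictly outside all of them. Because $v\in V_i$, fix a radial-distance witness $v=u_0,u_1,\ldots,u_i=w$ with $w\in V_0$ in which consecutive vertices share a face of $H$. The key planar observation is that no face of $H$ can contain vertices strictly on both sides of any cycle $C_j'$, since the edges of $C_j'$ prevent a single face from straddling the cycle; consequently any step of the sequence that leaves the inside of $C_j'$ must land on a vertex that lies on $C_j'$ itself. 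For each $j\in\{0,\ldots,i-1\}$ let $t_j$ be the smallest index at which $u_{t_j}$ is not strictly inside $C_j'$; the no-straddling observation forces $u_{t_j}\in V(C_j')$, and since $u_0=v$ is strictly inside every cycle while $u_i=w$ is strictly outside every cycle, $t_j\in\{1,\ldots,i-1\}$. Vertex-disjointness of the cycles in $\mathcal{C}'$ makes the indices $t_j$ pairwise distinct, producing $i$ distinct values in a set of size $i-1$, the desired contradiction.

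For (1), I build the $i-1$ concentric cycles $C_0,\ldots,C_{i-2}$ as the outer boundaries of a nested sequence of face-shells. For $j\in\{0,1,\ldots,i-2\}$, let $R_j\subseteq\boxdot$ be the union of the closures of all faces of $H$ that are incident to at least one vertex from $V_0\cup V_1\cup\cdots\cup V_j$. Any vertex $u\in R_j$ satisfies $\mathsf{d^R}(u,V_0)\le j+1$, because $u$ shares a face with some vertex of $V_0\cup\cdots\cup V_j$; hence $v\in V_i$ with $i\ge j+2$ lies strictly outside $R_j$, while $V_0$ lies on the inner (hole) boundary of $R_j$. Since $\boxdot$ is $1$-punctured, $R_j$ is an annular region enclosing the hole, so passing to the connected component of $\boxdot\setminus R_j$ that contains $v$ and reading off its hole-facing boundary produces a simple cycle $C_j$ in $H$ that separates $v$ from $V_0$. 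The containment chain $R_0\subseteq R_1\subseteq\cdots\subseteq R_{i-2}$ then makes $(C_0,\ldots,C_{i-2})$ concentric, which certifies that $v$ is $(i-1)$-boundary-isolated.

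The step I expect to be the main obstacle is extracting a genuine simple cycle $C_j$ from the outer boundary of $R_j$ in (1), since a priori this boundary is only a closed walk and could revisit vertices or split into multiple pieces. The niceness hypothesis on $H$, together with the fact that $\boxdot$ is only $1$-punctured, should make the extraction routine: the hole is enclosed exactly once by any separating simple closed curve, and restricting to the component of $\boxdot\setminus R_j$ containing $v$ isolates a single such curve bounding that component on its hole-facing side. Once this planar-topology point is settled, the remainder of (1) is the explicit containment chain, and (2) rests on the no-straddling observation combined with vertex-disjointness of concentric cycles.
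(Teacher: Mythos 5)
The paper cites Proposition~\ref{P:oneFace} verbatim from Cho, Oh, and Oh~\cite{cho2023parameterized} and provides no proof of its own, so there is no in-paper argument to compare against; I can only assess your reconstruction on its own merits.

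Your argument for part (2) is correct and clean. Given a putative sequence of $i$ vertex-disjoint concentric cycles $C_0',\ldots,C_{i-1}'$ separating $v$ from $V_0$, a radial witness $v=u_0,u_1,\ldots,u_i=w$ with $w\in V_0$ and consecutive vertices on a common face of $H$ must hit each $C_j'$ at a vertex (a face of $H$ bounded by edges of $H$ cannot straddle a simple closed curve consisting of edges and vertices of $H$), and the first-hitting indices $t_j$ are pairwise distinct and lie in $\{1,\ldots,i-1\}$, giving $i$ distinct values in a set of size $i-1$. The one hypothesis you are silently using is that the separating cycles lie in $H$ (not merely in $G$); this is the right reading here, since a cycle that separates $v\in\boxdot$ from the hole boundary $V_0$ must lie in $\boxdot$, hence in $H$, but this deserves an explicit sentence.

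For part (1), the face-shell construction $R_0\subsetneq R_1\subsetneq\cdots\subsetneq R_{i-2}$ is the right idea, the vertex-disjointness and nesting of the resulting boundary sets follow as you argue, and the exclusion $v\notin R_j$ for $j\le i-2$ is correct. The genuine gap is exactly the one you flag yourself: the claim that the hole-facing boundary of the component of $\boxdot\setminus R_j$ containing $v$ is a \emph{simple cycle} of $H$. You appeal to ``niceness'' and to $\boxdot$ being $1$-punctured to dismiss this, but neither does the work. Niceness of $H$ is a statement relating separating cycles in $\boxdot$ to separating cycles of $T$ in $G$; it says nothing about $2$-connectivity of the shells. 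If $H$ has a cut vertex on $\partial R_j$, the topological boundary of the component is a closed walk that revisits that vertex rather than a cycle; what one actually has, a priori, is a separating vertex set contained in $V_{j+1}\cup V_j$, and extracting a genuine simple cycle from it requires an additional argument (e.g.\ passing to the radial graph and applying a planar min-cut / tight-cycle argument, which is presumably what Cho et al.\ do). Since this is precisely where the ``concentric cycles'' in the conclusion come from, the proof of part (1) is not complete until this step is filled in.
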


Hence, we can use Proposition~\ref{P:oneFace} to remove all $g(k)$-boundary-isolated vertices from a 1-punctured plane $\boxdot$ in linear time by computing a partition of $V(H)$ based on radial distance. 

\medskip
\noindent\textbf{Handling 2-punctured planes.} Since Corollary~\ref{C:1imp} helps us to obtain only 2-punctured planes and Proposition~\ref{P:oneFace} is designed to handle 1-punctured planes, we need  either a way to further apply some cut reduction on 2-punctured planes to get 1-punctured planes or directly get a way to remove irrelevant vertices from 2-punctured planes. To this end, we will use the following lemma, the proof of which was provided to us by Cho, Oh, and Oh~\cite{cho2023parameterized} in a personal communication. We provide a proof for the sake of completeness, but we want to stress that the credit of the proof is to Cho, Oh, and Oh.
\begin{proposition}\label{L:2Cycle}
    Let $H$ be a nice subgraph of $G$ embedded on a $2$-punctured plane $\boxdot$. Then, we can remove all $4g(k)$-boundary-isolated vertices from $\boxdot$ in $O(|V(H)|)$ time.  
\end{proposition}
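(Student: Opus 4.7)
The plan is to reduce the 2-punctured setting to the 1-punctured setting handled by Proposition~\ref{P:oneFace}, via a cut along a shortest radial path between the two holes of $\boxdot$. Concretely, I would first run multi-source BFS in the radial graph of $H$ starting from the combined boundary $V_0 = V_0^{(1)} \cup V_0^{(2)}$ of $\boxdot$ to compute the radial distance $\mathsf{d^R}(v, V_0)$ for every $v \in V(H)$ together with a shortest radial path $P$ from $V_0^{(1)}$ to $V_0^{(2)}$. Using the doubly connected edge list representation of the embedding, this step runs in $O(|V(H)|)$ time.

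I would then cut $\boxdot$ along $P$, duplicating the internal vertices of $P$ across the cut, to produce a 1-punctured plane $\boxdot'$ whose boundary is the vertex set $V_0' = V_0 \cup V(P)$. Applying Proposition~\ref{P:oneFace} to $\boxdot'$, I compute radial distances from $V_0'$ in $\boxdot'$ in $O(|V(H)|)$ time and delete every vertex at radial distance strictly greater than $g(k)$ from $V_0'$. Each deleted vertex $v$ is $g(k)$-boundary-isolated in $\boxdot'$; since the concentric cycles witnessing this separate $v$ from $V_0 \subseteq V_0'$ in $\boxdot$ as well, $v$ is $g(k)$-boundary-isolated in $\boxdot$, hence $g(k)$-isolated in $G$ by niceness of $H$, and thus irrelevant by Corollary~\ref{C:irrelevant}.

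The key remaining task is to verify that every $4g(k)$-boundary-isolated vertex of $\boxdot$ is indeed captured by this procedure. For such a vertex $v$, let $C_1 \subsetneq \cdots \subsetneq C_{4g(k)}$ be a witnessing system of concentric cycles bounding disks $D_1 \subsetneq \cdots \subsetneq D_{4g(k)}$, each containing $v$ but neither hole. Since every $C_i$ is contractible in $\boxdot$ and both endpoints of $P$ lie outside every $D_i$, a shortest radial $P$ need not traverse any $D_i$. I plan to use an exchange argument in the radial graph: any excursion of $P$ into some $D_i$ can be replaced by a detour along $C_i$ without increasing the radial length, so a shortest $P$ may be taken disjoint from every $\mathrm{int}(D_i)$. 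Consequently $V(P)$ is disjoint from $\bigcup_i V(C_i)$, all $4g(k)$ cycles persist in $\boxdot'$ as concentric cycles around $v$, and $\mathsf{d^R}_{\boxdot'}(v, V_0') \geq 4g(k) > g(k)$, so $v$ is removed.

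The main obstacle I expect is making the exchange argument both precise in the radial graph and uniform across all $4g(k)$-boundary-isolated vertices simultaneously: we need one shortest radial path $P$ that avoids the disks around every such vertex at once. The key point is that this is really a global property of any shortest radial path---it cannot include a detour through a contractible disk bounded by a cycle of $H$, since replacing the detour with an arc of the bounding cycle yields a path of no greater radial length. Thus any shortest $P$ automatically has the required no-excursion property with respect to every contractible witnessing cycle, which completes the coverage argument and yields the claimed $O(|V(H)|)$-time removal of all $4g(k)$-boundary-isolated vertices.
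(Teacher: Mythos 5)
The central step of your plan is the exchange argument: that any shortest radial path $P$ between the two boundary components can be rerouted so that it avoids the open interior of every disk $D_i$ bounded by a concentric cycle $C_i$. This is false. In the radial graph (adjacency via a shared face), a detour along an arc of $C_i$ is \emph{not} of ``no greater radial length'' than an excursion through $\mathrm{int}(D_i)$. Consider a wheel: a long cycle $C_i$ with a hub vertex $w$ in its interior adjacent to every vertex of $C_i$. The radial distance between two diametrically opposite vertices $p_a, p_b$ of $C_i$ through $w$ is $2$, while the radial distance along either arc of $C_i$ is roughly $|C_i|/2$. So a shortest radial path can be \emph{forced} to pass through $\mathrm{int}(D_i)$, and your claim that ``any shortest $P$ automatically has the required no-excursion property'' does not hold. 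Once the exchange argument fails, the coverage claim that all $4g(k)$-boundary-isolated vertices of $\boxdot$ remain $>g(k)$-boundary-isolated in $\boxdot'$ collapses: the cut $P$ may destroy many of the witnessing cycles around a given $v$.

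This is precisely why the paper's proof does not (and cannot) get away with a single cut along a shortest curve. The paper first takes a shortest simple curve $A$ joining the two holes; if $A$ intersects at most $6g(k)$ vertices, one can cut and apply Proposition~\ref{P:oneFace} directly. Otherwise $A$ may be long and may pass through the interiors of the concentric disks — the only property one can extract from minimality is the much weaker fact that each cycle $C_i$ meets $A$ at most twice. To compensate, the paper introduces a \emph{second} curve $B$, roughly radially equidistant from the two copies $A_1,A_2$ of $A$ in the cut-open plane, cuts along both $A$ and $B$, and then runs a radial-distance contradiction: any vertex $v$ that survives both removals but is $4g(k)$-boundary-isolated in $\boxdot$ would have to be simultaneously close to $A_1$ (hence far from $A_2$) and close to $B$ (hence not far from $A_2$), a contradiction. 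You would need this two-cut structure (or something of comparable subtlety) to make the argument go through; a single cut along a shortest radial path is not sufficient.
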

\begin{proof}
Let $H$ be a nice subgraph of $G$ embedded on a 2-punctured plane $\boxdot$. Let the two boundaries of $\boxdot$ be $H_1$ and $H_2$. We note that we can assume that $H_1$ is contained inside $H_2$ without loss of generality. 

Let $A$ be a shortest simple curve joining a vertex $v_1$ of $H_1$ and a vertex $v_2$ of $H_2$. If $A$ intersects $H$ at most $6g(k)$ vertices, then we can cut along $A$ to get a new 1-punctured plane with  $O(k \log k)$ boundary vertices, and then we can apply Proposition~\ref{P:oneFace} to remove the $g(k)$-isolated vertices. Hence, now we assume that $|A|>6g(k)$.

Now, let us cut along $A$ to get a new punctured plane $\boxdot_A$. Note that the size of boundary of $\boxdot_A$ can be unbounded by any function of $k$. Now, in $\boxdot_A$, curve $A$ corresponds to two curves, say $A_1$ and $A_2$. See Figure~\ref{fig:2face1} for a reference. 
For a vertex $v\in V(H)$, let $\mathsf{d^R_i}(v) = \mathsf{d^R}(v,V(A_i))$, for $i\in [2]$. Let $R$ be the set of vertices in $H$ such that for each vertex $v\in R$, $|\mathsf{d^R_1}(v)-\mathsf{d^R_2}(v)| \leq 1$. It is easy to see that there is a simple curve $B$ in $\boxdot$ connecting $H_1$ and $H_2$ such that $V(B) \subseteq R$.

\begin{figure}
    \centering
    \includegraphics{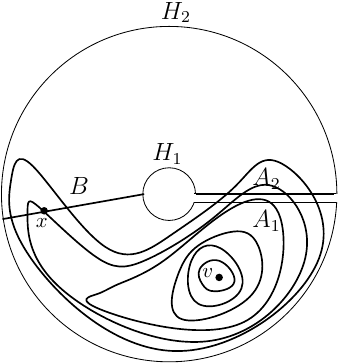}
    \caption{Illustration for the Proof of Proposition~\ref{L:2Cycle}}
    \label{fig:2face1}
\end{figure}
Let $\boxdot_A$ and $\boxdot_B$ be the 1-punctured planes obtained by cutting $\boxdot$ along the simple curves $A$ and $B$, respectively. Now, we apply the following procedure to remove some irrelevant vertices from $H$. 
\begin{enumerate}
    \item Use Proposition~\ref{P:oneFace} on $\boxdot_A$ to remove all $g(k)$-boundary-isolated vertices from $\boxdot_A$ in $H$. 
    \item Use Proposition~\ref{P:oneFace} on $\boxdot_B$ to remove all $g(k)$-boundary-isolated vertices from $\boxdot_B$ in $H$. 
\end{enumerate}

Notice that each vertex removed in the above procedure is a $g(k)$-isolated vertex in $\boxdot$ in $H$ as well (since each vertex on the boundary of $\boxdot$ is also a boundary vertex in $\boxdot_A$ as well as in $\boxdot_B$). But it might so happen that there are still many $g(k)$-boundary-isolated vertices in $\boxdot$ (i.e., the above procedure have not removed all $g(k)$-boundary-isolated vertices from $\boxdot$). We have the following crucial claim which proves that the above procedure removes all $4g(k)$-boundary-isolated vertices from $\boxdot$. 

\begin{claim}
    Let $H'$ be the graph embedded in $\boxdot$ obtained after removing $g(k)$-boundary-isolated vertices from $\boxdot_A$ and then removing $g(k)$-boundary-isolated vertices from $\boxdot_B$. Then, no vertex of $H'$ is $4g(k)$-boundary-isolated in $\boxdot$. 
\end{claim}
\begin{proofofclaim}
    Targeting a contradiction, let $v\in V(H')$ be a vertex that is $4g(k)$-boundary-isolated in $\boxdot$. Then, there exists a vertex $v$ and a sequence of $4g(k)+1$ concentric cycles, say $\mathcal{C} = \{C_0,\ldots C_{4g(k)}\}$, with the following properties:
    \begin{enumerate}
        \item $v \in D_0$ (recall that $D_i = int(C_i)$),
        \item $V(A) \cap D_{g(k)} \neq \emptyset$ and $V(B) \cap D_{g(k)} \neq \emptyset$ (since no vertex in $H'$ is $g(k)$-boundary-isolated in $\boxdot_A$ and in $\boxdot_B$), and 
        \item $(V(H_1) \cup V(H_2)) \cap D_{4g(k)} = \emptyset$.
    \end{enumerate}

    First, we note that each cycle in $\mathcal{C}$ intersects $A$ at most twice, as otherwise, it contradicts the fact that $A$ is a shortest simple curve joining $H_1$ and $H_2$.
    Now, due to (2), the radial distance between $v$ and $A$ is less than $g(k)$, i.e., $\mathsf{d^R}(v,V(A)) < g(k)$. Hence, WLOG, we can assume that $\mathsf{d^R_1}(v) < g(k)$. Therefore, $\mathsf{d^R_2}(v) > 3g(k)$ since to go from $v$ to $A_2$, each radial path needs to cross all the concentric cycles in $\mathcal{C}$ (as no cycle in $\mathcal{C}$ contains $H_1$, no cycle intersects $A$ more than twice). Again, due to (2), we have that the radial distance between $v$ and some vertex, say $x$, of the curve $B$ is at most $g(k)-1$, i.e., $\mathsf{d^R}(v,x) \leq g(k)-1$. Therefore, the radial distance between $x$ and $A_1$ is at most $2g(k)-2$, i.e., $\mathsf{d^R_1}(x) \leq 2g(k)-2$. Thus, since $x$ is a vertex on $B$, $\mathsf{d^R_2}(x) \leq 2g(k)-1$. Hence, $\mathsf{d^R_2}(v)$ is at most the radial distance from $v$ to $x$, and from $x$ to $A_2$, i.e., $\mathsf{d^R_2}(v) \leq 2g(k)-1+g(k)-1 \leq 3g(k)-2$. This contradicts the fact that $\mathsf{d^R_2}(v)>3 g(k)$, completing the proof that no vertex in $H'$ in $\boxdot$ is $4g(k)$-isolated.   
\end{proofofclaim}
Hence, we have a linear-time procedure to remove all $4g(k)$-isolated vertices from $\boxdot$. 
\end{proof}

Due to Corollary~\ref{C:1imp}, we have a decomposition of $G$ into  $O(k)$ subgraphs, each embedded on either a 2-punctured plane or a 1-punctured plane. Moreover, due to Proposition~\ref{P:oneFace} and \Cref{L:2Cycle} none of the $O(k)$ subgraphs contains a vertex inside them that is $4g(k)$-boundary-isolated. Moreover, the total number of vertices on the boundary of these punctured planes is $O (k \log k)$. The final step of our algorithm is to remove some irrelevant vertices from the boundaries of the punctured planes to reach a state where none of the vertices in $G$ is ``too much'' isolated. For this purpose, we begin by proving the following lemma. 
\begin{lemma}\label{L:boundary}
    For $v\in V(G)$, we can decide in $O(n)$ time whether $v$ is $\ell$-isolated (from $T$ in $G$).
\end{lemma}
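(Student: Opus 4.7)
The plan is to characterize $\ell$-isolation of $v$ in terms of the radial distance from $v$ to $T$, and then to compute this radial distance with a single BFS on the radial graph of $G$. Given the planar embedding of $G$ as a doubly connected edge list, I construct in $O(n)$ time the radial graph $R$ of $G$, whose vertex set is $V(G) \cup F(G)$ and whose edges connect each face to the vertices on its boundary. A BFS in $R$ started at $v$ records, for every vertex $u \in V(G)$, the radial distance $\mathsf{d^R}(v,u)$; this BFS runs in $O(n)$ time since $R$ has $O(n)$ vertices and edges. Scanning $T$ and taking the minimum then produces $d := \min_{t \in T} \mathsf{d^R}(v,t)$ in an additional $O(k) \leq O(n)$ time.

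The core claim to establish is the equivalence: $v$ is $\ell$-isolated if and only if $d \geq \ell+1$. Once this is in hand, a direct comparison between $d$ and $\ell+1$ decides the lemma within the overall $O(n)$ budget. The forward direction is the easier one: given $\ell+1$ concentric cycles $C_0, \ldots, C_\ell$ with $v \in D_0$ and $T \cap D_\ell = \varnothing$, any face-walk in $R$ from $v$ to a terminal has to cross each $C_i$ at least once, and each crossing contributes at least one unit to the radial distance, so $d \geq \ell+1$.

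For the reverse direction, I would reinterpret $v$ as the single vertex of a trivial (point-sized) hole, turning the plane into a $1$-punctured plane whose boundary vertex set is $V_0 = \{v\}$; this recovers precisely the level-set decomposition appearing in \Cref{P:oneFace}. If $d \geq \ell+1$, then the layers $V_0, V_1, \ldots, V_\ell$ of that decomposition contain no terminal. Following the proof strategy of \Cref{P:oneFace}(1) in this degenerate boundary setting, the edge-boundaries between consecutive layers can be refined into a sequence of $\ell+1$ concentric cycles $C_0 \subset \cdots \subset C_\ell$ around $v$ with $D_\ell \cap T = \varnothing$, witnessing that $v$ is $\ell$-isolated.

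The main obstacle I expect is the reverse direction, where the layer boundaries need not be simple cycles but may be unions of cycles, so some care is required to select a genuinely nested, simple-cycle refinement. I expect this to go through by essentially the same argument used by Cho, Oh, and Oh to establish \Cref{P:oneFace}, which the paper already treats as a citable ingredient; the only novelty is verifying that the proof is unaffected when the boundary of the punctured plane collapses to the single vertex $v$. Since each of constructing the radial graph, executing the BFS, and locating the nearest terminal is linear in $n$, the $O(n)$ running time follows.
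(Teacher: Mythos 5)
Your proposal is correct and takes essentially the same approach as the paper: a radial BFS from $v$ followed by a comparison of $\min_{t\in T}\mathsf{d^R}(v,t)$ against $\ell+1$. The paper merely states the equivalence between $\ell$-isolation and radial distance as an ``Observe that\dots'' without argument, whereas you sketch both directions (the reverse direction via the BFS-layer construction underlying \Cref{P:oneFace}), so your write-up is if anything more detailed than the paper's.
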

\begin{proof}
    Start a radial breadth-first search (BFS) from $v$. Observe that $v$ is $\ell$-isolated if and only if for each vertex $x\in T$, $\mathsf{d^R}(v,x) > \ell$ (i.e., the radial distance between $x$ and $v$ is greater than $\ell$). Since radial BFS from a vertex can be performed in $O(n)$ time for a planar graph~\cite{CGBook}, our proof is completed. 
\end{proof}

Finally, we have the following lemma.
\begin{lemma}\label{L:fin}
    Given a planar graph $G$ and $T\subseteq V(G)$, we can remove all $5g(k)+1$-isolated vertices in $O(k\log k \cdot n)$ time. 
\end{lemma}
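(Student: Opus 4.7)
The plan is to assemble the three machines already developed—\Cref{C:1imp} (the Reed-style plane cutting), \Cref{L:boundary} (the $O(n)$-time test for $\ell$-isolation of a single vertex), and \Cref{P:oneFace} together with \Cref{L:2Cycle} (the linear-time boundary-irrelevant vertex removal in $\leq 2$-punctured planes)—in a carefully chosen order so that the overall cost is dominated by the check on boundary vertices, and so that an elementary topological argument yields the desired bound on isolation.

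First, I would apply \Cref{C:1imp} to the input $(G,T)$: in $O(k\cdot n)$ time this decomposes $G$ into $O(k)$ nice subgraphs, each embedded in a $1$- or $2$-punctured plane, with at most $O(k\log k)$ boundary vertices in total. Call the resulting graph $G_1$. Next, for each of these $O(k\log k)$ boundary vertices $w$, I would invoke \Cref{L:boundary} on $G_1$ with parameter $\ell = g(k)$ and delete $w$ whenever the test reports that $w$ is $g(k)$-isolated; since each test costs $O(n)$, this stage takes $O(k\log k \cdot n)$ time and leaves no boundary vertex that is $g(k)$-isolated. Call the result $G_2$. Finally, on each piece I would apply \Cref{P:oneFace} for $1$-punctured pieces and \Cref{L:2Cycle} for $2$-punctured pieces to delete every vertex that is $4g(k)$-boundary-isolated within its piece; these subroutines run in time linear in each piece's size, hence $O(n)$ in aggregate. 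Summing the three stages gives the claimed $O(k\log k \cdot n)$ running time. Correctness of each individual removal is immediate: vertices removed at the cutting stage and the boundary-test stage are certified as $g(k)$-isolated in the current graph (hence irrelevant by \Cref{C:irrelevant}), and vertices removed in the last stage are $4g(k)$-boundary-isolated in a nice piece, hence $4g(k)$-isolated in the current graph.

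The heart of the proof is showing that the final graph $G_3$ contains no $(5g(k)+1)$-isolated vertex. Suppose towards contradiction that $v \in V(G_3)$ is $(5g(k)+1)$-isolated in $G_3$, witnessed by concentric cycles $C_0,\dots,C_{5g(k)+1}$ with $v\in D_0$ and $T\cap D_{5g(k)+1} = \emptyset$. Let $\boxdot$ be the piece containing $v$. Because the last stage removed every $4g(k)$-boundary-isolated vertex of $\boxdot$, the vertex $v$ cannot be $4g(k)$-boundary-isolated in $\boxdot$ in $G_3$; in particular, the cycles $C_0,\dots,C_{4g(k)}$ cannot all separate $v$ from the boundary of $\boxdot$, so there must exist a boundary vertex $w$ of $\boxdot$ with $w\in D_{4g(k)}$. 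The tail $C_{4g(k)+1},\dots,C_{5g(k)+1}$ is a sequence of $g(k)+1$ concentric cycles separating $w$ from $T$, which exhibits $w$ as $g(k)$-isolated in $G_3$. Since $G_3 \subseteq G_2 \subseteq G_1$, the very same cycles witness that $w$ was $g(k)$-isolated already in $G_1$; but then $w$ would have been deleted during the boundary-test stage, contradicting $w \in V(G_3)$.

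The main conceptual obstacle is precisely the last paragraph: making sure that the $g(k)$-isolation status of the discovered boundary vertex $w$ is preserved across the intermediate stages of the algorithm. The key observation that unlocks this is monotonicity—deleting vertices only destroys cycles and cannot create new ones—so any sequence of concentric cycles witnessing $g(k)$-isolation in $G_3$ is automatically a witness in $G_1$. Once this monotonicity is in hand, the pigeonhole between the $5g(k)+1$ cycles and the two isolation regimes ($4g(k)$ towards the boundary of $\boxdot$ and $g(k)$ from the boundary to $T$) finishes the argument cleanly and gives exactly the slack needed in the constant $5g(k)+1$.
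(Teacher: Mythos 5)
Your proposal follows the same three-stage ``decompose, trim boundaries, trim interiors'' strategy as the paper, and the concluding contradiction via the pigeonhole between $4g(k)$ cycles to the boundary and $g(k)+1$ cycles to $T$ is exactly the paper's Claim~\ref{C:bd}. Your explicit monotonicity observation---that isolation witnesses in a subgraph are automatically witnesses in any supergraph---is a nice touch: it correctly justifies that Step~3 cannot create new $g(k)$-isolated boundary vertices and hence that the hypotheses of the final claim hold simultaneously at the end, something the paper leaves implicit.

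There is, however, one small but genuine gap. After Step~2, a nice piece $H$ on a punctured plane $\boxdot$ may lose \emph{all} of its boundary vertices (this happens exactly when every boundary vertex of $\boxdot$ is $g(k)$-isolated). For vertices in such a piece, your inference ``$v$ is not $4g(k)$-boundary-isolated in $\boxdot$ $\Rightarrow$ there is a boundary vertex $w\in D_{4g(k)}$'' breaks down, since no boundary vertex exists; moreover, the subroutine behind \Cref{P:oneFace} partitions $V(H)$ by radial distance to $V_0$, which is not well defined when $V_0=\emptyset$. The paper closes this case with an explicit note: if every boundary vertex of a piece is $g(k)$-isolated, then by niceness (equivalently, because the piece becomes disconnected from the rest of $G$ once the boundary is gone) the entire piece is irrelevant and can be deleted wholesale. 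Adding this one sentence to your Step~2 would make the argument complete.
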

\begin{proof}
    First, we use Reed's algorithm (Corollary~\ref{C:1imp}) to decompose $G$ into $O(k)$ nice subgraphs in $O(k \cdot n)$ time such that each of these subgraphs is embedded either on a 2-punctured plane or a 1-punctured plane. Moreover, recall from Corollary~\ref{C:1imp} that the total number of vertices of $G$ are on the boundary of these punctured planes is $O(k\log k)$. Then, we apply Lemma~\ref{L:boundary} on each of these vertices to see if they are $g(k)$-isolated (from $T$) in $G$, and remove  if they are $g(k)$-isolated in $O ( k \log k\cdot n)$ time in total. (Recall that if a vertex is $g(k)$-isolated, then it is irrelevant). Note that if all vertices of the boundary of a punctured plane $\boxdot$ that corresponds to a nice subgraph $H$ (of $G$) are $g(k)$-isolated, then, by definition of nice subgraphs, all vertices of $H$ are irrelevant, and we can remove them. Hence, after this step (removing irrelevant vertices from the boundary), we are left with $O(k)$ nice subgraphs, each embedded on either a 1-punctured or a 2-punctured plane such that none of the vertices on the boundary of these planes is $g(k)$-isolated from $T$ in $G$.

    Next, from each 2-punctured plane (resp., 1-punctured plane)  $\boxdot$, we remove all $4g(k)$-isolated vertices in $\boxdot$ using Lemma~\ref{L:2Cycle} (resp., Proposition~\ref{P:oneFace}). Finally, we have the following crucial claim.

\begin{figure}
    \centering
    \includegraphics[scale=1.0]{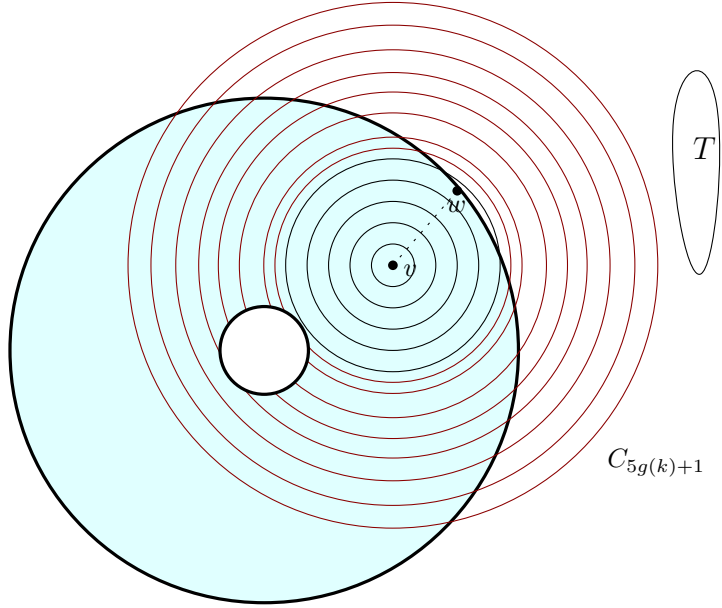}
    \caption{A $2$-punctured plane $\boxdot$ is depicted (in blue) by its two holes illustrated in bold black, and $v$ is a vertex in $\boxdot$ which is $5g(k)+1$-isolated in $G$.  A sequence $C_0,\ldots,C_{5g(k)+1}$ of concentric cycles separating $v$ and $T$ is illustrated such that cycles $C_0,\ldots,C_{4k}$ are illustrated in black and cycles $C_{4g(k)+1},\ldots,C_{5g(k)+1}$ are illustrated in red.}
    \label{fig:C38}
\end{figure}
    
    \begin{claim}\label{C:bd}
        Let $H$ be a nice subgraph of $G$ embedded on a $c$-punctured plane $\boxdot$. Moreover, suppose that none of the vertices on the boundary of $\boxdot$ is $g(k)$-isolated in $G$ and none of the vertices in $H$ is $4g(k)$-boundary-isolated in $\boxdot$. Then, none of the vertices in $H$ is $5g(k)+1$-isolated from $T$ in $G$.  
    \end{claim}
    \begin{proofofclaim}
        Targeting a contradiction, let us assume that there exists a vertex $v\in V(H)$ such that $v$ is $5g(k)+1$-isolated in $G$. See Figure~\ref{fig:C38} for an illustration. Then, there exists a sequence of $5g(k)+2$ concentric cycles, say $C_0,C_1,\ldots, C_{5g(k)+1}$ such that $v\in D_0$ and $T\cap D_{5g(k)+1}= \emptyset$. Further, since $v$ is not $4g(k)$-boundary-isolated in $\boxdot$, there exists a vertex $w$ on the boundary of $\boxdot$ such that $w\in D_{4k}$. But in this case, observe that the concentric cycles $C_{4g(k)+1}, \ldots, C_{5g(k)+1}$ separate $w$ from $T$, contradicting the fact that $w$ is $g(k)$-isolated. Hence, each vertex in $H$ is $5g(k)+1$-isolated in $G$ (from $T$). 
    \end{proofofclaim}


    Due to Claim~\ref{C:bd}, and the removal of all $g(k)$-isolated vertices from the boundaries of punctured planes, and all $4g(k)$-boundary-isolated vertices from the interior of  $2$-punctured planes and $g(k)$-boundary-isolated (and hence, $4g(k)$-boundary-isolated) vertices from the interior of $1$-punctured planes, it follows that none of the remaining vertices in $G$ is $5g(k)+1$-isolated. 
\end{proof}

Now, we are ready to prove the following result.
\TWReduction*
\begin{proof}
    First, we use Lemma~\ref{L:fin} to remove all $5g(k)+1$-isolated vertices from $G$ to obtain an equivalent instance $(G',T)$ such that $G'$ is a subgraph of $G$ (and $T\subseteq V(G')$) in $O(k\log k \cdot n)$ time. Recall that $g(k) \in O(\log k)$. Since there is no vertex in $G'$ that is $c\log k$-isolated from $T$,  for some constant $c$, $G'$ has treewidth at most $9(c\log k+1)\lceil \sqrt{k +1}\rceil$ (Proposition~\ref{P:jctb}), i.e.,  $tw(G') = O (\sqrt{k} \log k)$. 
    
    Now, let  $U$ be the set of all boundary vertices of all $O(k)$ many punctured planes of $G'$ ($|U| \in O(k \log k)$). Then, observe that  there is no sequence of  $4g(k)+1$ concentric cycles in $G'-U$, as otherwise, there is a vertex in some some punctured plane $\boxdot$ of $G'$ that is  $4g(k)$-isolated in $\boxdot$.  Finally, since there is no sequence of $4g(k)$ concentric cycles in $G'-U$, it follows from Proposition~\ref{P:tw} that the treewidth of $G'-U$ is upper bounded by $O(g(k))$, i.e., $O(\log k)$.
\end{proof}

Finally, we present the main result of this section.
\mainFPT*
\begin{proof}
    First, given an instance of \tcycle $(G,T)$, where $G$ is a planar graph, we use \Cref{thm:twreduction} to get an equivalent instance $(G',T)$ of \tcycle such that $G'$ is planar and has treewidth bounded by $O( \sqrt{k} \log k)$ in $k^{O(1)}\cdot n$ time. We can compute a tree decomposition of width $O (\sqrt{k} \log k)$ in $2^{O (\sqrt{k} \log k)}\cdot n$ time using~\cite{treewidthSingle}.  Since we can solve \tcycle in $2^{O(tw)}\cdot n$ time using a dynamic-programming based algorithm on tree decomposition, for example, using~\cite{DBLP:journals/jcss/DornFT12}, we can solve our problem in $2^{O(\sqrt{k}\log k)}\cdot n$ time.  
\end{proof}

We call the algorithm described in this section \textsc{Reed Decomposition Algorithm}.
Below, we summarize the steps of \textsc{Reed Decomposition Algorithm} in pseudocode form for improved readability.

\noindent
{\bf Algorithm} {\sc Reed decomposition algorithm} ($G,T$)
\begin{itemize}
\item Step 1. Given a planar graph $G$ and a set of terminals $T\subseteq V(G)$ with $|T|=k$, decompose $G$ into $O(k)$ nice subgraphs each embedded into either a $2$-punctured  plane or a $1$-punctured plane using \Cref{C:1imp} so that the total number of boundary vertices is $O(k \log k)$. 
\item Step 2. Using \Cref{L:boundary}, determine which of the boundary vertices of $G$ are  $g(k)$-isolated vertices, and remove those that are. If all the boundary vertices of some $\boxdot$, corresponding to $H$, are $g(k)$-isolated, then remove $V(H)$. 

\item Step 3. Using \Cref{P:oneFace}, Remove all the $g(k)$-boundary-isolated vertices from nice subgraphs of $G$ that are embedded in $1$-punctured planes.
\item Step 4.  Using \Cref{L:2Cycle}, remove all the $4 g(k)$-boundary-isolated vertices from nice subgraphs of $G$ that are embedded in $2$-punctured planes.

\item Step 5. Solve the \tcycle problem using \Cref{thm:main1} 
\end{itemize}

\section{Kernelization} \label{sec:kernel}

In this section and the next section, we will prove  {\sc $T$-Cycle} problem admits a kernelization algorithm of size $k \log^{O(1)}k$ that runs in $k^{O(1)}\cdot n$ time. Specifically, we will prove \Cref{thm:main2}.  As a stepping stone to the linear time kernelization algorithm, in this section, we first design a simpler kernelization algorithm that runs in polynomial (but not linear) time. Both algorithms make use of a standard  tool in parameterized complexity called \emph{protrusion decompositions}. Towards the description of this tool, we start with a few standard definitions from \cite{kernelbook}.

\begin{definition}[\textbf{Boundary of a vertex set}]
Given a graph $G$, we define the \emph{boundary} of a vertex set $U \subseteq V(G)$, denoted by $\partial U$, as the set of vertices in $U$ that have at least one neighbor outside of $U$.    
\end{definition}

\begin{definition}[\textbf{Treewidth-$\eta$-modulator}]
    Given a graph $G$, a vertex set $U \subseteq V(G)$ is called a treewidth-$\eta$-modulator if $\tw(G-U) \leq \eta$.
\end{definition}

\begin{definition}[\textbf{Protrusion, protrusion decomposition}]
For a graph $G$ and an integer $q>0$, we say that a vertex subset $U\subseteq V(G)$ is a \emph{$q$-protrusion} if $\tw(U) \leq q$ and $|\partial U| \leq q$.    

A partition $X_0,X_1,\dots,X_\ell$  of vertex set $V(G)$ of a graph $G$ is called an \emph{$(\alpha,\beta,\gamma)$-protrusion decomposition} of a graph $G$ if $\alpha$, $\beta$ and $\gamma$ are integers such that:

\begin{itemize}
    \item $|X_0| \leq \alpha$,
    \item $\ell \leq \beta$,
    \item $X_i$, for every $i \in [\ell]$, is a $\gamma$-protrusion of $G$,
    \item for every $i \in [\ell]$, $N_G(X_i) \subseteq X_0 $.
\end{itemize}
\end{definition}

For every $i \in [\ell]$, we set $B_i = X_i^+ \setminus X_i$, where $X_i^+ = N_G[X_i]$.

Given a treewidth-$\eta$-modulator $S$ of a planar graph $G$, there exists an $(\alpha,\beta,\gamma)$-protrusion decomposition such that $\alpha,\beta$ and $\gamma$ depend only on $\eta$ and $|S|$, as the following theorem shows.

\begin{theorem}[Lemmas 15.13 and 15.14 in \cite{kernelbook}]
\label{thm:protrusioncomp}
    If a planar graph $G$ has a treewidth-$\eta$-modulator $S$, then
$G$ has a set $X_0 \supseteq S$, such that 
\begin{itemize}
\item each connected component of $G-X_0$ has at most $2$ neighbors in $S$
and at most $2\eta$ neighbors in $X_0\setminus S$,
\item $G$ has a $\left(\left(4\left(\eta+1\right)+1\right)\left|S\right|,\left(20(\eta+1)+5\right)\left|S\right|,3\eta+2\right)$-protrusion
decomposition.
\end{itemize}
\end{theorem}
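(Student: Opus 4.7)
The plan is to construct $X_0$ in two conceptually separate phases: a \emph{planarity phase} that enforces the first bullet (at most $2$ neighbors in $S$ per component of $G-X_0$), followed by a \emph{tree-decomposition phase} that enforces the protrusion structure of the second bullet. Afterwards a counting argument controls the number of parts $\ell$.

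\textbf{Phase 1 (planarity).} Consider the auxiliary bipartite graph $\widetilde{H}$ obtained from $G$ by contracting each connected component of $G-S$ into a single vertex; this contraction preserves planarity, so $\widetilde{H}$ is a planar bipartite graph with one side $S$ and the other side equal to the set $\mathcal{C}$ of components of $G-S$. Euler's formula for bipartite planar graphs (at most $2(|S|+|\mathcal{C}|)-4$ edges) implies, by a standard charging, that the components $C\in\mathcal{C}$ with $|N_G(C)\cap S|\ge 3$ contribute at most $O(|S|)$ to $\sum_C |N_G(C)\cap S|$. For each such ``heavy'' component $C$, fix the cyclic order of $N_G(C)\cap S$ induced by the planar embedding around $C$; using the fact that the drawing of $C$ separates the plane into regions incident to consecutive $S$-neighbors, one can pick a set $Y_C\subseteq V(C)$ whose removal splits $C$ into sub-components each touching at most $2$ cyclically consecutive $S$-vertices, with $|Y_C|=O(|N_G(C)\cap S|)$. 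Setting $S' := S \cup \bigcup_C Y_C$, we obtain $|S'\setminus S|=O(|S|)$ and each component of $G-S'$ has at most $2$ neighbors in $S$. Crucially, $\tw(G-S')\le \tw(G-S)\le \eta$.

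\textbf{Phase 2 (tree decomposition into protrusions).} Apply the standard protrusion-decomposition machinery for graphs with a treewidth modulator to $S'$. Compute a nice tree decomposition $(T,\chi)$ of $G-S'$ of width $\eta$. For each vertex $v\in S'$, mark the bags of $T$ containing a neighbor of $v$ in $G-S'$; the total number of marks is $O(\eta\cdot|S'|)$ because each bag has $\le \eta+1$ vertices and each $v\in S'$ contributes at most its degree-into-$G-S'$ worth of marks, but we only care about their spread across $T$. Using the classical ``cutting $T$ at balanced separators weighted by marks'' argument (this is the content of Lemma~15.13 in \cite{kernelbook}), select a set $Z$ of $O(\eta\cdot|S'|)$ vertices to add to $S'$ so that in each resulting connected piece $X_i$ of $G-(S'\cup Z)$: (i) the bags of $T$ containing vertices of $X_i$ induce a subtree, so $\tw(X_i)\le \eta$; (ii) $\partial X_i\subseteq (\chi(b_i)\setminus X_i)\cup N_G(X_i)\cap S'$ for the boundary bag $b_i$, giving $|\partial X_i|\le 3\eta+2$. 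Setting $X_0 := S'\cup Z$ yields $|X_0|\le |S|+c_1|S|+c_2\eta|S|\le (4(\eta+1)+1)|S|$ after tracking constants, and each $X_i$ is a $(3\eta+2)$-protrusion. The ``at most $2$ neighbors in $S$'' property survives because $Z$ only refines the decomposition and $S$ is unchanged, while ``at most $2\eta$ neighbors in $X_0\setminus S$'' follows from $|\partial X_i|\le 3\eta+2$ minus the (at most $2$) $S$-neighbors.

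\textbf{Phase 3 (bounding $\ell$).} The number of pieces $\ell$ equals the number of components of $G-X_0$. By the balanced-separator construction in Phase~2, each ``cut'' in $T$ spawns a bounded number of new pieces whose total count is at most linear in the total mark-weight $O(\eta|S|)$; plugging in the sharper constants from \cite{kernelbook} gives $\ell\le (20(\eta+1)+5)|S|$.

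\textbf{Main obstacle.} Phase~1 is the real crux: achieving the bound $|S'\setminus S|=O(|S|)$ independent of $\eta$ while simultaneously enforcing the strict ``$\le 2$ neighbors in $S$'' property. This is where planarity is essential, whereas Phase~2 is generic and works for any graph class with a treewidth modulator. The delicate part is constructing $Y_C$ inside each heavy component $C$ so that its size is proportional to $|N_G(C)\cap S|$ rather than to $|C|$; this requires exploiting the cyclic order of $N_G(C)\cap S$ around $C$ in the embedding and choosing vertices on a separating cycle/path inside $C$, whose existence follows from a planar menger-type argument applied within $C$.
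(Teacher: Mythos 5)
The paper does not prove this theorem; it cites Lemmas 15.13 and 15.14 of \cite{kernelbook} and implements the construction in Algorithm~\ref{alg:protcomp}, which is the standard single-pass LCA-closure argument: compute a nice tree decomposition of $G-S$, repeatedly mark a lowermost node $v$ such that some component of $G[\chi(T_v - M)]$ has at least three $S$-neighbors, take the LCA-closure $L$ of the marked set, and set $X_0 = S \cup \chi(L)$. Planarity enters only to bound the number of marked nodes by $O(|S|)$ (via the Euler-formula bound on heavy components that you also invoke), after which the $\eta$-dependence of $|X_0|$ comes from the factor $(\eta+1)$ per bag in $\chi(L)$. Both bullet points fall out of this one construction simultaneously.

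Your proposal instead does a two-phase construction, and I see two genuine gaps. First, your Phase~1 claim that each heavy component $C$ can be split into pieces with at most two $S$-neighbors by deleting a set $Y_C$ of size $O(|N_G(C)\cap S|)$, independently of $\eta$, is asserted but not proven; you flag it yourself as the ``main obstacle'' and gesture at ``a planar Menger-type argument'' without supplying it. There is no a priori reason such a separating set should be sized by the interface rather than by the internal width of $C$: a planar separator inside $C$ separating different $S$-attachment arcs can have size $\Theta(\tw(C))$, which is $\Theta(\eta)$. (This still yields $|S'\setminus S| = O(\eta|S|)$, which would suffice for the final bound, but it contradicts the $O(|S|)$ you state and would still need a proof.) Second, Phase~2 departs from the reference in a way that does not close. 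Your marking rule --- mark every bag that contains a neighbor of some $v\in S'$ --- has no $O(\eta|S'|)$ bound; a single $v\in S'$ can have neighbors in $\Omega(n)$ distinct bags. The reference's rule marks lowermost nodes whose subtree spawns a component with $\ge 3$ $S$-neighbors, and it is exactly this rule whose mark count is controlled by planarity. Moreover, you appeal to ``this is the content of Lemma~15.13 in \cite{kernelbook}'' inside the proof of Lemma~15.13, which is circular. The cleaner path is to drop Phase~1 entirely and run the LCA-closure construction on a nice tree decomposition of $G-S$ directly, as in the paper's Algorithm~\ref{alg:protcomp}: the ``$\le 2$ $S$-neighbors'' and ``$\le 2\eta$ neighbors in $X_0\setminus S$'' properties, the $|X_0|$ bound, and the bound on $\ell$ all come out of the same marking argument together with the Euler-formula count on heavy components.
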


\begin{remark} \label{rem:prottime}
    Suppose that we are given a planar graph $G$ with a treewidth-$\eta$-modulator $S$. Based on the proof of \cite[Lemmas 15.13 and 15.14]{kernelbook}, given a tree decomposition of width $O(\eta)$, an $(O(\eta \cdot |S|),$ $ O(\eta \cdot |S|), O(\eta))$-protrusion decomposition can be computed using the pseudocode in Algorithm~\ref{alg:protcomp} in \Cref{sec:ltkernel}.
    
    Instead of computing the treewidth of $G$ exactly, we use an approximation algorithm such as the one described in \cite{treewidthSingle} for computing the tree decomposition in Step 2 of Algorithm~\ref{alg:protcomp}. Thus, Step 2 of  Algorithm~\ref{alg:protcomp} can be executed in $2^{O(\eta)}\cdot n$ time. The remaining Steps of Algorithm~\ref{alg:protcomp}  can be executed in $O({(\eta \cdot |S|)}^{O(1)}\cdot n)$ time.
    Thus, for our purposes, Algorithm~\ref{alg:protcomp} runs in $O(({(\eta \cdot |S|)}^{O(1)}+ 2^{O(\eta)})\cdot n)$ time.
    \end{remark}

Next, we describe some machinery we use from \cite{DBLP:conf/focs/0001Z23}. Following \cite{DBLP:conf/focs/0001Z23}, we say that two graphs $G_1, G_2$ sharing a set of vertices $B$ are {\em $B$-linkage equivalent w.r.t. \dispaths} if for every set of pairs $\mathcal{M} \subseteq B^2$, the instances $(G_1,\mathcal{M})$ and $(G_2,\mathcal{M})$ of \dispaths are equivalent. We now recall a key kernelization result from \cite{DBLP:conf/focs/0001Z23} that will serve as a blackbox for our algorithm.

\begin{theorem}[Theorem 8 in \cite{DBLP:conf/focs/0001Z23}]
\label{thm:outline:polyKer}
Let $G$ be a planar graph of treewidth $\tw$ and $B \subseteq V(G)$ be of size $c$. 
Then, there exists a polynomial time algorithm that constructs a planar graph $G'$ with $B \subseteq V(G')$ such that $|V(G')| = O(c^{12}\tw^{12})$ and $G'$ is $B$-linkage equivalent w.r.t. \dispaths to $G$.
\end{theorem}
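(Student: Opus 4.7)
The plan is to use a tree-decomposition-based protrusion replacement strategy that leverages the planarity of $G$ to keep the number of reduction patterns polynomial rather than exponential in $\tw$. First I would compute a (nice) tree decomposition $(T, \chi)$ of $G$ of width $O(\tw)$, and modify it so that $B$ is contained in the root bag (paying only an additive $c$ in width along the root-$B$ paths). For each node $t$, let $G_t$ be the subgraph induced by vertices appearing in bags of the subtree rooted at $t$, with boundary $\partial G_t = \chi(t)$. The central object is the \emph{linkage signature} of $G_t$: the collection of all pairings $\mathcal{M}$ on $\partial G_t$ that are realizable by vertex-disjoint paths inside $G_t$. Two subtrees with identical signatures are interchangeable from the outside without affecting any \dispaths query anchored at $B$, so I can replace one by the other freely.

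Next I would show that, on a planar graph with boundary of size $s = O(\tw)$, the number of realizable signatures is bounded by a polynomial in $s$. The crux is that vertex-disjoint paths inside a planar ``disk'' must respect a non-crossing structure with respect to the cyclic order in which the boundary vertices lie on a noose. I would formalize this via a sphere-cut refinement of the tree decomposition, giving each bag a natural cyclic ordering around a closed curve in the embedding, and then argue that only non-crossing multiflow patterns arise; on general graphs the analogous count is $\tw^{O(\tw)}$, and only planarity brings it down to $\mathrm{poly}(\tw)$.

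With the polynomial signature bound in place, I would apply protrusion replacement bottom-up: compute the signature of each subtree, and whenever a previously encountered representative exhibits the same signature, splice in the stored canonical small representative (whose size can be taken polynomial in $s$ by a direct construction once the signature is known). Iterating shrinks each protrusion to $\mathrm{poly}(\tw)$ size. The number of protrusions that persist after this process is controlled by the branching structure of the decomposition and by the positions of the $c$ vertices of $B$ (which block unrestricted replacement along the root-$B$ paths), giving at most $O(c \cdot \mathrm{poly}(\tw))$ protrusions in total. Multiplying the number of protrusions by the size of each representative, and adding the overhead for carrying $B$ through the decomposition, yields the target bound $O(c^{12}\tw^{12})$, with the explicit exponent $12$ coming from the product of polynomial bounds arising in the signature count, the representative size, and the decomposition arithmetic.

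The main obstacle will be the polynomial (rather than exponential) bound on the number of distinct linkage signatures of a planar protrusion of boundary size $O(\tw)$. This is where planarity is used decisively: one must show, by a careful topological argument, that signatures differing only by crossings that can be ``rerouted'' along the noose are identifiable, and then count the resulting equivalence classes by a Catalan-like non-crossing combinatorics refined by the bounded-width constraint. A secondary difficulty is algorithmic: computing each signature and the corresponding canonical small representative in polynomial time requires standard but careful dynamic programming over the tree decomposition, together with a \textsc{Minor Containment}-style subroutine to verify that the replacement can be realized as a subgraph/minor while preserving planarity.
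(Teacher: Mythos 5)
The theorem you are attempting to prove is imported into this paper as a black box (Theorem 8 of the cited FOCS 2023 paper) and is not re-proved here, so there is no local proof to compare against; I therefore evaluate the attempt on its own terms. There is a fatal gap at the central step: the claim that the number of realizable linkage signatures of a planar protrusion with boundary of size $s$ is polynomial in $s$ is false, and planarity does not rescue it. Planarity restricts the realizable pairings to non-crossing ones, but the number of non-crossing pairings on $s$ labeled boundary points around a noose is already Catalan-type, hence exponential in $s$, and a signature is a (downward-closed) \emph{family} of such pairings, so the count of distinct signatures is at least as large. Concretely, take a disk whose boundary carries vertices $v_1,\dots,v_s$ in cyclic order and place $s/2$ pairwise-disconnected gadgets $B_1,\dots,B_{s/2}$ inside, where $B_i$ meets the boundary exactly at $v_{2i-1},v_{2i}$ and either does or does not contain an internal $(v_{2i-1},v_{2i})$-path, independently of the others. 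Each of the $2^{s/2}$ choices yields a distinct linkage signature (a pairing is realizable if and only if it consists solely of the demand pairs of the ``connected'' gadgets), so the number of achievable signatures is at least $2^{\Omega(s)}$. Your protrusion-replacement step as written requires a stored canonical small representative for every achieved signature, but with exponentially many signatures some of them cannot even be encoded in polynomial space, so the ``product of polynomial bounds'' in your final size accounting never materializes and the claimed $O(c^{12}\tw^{12})$ bound does not follow.

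The theorem is nonetheless true, but the argument must be of a different character: rather than classifying protrusions by signature and matching them against a polynomial-size table of representatives, one shrinks a single protrusion \emph{in place} to a polynomial-size $B$-linkage-equivalent planar graph via a structural, per-protrusion argument (sphere-cut/noose decompositions, irrelevant-vertex deletion, and homotopy/rerouting arguments specific to planar linkages, together with a bound on how deeply a cheap linkage can penetrate a sequence of concentric cycles). That per-protrusion shrinking, and not any polynomial count of signature equivalence classes, is what underlies the cited result and what the present paper uses as a black box; note in particular that the target graph $G'$ is produced as a minor of the input, which is consistent with in-place shrinking but not with splicing in independently stored representatives. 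To repair your outline you would need to replace the global signature-counting step with such a per-protrusion shrinking lemma.
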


\begin{remark}
    From the proof of Theorem 8 in \cite{DBLP:conf/focs/0001Z23}, it can further be seen that we can construct a graph $G'$ such that $G'$ is a minor of $G$.
\end{remark}

\begin{figure} 
    \centering
\usetikzlibrary{shapes.geometric}
\resizebox {0.6\textwidth} {!} {
\begin{tikzpicture}
[every node/.style={inner sep=0pt}]
\node (2) [circle, minimum size=50.0pt, fill=pink, line width=0.625pt, draw=black] at (75.0pt, -200.0pt) {\textcolor{black}{$\tilde{G}[X_1^+]$}};
\node (3) [circle, minimum size=50.0pt, fill=pink, line width=0.625pt, draw=black] at (150.0pt, -200.0pt) {\textcolor{black}{$\tilde{G}[X_2^+]$}};
\node (4) [circle, minimum size=50.0pt, fill=pink, line width=0.625pt, draw=black] at (250.0pt, -200.0pt) {\textcolor{black}{$\tilde{G}[X_{\ell-1}^+]$}};
\node (5) [circle, minimum size=50.0pt, fill=pink, line width=0.625pt, draw=black] at (325.0pt, -200.0pt) {\textcolor{black}{$\tilde{G}[X_\ell^+]$}};
\node (1) [circle, minimum size=50.0pt, fill=pink, line width=0.625pt, draw=black] at (200.0pt, -100.0pt) {\textcolor{black}{$\tilde{G}[X_0]$}};
\node (6)  at (200.0pt,-200.0pt) {\textcolor{black}{\LARGE $\dots$}};
\draw [line width=0.625, color=black] (1) --   (2);
\draw [line width=0.625, color=black] (1) --  (3);
\draw [line width=0.625, color=black] (1) --  (4);
\draw [line width=0.625, color=black] (1) -- (5);
\node at (120.625pt, -151.875pt) [rotate=38] {\textcolor{black}{ $B_1$}};
\node at (165.625pt, -148.75pt) [rotate=63] {\textcolor{black}{ $B_2$}};
\node at (215.625pt, -150.625pt) [rotate=297] {\textcolor{black}{ $B_{\ell -1 }$}};
\node at (241.875pt, -145.0pt) [rotate=321] {\textcolor{black}{ $B_{\ell}$}};
\end{tikzpicture}
}
\caption{{\sc Kernelization algorithm-I} first finds a graph $\tilde{G}$ such that $(G,T)$ and $(\tilde{G},T)$ are equivalent as T-cycle instances. It then finds a protrusion decomposition of $\tilde{G}$ which is shown in the figure. For each $i \in [\ell]$, $B_i$ is a vertex set that is shared by graphs $G[X_0]$ and $G[X_i^+]$.} \label{fig:kernelone}
\end{figure}
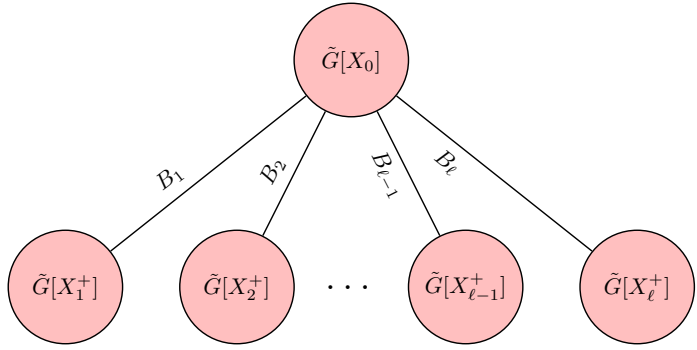

We now provide the pseudocode for a polynomial time kernelization algorithm, which we  use as a springboard for our linear time kernelization algorithm in the next section.

\noindent
{\bf Algorithm} {\sc Kernelization algorithm-I} ($G,T$)
\begin{itemize}
\item  Step 1. From \Cref{thm:twreduction}, we know that there exists a $k^{O(1)}\cdot n$-time algorithm that, given an instance $(G,T)$ of {\sc $T$-Cycle} on planar graphs, outputs an equivalent instance $(\tilde{G},T)$ of {\sc $T$-Cycle} on planar graphs and  a set $U$ such that $|U|\in O(k\log k)$ and the treewidth of $\tilde{G}-U$ is bounded by $O(\log k)$. In other words, $U$ is a treewidth-$O(\log k)$-modulator of graph $\tilde{G}$. Our kernelization algorithm first computes such a graph $\tilde{G}$ and set $U$.

\item Step 2. Next, using Algorithm~\ref{alg:protcomp}, we compute an $(O(k \log^2 k), O(k \log^2 k), O(\log k))$-protrusion decomposition of $\tilde{G}$, which is a partition $\{X_0,\dots,X_\ell\}$ of $V(\tilde{G})$. Let $G_0' = \tilde{G}[X_0]$.

\item Step 3. Finally, using \Cref{thm:outline:polyKer} and \Cref{cor:outline:polyKer}, for every $i \in [\ell]$, we  compute a graph $G_i' $ that is $B_i$-linkage equivalent w.r.t. \dispaths to $\tilde{G}[X_i^+]$, where $X_i^+ = N_{\tilde{G}}[X_i]$. The kernel for the  {\sc $T$-Cycle} instance $(G,T)$ is given by the graph $K$ whose vertex set is given by $V(K) =  \bigcup_{i=0}^\ell V(G_i')$ and whose edge set is given by $E(K) = \bigcup_{i=0}^\ell E(G_i')$.

\end{itemize}  

We refer the reader to \Cref{fig:kernelone} for a schematic depiction of the protrusion decomposition algorithm used in Step 2 of {\sc Kernelization algorithm-I}.

 Towards the goal of showing that {\sc Kernelization algorithm-I} is indeed correct,  we first prove the following corollary.

\begin{corollary} \label{cor:outline:polyKer} 
In {\sc Kernelization algorithm-I}, for every $i \in [\ell]$, we can construct, in polynomial time, a planar graph $G_i'$ with $B_i \subseteq V(G_i')$ such that $|V(G_i')| = O(\log^{24}k)$ and $G_i'$ is $B_i$-linkage equivalent w.r.t. \dispaths to $\tilde{G}[X_i^+]$.
\end{corollary}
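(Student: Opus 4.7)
The plan is to invoke \Cref{thm:outline:polyKer} as a black box on the subgraph $\tilde{G}[X_i^+]$, with the ``boundary'' set taken to be $B_i$. To do so, I first need to verify that the two relevant parameters, namely $|B_i|$ and $\tw(\tilde{G}[X_i^+])$, are both $O(\log k)$; plugging these into \Cref{thm:outline:polyKer} will immediately yield the desired bound $|V(G_i')| = O((\log k)^{12} \cdot (\log k)^{12}) = O(\log^{24}k)$.

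The first bound is immediate from the parameters of the protrusion decomposition produced in Step 2. Since $\{X_0,X_1,\dots,X_\ell\}$ is an $(O(k\log^2 k), O(k\log^2 k), O(\log k))$-protrusion decomposition of $\tilde{G}$, each $X_i$ (for $i\in[\ell]$) is an $O(\log k)$-protrusion, so $|\partial X_i|\leq O(\log k)$. By the fourth condition in the definition of a protrusion decomposition, every vertex of $B_i = X_i^+\setminus X_i$ lies in $X_0$ and is a neighbor of some vertex in $X_i$, so in particular every vertex of $B_i$ is adjacent to a vertex of $\partial X_i$. Since $\tilde{G}$ is planar, there is no $K_{3,3}$ minor; a bit more carefully, I will appeal to the fact recorded in \Cref{thm:protrusioncomp} that each connected component of $\tilde{G}-X_0$ has at most $2\eta$ neighbors in $X_0\setminus U$, to conclude $|B_i|\leq O(\log k)$.

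For the second bound, the plan is to start with an optimal tree decomposition of $\tilde{G}[X_i]$ of width $O(\log k)$ (which exists because $X_i$ is an $O(\log k)$-protrusion) and add every vertex of $B_i$ to every bag. This clearly produces a tree decomposition of $\tilde{G}[X_i^+]$ whose width is at most $\tw(\tilde{G}[X_i]) + |B_i| = O(\log k) + O(\log k) = O(\log k)$, as required.

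With both bounds established, applying \Cref{thm:outline:polyKer} to $\tilde{G}[X_i^+]$ with $B = B_i$ produces a planar graph $G_i'$ with $B_i\subseteq V(G_i')$, of size $O(|B_i|^{12}\cdot \tw(\tilde{G}[X_i^+])^{12}) = O(\log^{24}k)$, that is $B_i$-linkage equivalent w.r.t.~\dispaths to $\tilde{G}[X_i^+]$; the algorithm runs in time polynomial in $|V(\tilde{G}[X_i^+])|$, hence polynomial overall. The only mildly subtle point is the bound on $|B_i|$ (the other steps are bookkeeping), and I expect this to be handled cleanly via the additional structural guarantee in \Cref{thm:protrusioncomp} that every component of $\tilde{G}-X_0$ has $O(\log k)$ neighbors in $X_0$.
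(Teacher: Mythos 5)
Your proposal is correct and takes essentially the same route as the paper: the paper's proof is a two-sentence appeal to Theorem~\ref{thm:outline:polyKer}, asserting (without further elaboration) that $\tw(\tilde{G}[X_i^+])$ and $|B_i|$ are both $O(\log k)$. You supply the justification of those two facts — using the component-neighborhood bound from Theorem~\ref{thm:protrusioncomp} for $|B_i|$ (since all components in $X_i$ share the same neighborhood, this bounds $|B_i|$ directly) and the ``add $B_i$ to every bag'' argument for the treewidth — which is exactly the bookkeeping the paper leaves implicit.
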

\begin{proof}
    Note that $G[X_i^+]$ is a planar graph of treewidth $O(\log k)$ and $B_i \subseteq V(G[X_i^+])$ is also of size $O(\log k)$.
    Applying  \Cref{thm:outline:polyKer}, the claim follows.
\end{proof}

\begin{theorem}
    {\sc Kernelization algorithm-I} correctly computes a kernel for the  {\sc $T$-Cycle} problem on planar graphs in polynomial time.
\end{theorem}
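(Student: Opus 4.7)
The statement bundles three assertions: polynomial running time, kernel size $k\log^{O(1)}k$, and equivalence of $(G,T)$ and $(K,T)$ as \tcycle instances. For the first two (bookkeeping over the invoked subroutines), Step~1 takes $k^{O(1)}\cdot n$ time by Theorem~\ref{thm:twreduction}, returning an equivalent $\tilde G$ together with a treewidth-$O(\log k)$-modulator $U$ of size $O(k\log k)$. Step~2 applies Theorem~\ref{thm:protrusioncomp} via Algorithm~\ref{alg:protcomp} with $\eta=O(\log k)$ and $|S|=O(k\log k)$; by Remark~\ref{rem:prottime} it runs in $k^{O(1)}\cdot n$ time and produces $|X_0|=O(k\log^2 k)$ together with $\ell=O(k\log^2 k)$ protrusions each of treewidth and boundary size $O(\log k)$. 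Step~3 invokes Corollary~\ref{cor:outline:polyKer} once per protrusion; each call runs in time polynomial in $|\tilde G[X_i^+]|$ and outputs $G_i'$ with $|V(G_i')|=O(\log^{24}k)$. Summing, the total running time is polynomial, and $|V(K)|\le|X_0|+\sum_{i}|V(G_i')|=O(k\log^{26}k)=k\log^{O(1)}k$.

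\textbf{Equivalence.} This is the substantive step. By Theorem~\ref{thm:twreduction} it suffices to show $(\tilde G,T)\equiv(K,T)$. I would argue by induction, swapping one protrusion at a time: let $H_0=\tilde G$ and $H_i$ be the graph obtained from $H_{i-1}$ by replacing its induced subgraph on $X_i^+$ with $G_i'$; the inductive step is to show $H_{i-1}$ has a $T$-loop iff $H_i$ does. Given a $T$-loop $L$ in $H_{i-1}$, note that $T\subseteq X_0$ avoids $X_i$ and that $B_i$ separates $X_i$ from the rest of $H_{i-1}$, so the edges of $L$ inside $H_{i-1}[X_i^+]$ form a vertex-disjoint collection of paths with endpoints in $B_i$, inducing a pairing $\mathcal{M}_i\subseteq B_i^2$ that is realizable as a \dispaths solution in $\tilde G[X_i^+]$. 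By $B_i$-linkage equivalence w.r.t.\ \dispaths (Theorem~\ref{thm:outline:polyKer}), $\mathcal{M}_i$ is also realizable in $G_i'$ by vertex-disjoint paths, and splicing these into $L$ yields a closed walk in $H_i$ through all terminals; the reverse direction is symmetric.

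\textbf{Main obstacle.} The crucial subtlety is upgrading the spliced closed walk to a \emph{simple} cycle: an internal vertex of a replacement path in $G_i'$ could coincide with a vertex of $L$ outside $H_{i-1}[X_i^+]$, and such a clash can only occur at a $b\in B_i$ that is visited by $L$ but is not an endpoint of the initial pairing $\mathcal{M}_i$. I plan to preempt this by refining $\mathcal{M}_i$ to a ``full'' pairing that records every $B_i$-vertex touched by $L$: for each $b\in B_i$ internal to some restriction path, split that path at $b$ so that $b$ becomes a shared endpoint of two pairs (permitted by the shared-endpoint convention noted in the footnote of \Cref{S:intro}); and for each $b\in B_i$ that $L$ traverses only through external edges, attach a fresh private pendant $b'$ in both $\tilde G[X_i^+]$ and $G_i'$ to encode $b$ as an endpoint via the pair $(b,b')$, which preserves planarity (and hence the hypotheses of Theorem~\ref{thm:outline:polyKer} on the augmented graphs, since identical pendant attachments are transparent to $B_i$-linkage equivalence). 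The enriched pairing is realizable in the augmented $\tilde G[X_i^+]$ by construction from $L$ and hence, by linkage equivalence, in the augmented $G_i'$ as well; the shared-endpoint and pendant-enforced conditions together rule out any internal collision in the splice, so discarding the pendant edges recovers an honest simple $T$-loop in $H_i$. Iterating over all $\ell$ protrusions completes the equivalence, finishing the correctness proof.
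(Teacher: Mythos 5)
Your overall plan matches the paper's: justify only Step~3, restrict a $T$-loop $L$ to the protrusion $\tilde G[X_i^+]$ to obtain a disjoint-paths pairing on $B_i$, invoke $B_i$-linkage equivalence w.r.t.\ \dispaths to reroute inside $G_i'$, and iterate over all protrusions. The bookkeeping for size and running time is also correct.

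Where you go further is the ``main obstacle'' paragraph, and this is worth commenting on. The collision concern you raise is real: the paper's proof simply asserts that the replacement can be spliced into $L$, without addressing the fact that a rerouted path inside $G_i'$ could pass through a vertex $b\in B_i$ that $L$ also visits via edges entirely outside the protrusion (your ``case 3''), which would destroy simplicity of the resulting closed walk. The paper's argument as written elides this; your instinct to preempt it is correct.

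However, your proposed fix has a flaw. Attaching a fresh pendant $b'$ and using the pair $(b,b')$ is not licensed by $B_i$-linkage equivalence: the guarantee from Theorem~\ref{thm:outline:polyKer} is quantified only over $\mathcal{M}\subseteq B_i^2$, and $b'\notin B_i$, so nothing asserts that the \emph{augmented} $\tilde G[X_i^+]$ and $G_i'$ remain linkage-equivalent w.r.t.\ the enlarged boundary $B_i\cup\{b'\}$. You would either have to add the pendants \emph{before} invoking Theorem~\ref{thm:outline:polyKer} (which is an algorithmic change, and since the offending $b$'s depend on the unknown $L$, you would have to attach a pendant to every $b\in B_i$ preemptively), or re-prove a linkage-equivalence statement for the augmented graphs, neither of which you do. A cleaner fix stays entirely within the existing guarantee: enrich the pairing $\mathcal{M}_i$ by adding the degenerate pair $(b,b)$ for every $b\in B_i$ that $L$ visits purely externally. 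Such pairs lie in $B_i^2$ and are therefore covered by the $B_i$-linkage equivalence as stated; the trivial one-vertex path realizing $(b,b)$ in $\tilde G[X_i^+]$ witnesses realizability there, and by equivalence some vertex-disjoint realization exists in $G_i'$, which (by the shared-endpoint convention) necessarily keeps every such $b$ out of the interior of the other replacement paths. Your splitting of $\mathcal{P}_1$ paths at internally-visited $b\in B_i$ to make $b$ a shared endpoint is fine and is exactly the same idea applied to the other case. With this modification your correctness argument closes the gap more economically than the paper's own terse treatment.
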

\begin{proof} 
We only need to justify Step 3 of {\sc Kernelization algorithm-I}.
Let $\mathsf{SOL}$ be the set of all solutions to an instance $(\tilde{G},T)$ of the {\sc $T$-Cycle} problem. 
Note that for any $i\in [\ell]$ and for some $t,q\geq 0$, a solution $\mathcal{S}^j \in \mathsf{SOL}$ to the {\sc $T$-Cycle} problem   can be written as a union  of paths in $\mathcal{P}_1 = \{S_1, \dots, S_t\}$ and $\mathcal{P}_2 = \{O_1, \dots, O_q\}$ where the endpoints of the paths in  $\mathcal{P}_1 \bigcup \mathcal{P}_2$ are in $B_i$,  the paths in $\mathcal{P}_1$ use edges that belong to $\tilde{G}[X_i]$ and the paths in $\mathcal{P}_2$ use edges outside of $\tilde{G}[X_i]$. Let $M_i^j = \{ \{u_1,v_1\},\dots,\{u_q,v_q\} \}$ be the respective endpoints of the paths in $\mathcal{P}_1$, and let $\mathcal{M}_i = \{M_i^j \mid \mathcal{S}^j \in  \mathsf{SOL}\}$. Then, $\mathcal{M}_i \subset B_i^2$. By construction, $G_i'$ is $B_i$-linkage equivalent w.r.t. \dispaths to $\tilde{G}[X_i^+]$. So, there is also a way to connect all the pairs in $\mathcal{M}_i$ in $G_i'$.
Thus, after replacing each graph $\tilde{G}[X_i^+]$ by an equivalent subgraph $G_i'$, we obtain a new instance of the  {\sc $T$-Cycle} problem whose solution set is non-empty if and only if $|\mathsf{SOL}| \neq 0$. Using \Cref{cor:outline:polyKer}, the claim follows.
\end{proof}

The size of the vertex set $V(K)$ for the kernel $K$ computed by {\sc Kernelization algorithm-I} is $k \log^{O(1)}k$. This follows from \Cref{cor:outline:polyKer} along with the fact that $\ell$ and $X_0$ determined in Step 2 of {\sc Kernelization algorithm-I} are both bounded by $O(k \log^2 k)$.

\section{Linear time Kernelization Algorithm} \label{sec:ltkernel}

Building on the results from the previous section, in this section, we provide a linear time procedure for providing a $k\log^{O(1)} k$ sized kernel. Before we get to the algorithm, we need to adapt the techniques of \Cref{sec:reeddeco} to  $B$-linkage equivalence for \tcycle. Indeed, this is the content of \Cref{C:twrednsubprot}. With an eye towards this goal, we provide some definitions.

\begin{definition}[$\mathcal{M}$-cycle]
     Given a planar graph $G$,  a vertex set $B\subseteq V(G)$, and a set of pairs $\mathcal{M} \subseteq B^2 $, the graph $G_\mathcal{M}$  obtained from $G$ by inserting for every pair $\{ u_i,v_i\} \in \mathcal{M}$ a pair of edges $\{u_iw_i,  w_iv_i\}$ in $G_{\mathcal{M}}$ is called the \emph{$\mathcal{M}$-subdivided graph of $G$}. We say that there is an  $\mathcal{M}$-cycle in $G$ if $G_\mathcal{M}$ has a $T_\mathcal{M}$-loop, where the set of vertices $T_{\mathcal{M}} =  \{w_i | i\in [|\mathcal{M}|]\}$ are called \emph{subdivision vertices}. 
\end{definition}

\begin{definition}[$B$-linkage equivalence w.r.t. \tcycle]
    Let $G_1$ and $G_2$ be two graphs with a common set of vertices $B$. That is, $B\subseteq V(G_1)$ and $B\subseteq V(G_2)$. Then, we say that $G_1$ and $G_2$ are \emph{$B$-linkage equivalent w.r.t. \tcycle} if for every $\mathcal{M} \subseteq B^2 $, there is an $\mathcal{M}$-cycle in $G_1$ implies that there is also an $\mathcal{M}$-cycle in $G_2$ and vice versa.   
\end{definition}

\begin{definition}[$B$-linkage irrelevant vertices]
 Given a planar graph $G$, and a vertex set $B\subseteq V(G)$, we say that a vertex $v \in V(G) \setminus B$ of a graph $G$ is \emph{B-linkage irrelevant} if $G$ and $G - \{v\}$ are $B$-linkage equivalent w.r.t. \tcycle.    
\end{definition}

We now prove a lemma for $B$-linkage irrelevant vertices that is analogous to \Cref{C:irrelevant}  in \Cref{S:depth}.

\begin{lemma} \label{L:linkageirrelevant}
    Given a planar graph $G$, and a vertex set $B\subseteq V(G)$, let $\ell = |B|$. Then, for some $g(\ell) \in O(\log \ell)$, all $g(\ell)$-isolated vertices are $B$-linkage irrelevant. 
\end{lemma}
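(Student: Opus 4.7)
By the definition of $B$-linkage equivalence w.r.t.\ \tcycle and since $v\notin B$, it suffices to show, for every $\mathcal{M}\subseteq B^2$, that $G_\mathcal{M}$ has a $T_\mathcal{M}$-loop iff $G_\mathcal{M}-v$ has one, where throughout ``$g(\ell)$-isolated'' is read with the role of $T$ in Section~\ref{S:prelim} played by $B$. The $(\Leftarrow)$ direction is trivial. Since $|T_\mathcal{M}|=|\mathcal{M}|\le\ell^2$, any bound of the form $O(\log |T_\mathcal{M}|)$ is $O(\log\ell)$, which is consistent with the statement. My plan is to transcribe the proof of Corollary~\ref{C:irrelevant} to $T_\mathcal{M}$-loops in $G_\mathcal{M}$, with one additional technical wrinkle: $G_\mathcal{M}$ may fail to be planar (for instance $\mathcal{M}$ can encode $K_5$ on $B$), while Sections~\ref{S:same}--\ref{S:depth} are stated for planar graphs.

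The fix is to embed $G_\mathcal{M}$ in the planar drawing of $G$ in a way that remains planar inside the concentric disk we use, and only fails to be planar outside it. Let $C_0,\ldots,C_{g(\ell)}$ be the concentric cycles in $G$ witnessing that $v$ is $g(\ell)$-isolated from $B$, so $v\in D_0$ and $D_{g(\ell)}$ contains no vertex of $B$. Every subdivision vertex $w_i$ has both neighbors in $B$, which lies outside $D_{g(\ell)}$; so I place each $w_i$ in the path-connected open complement of $D_{g(\ell)}$ in the plane and route its two incident edges inside this complement. Any crossings in the resulting drawing of $G_\mathcal{M}$ lie outside $D_{g(\ell)}$, and the interior of $D_{g(\ell)}$ is identical to the corresponding region of the planar embedding of $G$; in particular $\mathcal{C}$ still forms a concentric family separating $v$ from $T_\mathcal{M}\cup B$.

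With this drawing fixed, I follow the proof of Corollary~\ref{C:irrelevant} with $T$ replaced by $T_\mathcal{M}$ and show that any cheap $T_\mathcal{M}$-loop avoids $v$. Pick a $T_\mathcal{M}$-loop $L$ in $G_\mathcal{M}$ minimizing $c(L)$ in the sense of Definition~\ref{D:CheapLoop} (and, as in the original proof, taken convex). Every $D_{g(\ell)}$-segment of $L$ lies in the planar disk $D_{g(\ell)}\subseteq G$ with endpoints on $C_{g(\ell)}$, touches neither $T_\mathcal{M}$ nor $B$, and so Observation~\ref{O:three} applies verbatim. The rerouting of Lemma~\ref{L:main} only modifies $L$ inside $D_r$ by inserting arcs along $C_j,C_{j-1}$, so it preserves the portion of $L$ outside $D_{g(\ell)}$ and produces another $T_\mathcal{M}$-loop of strictly smaller cost whenever four segments share a $C_j$-type---contradicting cheapness. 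Hence at most three $D_{g(\ell)}$-segments share any $C_j$-type, and the segment-forest analysis (Lemma~\ref{lem:atleasttwo} and Theorem~\ref{thm:logarithmic}), which is likewise internal to $D_r$, bounds the segment-forest height by $c_1\log|T_\mathcal{M}|+c_2$. The $O(|T_\mathcal{M}|)$ bound on the number of distinct segment types also transfers unchanged, since the restriction of an $\mathcal{M}$-cycle to $G$ is a \dispaths solution for the pairing $\mathcal{M}$, to which \cite[Lemma~5]{JCTB} applies.

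Setting $g(\ell):=c_1\log \ell^2+c_2+1\in O(\log\ell)$ therefore forces every segment of the cheap $L$ to have eccentricity at least $1$, so no segment reaches $D_0$; by a Jordan-curve argument $v\in D_0$ then cannot lie on any segment, hence $v\notin V(L)$ and $L$ is already a $T_\mathcal{M}$-loop of $G_\mathcal{M}-v$. This establishes the $(\Rightarrow)$ direction and the lemma. The main obstacle throughout is the non-planar embedding step of the second paragraph; once one verifies that every quantitative argument in Sections~\ref{S:same}--\ref{S:depth} depends only on the planar disk $D_{g(\ell)}$ inherited from $G$, the remainder is a direct lift of the proof of Corollary~\ref{C:irrelevant}.
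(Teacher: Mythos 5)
Your proof is correct and follows essentially the same strategy as the paper's: apply Corollary~\ref{C:irrelevant} to the instance $(G_\mathcal{M},T_\mathcal{M})$ for each $\mathcal{M}\subseteq B^2$. The paper's one-line proof invokes the corollary ``directly,'' which glosses over the fact that $G_\mathcal{M}$ need not be planar; your observation that all subdivision vertices and their incident edges lie outside $D_{g(\ell)}$, so crossings can be confined to the exterior and every step in Sections~\ref{S:same}--\ref{S:depth} only touches the (unchanged, planar) interior, is a genuine and correct fix that the paper leaves implicit. One small tightening is available on the constant: you bound $|T_\mathcal{M}|=|\mathcal{M}|\le\ell^2$, but for any $\mathcal{M}$ admitting a $T_\mathcal{M}$-loop each vertex of $B$ has degree $2$ on that loop and hence lies in at most two pairs, giving $|\mathcal{M}|\le\ell$; so $g(\ell)=c_1\log\ell+c_2$ as the paper states suffices, though your $O(\log\ell^2)$ choice is of course still in $O(\log\ell)$.
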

\begin{proof}
 Let $c_1$ and $c_2$ be the constants determined by \Cref{C:irrelevant}. For $ g(\ell ) = c_1 \log \ell + c_2$, we want to show that  all   $g(\ell)$-isolated vertices are $B$-linkage irrelevant. Towards a proof by contradiction, assume that this is not the case. Then, there exists an $\mathcal{M} \subseteq B^2$ such that in the $\mathcal{M}$-subdivided graph $G_\mathcal{M}$, there exists a $g(\ell)$-isolated vertex that is not $B$-linkage irrelevant. Applying \Cref{C:irrelevant} directly to the \tcycle instance ($G_\mathcal{M},T_\mathcal{M})$, where $T_\mathcal{M}$ is the set of subdivision vertices, we get a contradiction.
\end{proof}

\begin{remark}
    One way to interpret \Cref{L:linkageirrelevant} is that while the set of irrelevant vertices for  T-cycle instances ($G_{\mathcal{M}_1},T_{\mathcal{M}_1})$  ($G_{\mathcal{M}_2},T_{\mathcal{M}_2})$ for distinct $\mathcal{M}_1,\mathcal{M}_2 \subseteq B^2$  may not be the same, all the  $g(\ell)$-isolated vertices are  $B$-linkage irrelevant.
\end{remark}

Note that \Cref{P:reed,P:oneFace,L:2Cycle} and \Cref{C:1imp} are not specific to the \tcycle problem as such. In particular,  given a planar graph $G$,  and a set of boundary vertices $B\subseteq V(G)$, we can now apply the first three steps of the {\sc Reed Decomposition Algorithm} from \Cref{sec:reeddeco}.

\noindent
{\bf Algorithm} {\sc Reed Decomposition Algorithm for $B$-linkage equivalence} ($G,B$)
\begin{itemize}
\item Step 1. Given a planar graph $G$ and a set of boundary vertices $B\subseteq V(G)$ with $|B|=m $, decompose $G$ into $O(m)$ nice subgraphs each embedded into either a $2$-punctured  plane or a $1$-punctured plane using \Cref{C:1imp} so that the total number of  vertices in the (new) boundary $B' \supset B$ is $O(m \log m)$. 
\item Step 2. Using \Cref{P:oneFace}, remove all the $g(m)$-boundary-isolated vertices from nice subgraphs of G that are embedded in $1$-punctured planes.
\item Step 3. Using \Cref{L:2Cycle}, remove all the $4 g(m)$-boundary-isolated vertices from nice subgraphs of $G$ that are embedded in $2$-punctured planes.
\end{itemize}

As a result of applying {\sc Reed decomposition algorithm for $B$-linkage equivalence}, we have the following corollary.

\begin{corollary} \label{C:twrednsubprot}
    Given a planar graph $G$ and a set of
    vertices $B\subseteq V(G)$ with $|B|=m $, one can compute a set of  vertices $B' \supset B$ such that $|B'| \in O(m \log m)$ and remove all vertices that are (at least) $4 g(m)$-boundary-isolated  in linear time to obtain a graph $\tilde{G}$ that is $B$-linkage equivalent w.r.t. \tcycle to $G$. Moreover, $B'$ is a treewidth-$O(g(m))$-modulator of $\tilde{G}$. 
\end{corollary}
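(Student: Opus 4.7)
The plan is to invoke the \textsc{Reed Decomposition Algorithm for $B$-linkage equivalence} (displayed immediately above the corollary) on input $(G,B)$, treating each vertex of $B$ as a trivial hole of an initial $m$-punctured embedding of $G$. Step~1, via \Cref{C:1imp}, produces in $O(m \cdot n)$ time a collection of $O(m)$ nice subgraphs, each embedded in a $1$-punctured or $2$-punctured plane, whose combined boundaries form the set $B' \supseteq B$ of size $O(m \log m)$. Steps~2 and~3, via \Cref{P:oneFace} and \Cref{L:2Cycle}, delete in linear total time every vertex that is $g(m)$-boundary-isolated in a $1$-punctured piece and every vertex that is $4g(m)$-boundary-isolated in a $2$-punctured piece, yielding the graph $\tilde{G}$.

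For the $B$-linkage equivalence claim, I would show that every deleted vertex is $B$-linkage irrelevant. Because each piece is a nice subgraph and $B \subseteq B'$, any sequence of concentric cycles separating a vertex $v$ from the piece's boundary also separates $v$ from $B$ in $G$. Hence every deleted vertex is $g(m)$-isolated from $B$ in $G$ (a $4g(m)$-isolated vertex is in particular $g(m)$-isolated), and \Cref{L:linkageirrelevant} then identifies it as $B$-linkage irrelevant. Processing the deletions inside-out (so that a vertex's witnessing cycles remain intact while it is being removed) and composing $B$-linkage equivalences via transitivity yields that $\tilde{G}$ is $B$-linkage equivalent to $G$ with respect to \tcycle.

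For the modulator claim, I would verify that $\tilde{G} - B'$ admits no sequence of $4g(m)+1$ concentric cycles. Since distinct pieces are separated by $B'$, any such sequence must lie entirely inside a single piece, witnessing a vertex that is $4g(m)$-boundary-isolated in that piece and still present after Steps~2 and~3---contradicting the construction of $\tilde{G}$. Invoking \Cref{P:tw} (a large grid minor in $\tilde{G} - B'$ would yield a correspondingly deep sequence of concentric cycles), we obtain $\tw(\tilde{G} - B') \in O(g(m))$, so $B'$ is the desired treewidth-$O(g(m))$-modulator.

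The main obstacle is the batch-removal issue: \Cref{L:linkageirrelevant} is a single-vertex statement in a fixed graph, so one must justify that the witnessing concentric cycles of each pending deletion are not destroyed by earlier deletions. This is handled by the fact that each removed vertex lies strictly inside its witnessing cycles in $G$; by ordering removals according to decreasing isolation depth, every application of \Cref{L:linkageirrelevant} is applied to the current graph with all relevant witnesses intact, and transitivity then closes the argument.
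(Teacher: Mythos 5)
Your proposal is correct and matches the approach the paper itself takes: the paper states this corollary immediately after displaying the {\sc Reed Decomposition Algorithm for $B$-linkage equivalence} as a direct consequence of running Steps 1--3 (with $B$ replacing $T$, each $B$-vertex treated as a trivial hole), and your write-up simply fills in the justification the paper leaves implicit. The three ingredients you use---\Cref{C:1imp} for the decomposition and boundary-size bound, \Cref{L:linkageirrelevant} to convert $g(m)$-isolation (via niceness) into $B$-linkage irrelevance, and the ``no deep sequence of concentric cycles in $\tilde{G}-B'$'' argument together with \Cref{P:tw} for the modulator bound---are exactly the ones the paper invokes, paralleling the proof of \Cref{thm:twreduction}. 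Two very small remarks: the step where you appeal to ``$B \subseteq B'$'' for isolation is really just the niceness of the pieces (boundary-isolation in a piece implies isolation from $B$ in $G$), and the batch-removal concern is already resolved order-independently, since the witnessing cycles produced by \Cref{P:oneFace} and \Cref{L:2Cycle} live among vertices that are never removed, so no particular deletion order is needed; your ``inside-out'' ordering works but is not necessary.
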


Next, we provide a detailed description of {\sc Kernelization algorithm-II}.

\noindent
{\bf Algorithm} {\sc Kernelization algorithm-II} ($G,T$)
\begin{itemize}
\item  Step 1. As in the case of {\sc Kernelization algorithm-I}, use the $k^{O(1)}\cdot n$-time algorithm from \Cref{thm:twreduction}, that given an instance $(G,T)$ of {\sc $T$-Cycle} on planar graphs, outputs an equivalent instance $(\tilde{G},T)$  and  a set $U$ such that $|U|\leq O(k\log k)$ and the treewidth of $\tilde{G}-U$ is bounded by $O(\log k)$. That is, the kernelization algorithm first computes such a graph $\tilde{G}$ and a treewidth-$O(\log k)$-modulator $U$ of graph $\tilde{G}$. 

\item Step 2. Using Algorithm~\ref{alg:protcomp}, we compute an $(O(k \log^2 k), O(k \log^2 k), O(\log k))$-protrusion decomposition of $\tilde{G}$, which is a partition $\{X_0,\dots,X_\ell\}$ of $V(\tilde{G})$. Let $G_0' = \tilde{G}[X_0]$. 

\item Step 3.  For every $i\in [\ell]$, use the linear-time algorithm from \Cref{C:twrednsubprot} to obtain a graph $G_i'$  that is  $B_i$-linkage equivalent w.r.t. \tcycle to $\tilde{G}[X_i^+]$  and  a set $B_i' \subseteq V(G_i')$ such that $|B_i'|$ is  $ O(\log k\log \log k)$ and the treewidth of $G_i'-B_i'$ is bounded by $O(\log \log k)$. That is, the kernelization algorithm first computes such a graph $G_i'$ and a treewidth-$O(\log \log k)$-modulator $B_i'$ of graph $G_i'$. 

\item Step 4. For every $i \in [\ell]$, using Algorithm~\ref{alg:protcomp},  compute an $(O(\log k \, \log^2 \log k), O(\log k \, \log^2 \log k),$ $ O(\log \log k))$-protrusion decomposition of $G_i'$, which gives a partition $\{Y_{i,0},\dots,Y_{i,m_i}\}$ of $V(G_i')$. For $i \in [\ell]$, $j \in [m_i]$,  let $G_{ij}'$ be the subgraph of $G_i'$   induced  by vertices in $Y_{i,j}^+ = N_{G_{i}'}[Y_{i,j}]$. That is, let $G_{ij}' = G_i'[Y_{i,j}^+] $. Also, let  $G_{i,0}' = G_i'[Y_{i,0}]$ for $i \in [\ell]$.
Finally, let $B_{i,j} = Y_{i,j}^+ \setminus Y_{i,j}$, for $i \in [\ell]$, $j \in [m_i]$.

\item Step 5. For every $i \in [\ell], j\in [m_i]$, do the following:
\begin{itemize}
    \item Step 5.1 Generate all possible distinct graphs $H$ with $|V(H)| = O(\log^{24} \log k)$ and $B_{i,j} \subseteq V(H)$ in a set $\mathcal{G}_{i,j}$.
    \item Step 5.2 For each graph $H \in \mathcal{G}_{i,j}$,
    \begin{itemize}
        \item Use the dynamic programming in \cite{adlerone,adlertwo} to check if $H$ is a minor of $G_{i,j}'$.
        \item For every pairing of terminals $\mathcal{M}\subseteq B_{i,j}^2$, check if $(H,\mathcal{M})$ and $G_{i,j}',\mathcal{M})$ are equivalent as disjoint path instances, that is, check if the yes / no answers to these instances are the same using the algorithm in \cite{cho2023parameterized}. 
    \end{itemize}

\end{itemize}

    \item Step 6 For every $i \in [\ell], j\in [m_i]$, let $H_{i,j}' \in \mathcal{G}_{i,j}$ denote a graph that is found to be $B_{i,j}$-linkage equivalent w.r.t. \dispaths to $G_{i,j}'$ in Step 5.2.
    The kernel for the \tcycle problem is given by the graph $K'$ where 
    \[ V(K') = V(G_0') \bigcup_{i\in [\ell]} V(G_{i,0}') \bigcup_{\substack{i\in [\ell] \\ j\in [m_i]}} V(H_{i,j}') \text{ and }  E(K') = E(G_0') \bigcup_{i\in [\ell]} E(G_{i,0}') \bigcup_{\substack{i\in [\ell] \\ j\in [m_i]}} E(H_{i,j}'). \]
\end{itemize}  

Please see \Cref{fig:kerneltwo} for a schematic depiction of the nested protrusion decomposition algorithm used in {\sc Kernelization algorithm-II}.  

\begin{remark}
    Note that we use two different notions of $B$-linkage equivalence in Step 3 and Step 6 of {\sc Kernelization algorithm-II} respectively. First, it is each to check that given two  subgraphs $G_1$ and $G_2$ of $G$ with $B \in V(G_1) \cap V(G_2)$, if they  are $B$-linkage equivalent w.r.t. \dispaths, then they are also $B$-linkage equivalent w.r.t. \tcycle. For our purposes, checking only $B$-linkage equivalent w.r.t. \tcycle would suffice in all cases. In Step 6, we check linkage equivalent w.r.t. \dispaths, simply because of the simplicity of formulation.

    Likewise, in Step 3 of {\sc Kernelization algorithm-I}, we replace a graph $\tilde{G}[X_i^+]$ with a $B_i$-linkage equivalent w.r.t. \dispaths graph $G_i'$ because we are using a blackbox. 
\end{remark}

\begin{figure}[!htb] 
    \centering
  \usetikzlibrary{shapes.geometric}
  \resizebox {0.8\textwidth} {!} {
\begin{tikzpicture}
[every node/.style={inner sep=0pt}]
\node (1) [circle, minimum size=60.0pt, fill=pink, line width=0.625pt, draw=black] at (200.0pt, -100.0pt) {\textcolor{black}{$\tilde{G}[X_0]$}};
\node (3) [circle, minimum size=60.0pt, fill=pink, line width=0.625pt, draw=black] at (137.5pt, -200.0pt) {\textcolor{black}{$\tilde{G}[X_2^+] \to G_{2}'$}};
\node (15) [circle, minimum size=40pt, fill=lightgray, line width=0.625pt, draw=black] at (137.5pt, -237.5pt) {\textcolor{black}{$G_2'[Y_{2,0}]$}};
\node (2) [circle, minimum size=60.0pt, fill=pink, line width=0.625pt, draw=black] at (37.5pt, -200.0pt) {\textcolor{black}{$\tilde{G}[X_1^+]$}};
\node (5) [circle, minimum size=60.0pt, fill=pink, line width=0.625pt, draw=black] at (375.0pt, -200.0pt) {\textcolor{black}{$\tilde{G}[X_{\ell}^+]$}};
\node (4) [circle, minimum size=60.0pt, fill=pink, line width=0.625pt, draw=black] at (275.0pt, -200.0pt) {\textcolor{black}{$\tilde{G}[X_{\ell-1}^+]$}};
\node (8) [circle, minimum size=40pt, fill=lightgray, line width=0.625pt, draw=black] at (50.0pt, -325.0pt) {\textcolor{black}{$G_2'[Y_{2,1}^+]$}};
\node (9) [circle, minimum size=40pt, fill=lightgray, line width=0.625pt, draw=black] at (225.0pt, -325.0pt) {\textcolor{black}{$G_2'[Y_{2,m_2}^+]$}};
\node (6) [circle, minimum size=40pt, fill=lightgray, line width=0.625pt, draw=black] at (100.0pt, -325.0pt) {\textcolor{black}{$G_2'[Y_{2,2}^+]$}};
\node (10)  at (200.0pt,-200.0pt) {\textcolor{black}{\LARGE $\dots$}};
\node (11)  at (160pt,-325.0pt) {\textcolor{black}{\LARGE $\dots$}};
\draw [line width=0.625, color=black] (1) to  (2);
\draw [line width=0.625, color=black] (1) to  (3);
\draw [line width=0.625, color=black] (1) to  (4);
\draw [line width=0.625, color=black] (1) to  (5);
\draw [line width=0.625, color=black] (15) to  (8);
\draw [line width=0.625, color=black] (15) to  (9);
\draw [line width=0.625, color=black] (15) to  (6);
\node at (97.5pt, -152.5pt) [rotate=32] {\textcolor{black}{ $B_1$}};
\node at (159.375pt, -148.125pt) [rotate=59] {\textcolor{black}{ $B_2$}};
\node at (249.375pt, -150.625pt) [rotate=306] {\textcolor{black}{ $B_{\ell-1}$}};
\node at (287.5pt, -140.0pt) [rotate=331] {\textcolor{black}{ $B_{\ell}$}};
\node at (86.25pt, -276.25pt) [rotate=45] {\textcolor{black}{ $B_{2,1}$}};
\node at (189.375pt, -276.875pt) [rotate=315] {\textcolor{black}{ $B_{2,m_\ell}$}};
\node at (106.25pt, -286.875pt) [rotate=66] {\textcolor{black}{ $B_{2,2}$}};

\end{tikzpicture}
}
    \caption{{\sc Kernelization algorithm-II} first finds a graph $\tilde{G}$ such that $(G,T)$ and $(\tilde{G},T)$ are equivalent as T-cycle instances. It then finds a protrusion decomposition of $\tilde{G}$ which is shown by the nodes in red. For each $i \in [\ell]$, $B_i$ is a vertex set that is shared by graphs $\tilde{G}[X_0]$ and $\tilde{G}[X_i^+]$. For every $i\in [\ell]$, we  obtain a graph $G_i'$  that is  $B_i$-linkage equivalent w.r.t. \tcycle to $\tilde{G}[X_i^+]$. For clarity, we show this transition only in the node of $\tilde{G}[X_2^+]$. Next, for every $i\in [\ell]$, we find the protrusion decomposition of $G_i'$. In the figure we only show the protrusion decomposition of $G_2'$ with nodes in grey. For each $j \in [m_2]$, $B_{2,j}$ is a vertex set that is shared by graphs $G_2'[Y_{2,0}]$ and $G_2'[Y_{2,j}^+]$.} \label{fig:kerneltwo}
\end{figure}
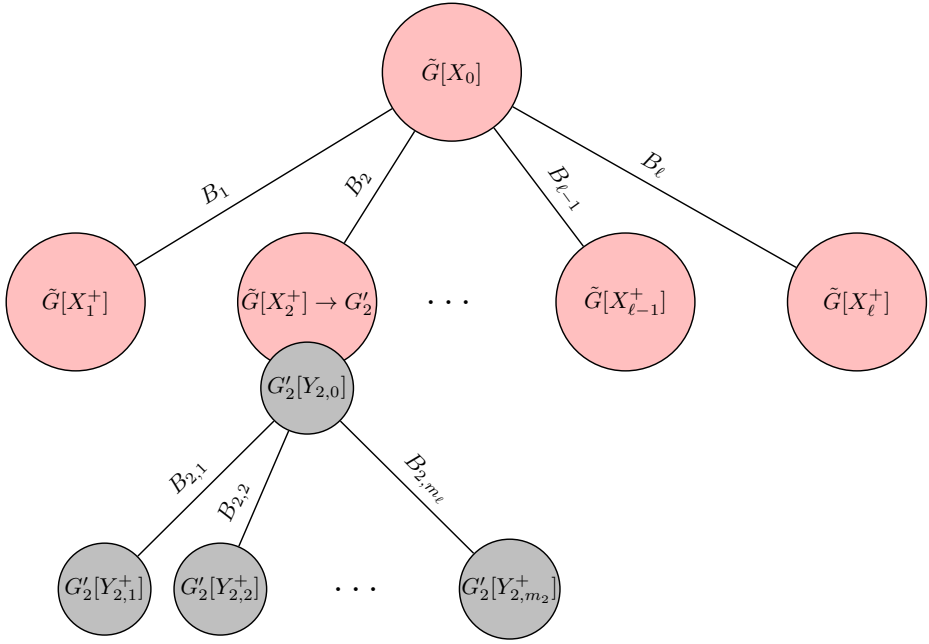

Steps 1 and 2 of {\sc Kernelization algorithm-II} are in common with Steps 1 and 2 of  {\sc Kernelization algorithm-I}. We now prove that Step 3 of {\sc Kernelization algorithm-II} is also correct.

\begin{lemma} \label{lem:linkirrelevant}
       The vertices in $\tilde{G}[X_i^+]$ that are $B_i$-linkage-irrelevant in Step 3 of {\sc Kernelization algorithm-II} are also irrelevant for the \tcycle problem in $G$. 
\end{lemma}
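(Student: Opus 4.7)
The plan is to show that a $T$-loop in $\tilde G$ can be converted into one in $\tilde G - v$, which suffices by the equivalence $(G,T)\equiv(\tilde G,T)$ from Step 1 together with $\tilde G - v \subseteq G - v$. I will exploit two properties of the protrusion decomposition of Step 2: by enlarging the modulator in Step 1 to include $T$, one can arrange $T \subseteq X_0$, whence $T \cap X_i = \emptyset$; and $N_{\tilde G}(X_i)\subseteq B_i$, so every $\tilde G$-walk crosses the boundary of $X_i^+$ only at vertices of $B_i$.

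First I would fix a $T$-loop $L$ in $\tilde G$ and decompose it by cutting at every $B_i$-vertex on $L$. Each resulting arc has both endpoints in $B_i$, and by the second property above its internal vertices lie either entirely in $X_i$ (an \emph{inside arc}) or entirely in $V(\tilde G)\setminus X_i^+$ (an \emph{outside arc}). Let $\mathcal{P}_1 = \{S_1,\ldots,S_t\}$ denote the inside arcs and $\mathcal{P}_2 = \{Q_1,\ldots,Q_t\}$ the outside arcs, and set $\mathcal{M} \subseteq B_i^2$ to be the set of endpoint pairs of arcs in $\mathcal{P}_2$.

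Next, I would substitute each $Q_j$ by the virtual two-edge path $u_j - w_j - v_j$ in $\tilde G[X_i^+]_{\mathcal{M}}$, converting $L$ into a simple cycle that passes through every subdivision vertex. Thus $\tilde G[X_i^+]$ admits an $\mathcal{M}$-cycle, with the arcs of $\mathcal{P}_1$ serving as the required $\tilde G[X_i^+]$-paths between consecutive virtual traversals. Since $v\in X_i$ is $B_i$-linkage irrelevant in $\tilde G[X_i^+]$, the graph $\tilde G[X_i^+] - v$ also admits an $\mathcal{M}$-cycle, witnessed by a family $\mathcal{P}_1'$ of internally disjoint paths in $\tilde G[X_i^+] - v$ whose $B_i$-endpoints match those of $\mathcal{P}_1$. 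Reversing the substitution, by replacing each virtual path by its original $Q_j$, and gluing $\mathcal{P}_1'$ with the unaltered $\mathcal{P}_2$ produces a cycle $L'$ in $\tilde G - v$, which is simple because the new inside paths are internally disjoint, live entirely in $\tilde G[X_i^+]$, and meet the outside arcs $\mathcal{P}_2 \subseteq \tilde G - X_i$ only at the common $B_i$-endpoints.

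The last step is to verify that $L'$ covers $T$. Terminals lying inside outside arcs are preserved since $\mathcal{P}_2$ is retained; no terminal lies in the interior of an inside arc because those interiors are in $X_i$ and $X_i\cap T=\emptyset$; and $B_i$-endpoints of arcs are exactly the endpoints of pairs in $\mathcal{M}$ and so are forced onto every $\mathcal{M}$-cycle, since both edges incident to each subdivision vertex must be used. The main technical subtlety I anticipate is ensuring that every terminal $t \in T\cap B_i$ lying on $L$ is an $\mathcal{M}$-endpoint even when $t$'s two $L$-neighbours both lie in $X_i^+$; this is handled by invoking the bound $|B_i \cap T|\leq 2$ from \Cref{thm:protrusioncomp} and augmenting $\mathcal{M}$ with at most two forcing pairs at such terminals, after which the $B_i$-linkage equivalence applied to the augmented $\mathcal{M}$ still delivers the desired reconstruction.
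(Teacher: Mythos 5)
Your approach matches the paper's: decompose the $T$-loop into arcs lying inside the protrusion and arcs lying outside, read off a matching $\mathcal{M}\subseteq B_i^2$ from the boundary endpoints, invoke $B_i$-linkage equivalence of $\tilde G[X_i^+]$ and $\tilde G[X_i^+]-v$ to reroute the inside portion, and glue back. The paper's own proof is a one-paragraph version of exactly this argument.

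Where you differ is in going carefully through the verification that the glued cycle covers $T$, and there you have correctly spotted a subtlety that the paper's terse proof does not address: a terminal $t\in T\cap B_i$ whose two $L$-neighbours both lie in $X_i$ is \emph{not} an endpoint of any outside arc, hence not an endpoint of any pair in $\mathcal{M}$, and the $\mathcal{M}$-cycle in $\tilde G[X_i^+]-v$ delivered by $B_i$-linkage equivalence is under no obligation to visit $t$. This is a real issue. However, your proposed repair --- ``augmenting $\mathcal{M}$ with at most two forcing pairs at such terminals'' --- does not work as stated under the paper's $\mathcal{M}$-cycle formalism. A degenerate pair $\{t,t\}$ produces the digon $t\,$--$\,w\,$--$\,t$, so any $T_\mathcal{M}$-loop through $w$ is that digon and cannot be extended. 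Adding genuine pairs $\{t,x\}$, $\{t,y\}$ instead forces the $\mathcal{M}$-cycle to use both virtual edges $tw_1$, $tw_2$ at $t$; but then, after undoing the subdivisions, the surviving inside paths no longer attach to $t$, while the unaltered outside arcs $\mathcal{P}_2$ do not attach to $t$ either (by hypothesis $t$'s neighbours on $L$ were in $X_i$), so there is nothing to glue at $t$ and the reconstruction breaks. The correct fix is to strengthen the equivalence itself: quantify not only over $\mathcal{M}\subseteq B^2$ but also over a mandatory subset $T_B\subseteq B$ (with $T\cap B_i$ in mind), and require the $T_{\mathcal M}$-loop to pass through $T_B$ as well. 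The rerouting argument behind Corollary~\ref{C:irrelevant} and Lemma~\ref{L:linkageirrelevant} is insensitive to adding a bounded number of mandatory terminals, so this works, but it needs to be said. As your proposal stands, you have found the gap but have not closed it; and since the paper's own proof glosses over the same point, this is a genuinely useful observation rather than a misreading of an obvious step.
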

\begin{proof}
    This follows from the fact that  a solution to the {\sc $T$-Cycle} problem in $G$  can be written as a union  of paths in $\mathcal{P}_1 = \{S_1, \dots, S_t\}$ and $\mathcal{P}_2 = \{O_1, \dots, O_q\}$ where the endpoints of the paths in  $\mathcal{P}_1 \bigcup \mathcal{P}_2$ are in $B_i$,  the paths in $\mathcal{P}_1$ use edges that belong to $\tilde{G}[X_i]$ and the paths in $\mathcal{P}_2$ use edges outside of $\tilde{G}[X_i]$. The paths in $\mathcal{P}_2$ induce a matching $\mathcal{M} \subset B_i^2$. If a vertex $v$  is $B_i$-linkage-irrelevant, then there exists a way to connect the pairs in $\mathcal{M} $ without using the vertex $v$. Hence, there is a solution to the  {\sc $T$-Cycle} problem that does not use $v$, proving the claim.
\end{proof}

In the following lemma, we show that the instance $K'$ computed by {\sc Kernelization algorithm-II} is indeed a kernel. We end the section with \Cref{thm:main2} where we provide bounds on the size of the kernel and show that it runs in linear time. 

\begin{lemma} \label{lem:intercorrect}
       After the conclusion of Step 6 of {\sc Kernelization algorithm-II}, the {\sc $T$-Cycle} instance $(G,T)$ is equivalent to the {\sc $T$-Cycle} on the graph $K'$  where 
    \[ V(K') = V(G_0') \bigcup_{i\in [\ell]} V(G_{i,0}') \bigcup_{\substack{i\in [\ell] \\ j\in [m_i]}} V(H_{i,j}') \text{ and }  E(K') = E(G_0') \bigcup_{i\in [\ell]} E(G_{i,0}') \bigcup_{\substack{i\in [\ell] \\ j\in [m_i]}} E(H_{i,j}'). \] 
\end{lemma}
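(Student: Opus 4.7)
The plan is a two-level chaining of linkage equivalences, combined with a single compositionality lemma that I would prove once and then invoke at each level.

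First, by Step 1 of the algorithm and Theorem \ref{thm:twreduction}, the instance $(G,T)$ is $T$-Cycle-equivalent to $(\tilde G,T)$, so it suffices to show $(\tilde G,T)$ and $(K',T)$ are equivalent. For this I would isolate the following compositionality statement: if $H$ is an induced subgraph of a planar graph $\Gamma$ with $\partial_\Gamma V(H) \subseteq B \subseteq V(H)$, and $H'$ is $B$-linkage equivalent to $H$ w.r.t.\ \tcycle, then the graph $\Gamma'$ obtained by substituting $H'$ for $H$ in $\Gamma$ satisfies: $\Gamma$ has a $T$-loop if and only if $\Gamma'$ does (for any $T\subseteq V(\Gamma)\setminus (V(H)\setminus B)$). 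The argument, identical in spirit to Lemma~\ref{lem:linkirrelevant}, is that any $T$-loop in $\Gamma$ decomposes as a union $\mathcal{P}_1\cup\mathcal{P}_2$ of path collections whose endpoints lie in $B$, with $\mathcal{P}_1$ using only edges of $H$ and $\mathcal{P}_2$ using only edges outside $H$; the matching $\mathcal{M}\subseteq B^2$ induced by the endpoints of $\mathcal{P}_1$ witnesses an $\mathcal{M}$-cycle in $H$, hence, by $B$-linkage equivalence w.r.t.\ \tcycle, an $\mathcal{M}$-cycle in $H'$, which can be combined with $\mathcal{P}_2$ to give a $T$-loop in $\Gamma'$; the converse direction is symmetric.

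With this tool in hand, the outer level is straightforward. Applying the compositionality lemma to each $X_i^+$, and using the fact that the protrusions are pairwise non-interfering (since $N_{\tilde G}(X_i)\subseteq X_0$, no edge of $\tilde G$ crosses between two different protrusions), we may replace all graphs $\tilde G[X_i^+]$ by the corresponding $G_i'$ from Step 3 simultaneously. Each such replacement preserves $\tcycle$ by Corollary~\ref{C:twrednsubprot} (which states $G_i'$ is $B_i$-linkage equivalent w.r.t.\ \tcycle to $\tilde G[X_i^+]$). The resulting graph has vertex set $V(G_0')\cup \bigcup_i V(G_i')$ and is $T$-Cycle-equivalent to $\tilde G$.

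For the inner level, I iterate the same compositionality argument inside each $G_i'$. By Step 4, $\{Y_{i,0},\ldots,Y_{i,m_i}\}$ is a protrusion decomposition of $G_i'$, so the subprotrusion subgraphs $G_{i,j}'$ are pairwise disjoint outside their boundaries $B_{i,j}\subseteq Y_{i,0}$, and $N_{G_i'}(Y_{i,j})\subseteq Y_{i,0}$. By Step 5, each $H_{i,j}'$ is $B_{i,j}$-linkage equivalent w.r.t.\ \dispaths to $G_{i,j}'$; invoking the Remark that $B$-linkage equivalence w.r.t.\ \dispaths implies $B$-linkage equivalence w.r.t.\ \tcycle, the compositionality lemma applies again, and simultaneous replacement of every $G_{i,j}'$ by $H_{i,j}'$ inside $G_i'$ yields a $\tcycle$-equivalent graph whose edge and vertex sets are exactly $E(G_{i,0}')\cup\bigcup_j E(H_{i,j}')$ and $V(G_{i,0}')\cup\bigcup_j V(H_{i,j}')$. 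Combining the outer and inner replacements gives precisely $K'$ as defined in the statement, and transitivity of equivalence closes the argument.

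The main subtlety I expect is verifying that the replacement of $G_{i,j}'$ by $H_{i,j}'$, which may add or remove edges between vertices of $B_{i,j}$, does not corrupt the global structure: one must check that $B_{i,j}$ lies entirely in $V(G_i')$ and that no edge of $H_{i,j}'$ involves a vertex outside $Y_{i,j}^+$, so that the ``outside'' part of the path decomposition (with edges in $G_{i,0}'$, in other $H_{i,j'}'$, or in the outer decomposition $\tilde G[X_0]\cup\bigcup_{i'\neq i}G_{i'}'$) is preserved verbatim under the substitution. This is guaranteed by the nested protrusion structure, namely $B_{i,j}\subseteq Y_{i,0}\subseteq V(G_i')\setminus \partial V(G_i')$ relative to the outer decomposition, so the two levels of replacement genuinely commute and the equivalences can be composed without interaction.
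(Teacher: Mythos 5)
Your proof takes essentially the same route as the paper: both decompose a $T$-loop into paths $\mathcal{P}_1$ inside the (sub)protrusion with endpoints in the boundary set and paths $\mathcal{P}_2$ outside, use $B$-linkage equivalence to swap the inside part, and apply this at two nested levels, with the paper using Lemma~\ref{lem:linkirrelevant} for the outer level and repeating the same path-decomposition argument for the inner level. What you do differently is to isolate the path-decomposition argument once as an explicit ``compositionality lemma'' and invoke it twice, which is cleaner than the paper's in-place repetition; this is a presentational improvement rather than a different argument.

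One slip in your final paragraph: you assert $Y_{i,0}\subseteq V(G_i')\setminus\partial V(G_i')$, but this is false. The boundary of $V(G_i')$ in the outer decomposition is $B_i$, and by Theorem~\ref{thm:protrusioncomp} applied in Step~4 the core $Y_{i,0}$ contains the modulator $B_i'\supset B_i$, so in fact $B_i\subseteq Y_{i,0}$ and $Y_{i,0}\cap\partial V(G_i')=B_i\neq\emptyset$. The commuting-of-replacements argument still goes through, but the correct justification is different: the edges crossing between $G_i'$ and the rest of $K$ all have both endpoints in $X_0$ (since $B_i\subseteq X_0$) and hence lie in $E(G_0')=E(\tilde G[X_0])$, which is never modified by any inner replacement; and the inner replacement of $G_{i,j}'$ only touches edges with an endpoint in $Y_{i,j}$, where $Y_{i,j}\cap Y_{i,0}=\emptyset$ so $Y_{i,j}\cap B_i=\emptyset$. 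You should replace the containment $Y_{i,0}\subseteq V(G_i')\setminus\partial V(G_i')$ with this observation.
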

\begin{proof}
    It follows from \Cref{lem:linkirrelevant} that after the conclusion of Step 3 of {\sc Kernelization algorithm-II}, the {\sc $T$-Cycle} instance $(G,T)$ is equivalent to the {\sc $T$-Cycle} instance $(K,T)$ on the graph $K$  whose vertex set is given by $V(K) = \bigcup_{i=0}^\ell V(G_i')$ and whose edge set is given by $E(K) = \bigcup_{i=0}^\ell E(G_i')$.

    Moreover, we know from \Cref{C:twrednsubprot}  that $B_i'$ is a treewidth-$O(\log \log k)$-modulator  of graph $G_i'$.
    For every $i \in [\ell]$, this allows us to use Algorithm~\ref{alg:protcomp}, to compute an $O(\log k \, \log^2 \log k),$ $O(\log k \, \log^2 \log k),$ $ O(\log \log k))$-protrusion decomposition of $G_i'$, giving us a partition $\{Y_{i,0},\dots,Y_{i,m_i}\}$ of $V(G_i')$. As in {\sc Kernelization algorithm-II}, we use the notation $G_{i,j}' = G_i'[Y_{i,j}^+] $ and $B_{i,j} = Y_{i,j}^+ \setminus Y_{i,j}$ for $i\in[\ell], j \in [m_i]$, whereas $G_{i,0}' = G_i'[Y_{i,0}]$ for $i\in [\ell]$. Now, we use \Cref{thm:outline:polyKer} as an existential rather than an algorithmic result. That is, instead of kernelizing each of the graph $G_{ij}'$ using the  algorithm from \cite{DBLP:conf/focs/0001Z23}, we simply generate all possible graphs $H$ such that $|V(H)| = O(\log^{24} \log k)$ and $B_{i,j} \subseteq V(H)$. Using \Cref{thm:outline:polyKer}, we know that one of the generated graphs (denoted by $H_{i,j}'$) will be found to be  $B_{i,j}$-linkage equivalent w.r.t. \dispaths to $G_{i,j}'$ since    $G_{i,j}'$ is a planar graph of treewidth $O(\log \log k)$ and $B_{i,j} \subseteq V(G_{i,j}')$ is also of size $O(\log \log k)$.

    A solution to the {\sc $T$-Cycle} instance of $(K,T)$  can be written as a union  of paths in $\mathcal{P}_1 = \{S_1, \dots, S_t\}$ and $\mathcal{P}_2 = \{O_1, \dots, O_q\}$ where the endpoints of the paths in  $\mathcal{P}_1 \bigcup \mathcal{P}_2$ are in $B_{i,j}$,  the paths in $\mathcal{P}_1$ use edges that belong to $G[Y_{i,j}]$ and the paths in $\mathcal{P}_2$ use edges outside of $G[Y_{i,j}]$. Let $M = \{ \{u_1,v_1\},\dots,\{u_q,v_q\} \}$ be the respective endpoints of the paths in $\mathcal{P}_1$, and let $\mathcal{M}$ be the collection of all such sets $M$ found by varying the solutions to the \tcycle instance $(K,T)$. Then, $\mathcal{M} \subset B_{i,j}^2$. By the preceding exhaustive algorithm, $H_{i,j}'$ is $B_{i,j}$-linkage equivalent w.r.t. \dispaths to $G_{i,j}'$. So, there is also a way to connect all the pairs in $\mathcal{M}$ in $H_{i,j}'$.
Inductively, after replacing each graph $G_{i,j}'$ by an equivalent subgraph $H_{i,j}'$, for every $i\in [\ell]$ and $j \in [m_i]$, we obtain a new instance of the  {\sc $T$-Cycle} problem whose solution set is non-empty if and only if the solution set of the {\sc $T$-Cycle} instance $(K,T)$ is non-empty. The claim of the lemma follows.
\end{proof}

In order to establish runtime bounds for the algorithm, we  use two  results from prior works.

\begin{theorem}[Theorem 4 in \cite{adlertwo}] \label{thm:adlerres}
    Given a host graph $G$ with $|V(G)|=n$ embedded in a fixed surface,
a pattern $H$ with $|V(H)|=h$, and treewidth at most $k$, we can
decide whether $G$ contains a minor isomorphic to $H$ in $2^{O(k+h)+2k \cdot \log h} \cdot n$ time.
\end{theorem}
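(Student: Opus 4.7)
The plan is to perform dynamic programming over a tree decomposition, combining the standard ``win--win'' dichotomy for minor testing on surfaces with a careful state encoding. As a preliminary step, observe that if the treewidth of $G$ exceeds a sufficiently large polynomial in $k+h$, then $G$ contains a large grid (or wall) minor, and any $h$-vertex graph of treewidth $k$ is a minor of a large enough such grid; in this case we output YES directly. Otherwise, using a single-exponential tree decomposition algorithm (e.g., the one of Bodlaender et al.), we compute in time $2^{O(k+h)}\cdot n$ a nice tree decomposition $(T,\chi)$ of $G$ of width $O(k+h)$, having $O(n)$ nodes of the usual types (leaf, introduce-vertex, introduce-edge, forget, join).

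On this decomposition we run a bottom-up DP in which the state at a node $t$ records, for the partial $H$-model restricted to the vertices processed so far, a tuple consisting of (i) an assignment $\sigma : \chi(t) \to V(H) \cup \{\bot\}$, where $\sigma(v) = u$ means $v$ belongs to the branch set of $u$ and $\sigma(v) = \bot$ means $v$ is unused; (ii) a partition of $\{v \in \chi(t) : \sigma(v)\neq \bot\}$ refining $\sigma^{-1}$ and recording which vertices currently lie in the same connected piece of their partial branch set (through already-forgotten vertices); and (iii) a pair $(F, E')$ with $F \subseteq V(H)$ the set of branch sets already finalized and $E' \subseteq E(H)$ the set of edges already witnessed. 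Counting states per bag: part~(i) contributes at most $(h+1)^{|\chi(t)|} = 2^{O(k) + k\log h}$; part~(ii) contributes $2^{O(k)}$ (set partitions of an $O(k)$-set); and part~(iii) contributes $2^{O(h)}$. The exponent $2k\log h$ (rather than $k\log h$) in the theorem arises from a refined encoding needed at edge-introduce and join nodes, where each bag vertex must additionally carry auxiliary information about which adjacencies of its image in $H$ have already been realized locally; this enlarges the per-vertex alphabet roughly from $h+1$ to $(h+1)^2$ without affecting the other factors. Transitions at introduce, forget, edge-introduce, and join nodes can be implemented in time polynomial in the per-bag state count, and summing over the $O(n)$ nodes of $T$ yields the advertised running time $2^{O(k+h)+2k\log h}\cdot n$.

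The main obstacle is the correct tracking of branch-set connectivity across forget and join operations: when a vertex $v$ is forgotten, its current partition block must either still be represented by another bag vertex or the corresponding branch set must already be complete; at join nodes, the partitions inherited from the two children must be merged in a way that neither leaves a branch set disconnected nor allows a single branch set to absorb contributions inconsistent with its image in $H$. This is the technical heart of minor-containment DPs and is handled by the standard ``representative merging'' approach for partition-valued states, coupled with a global check at the root that $F = V(H)$ and $E' = E(H)$. Correctness then follows by a routine induction on the subtree rooted at each node of $T$, matching partial $H$-minor models in $G$ with reachable DP states, and the final running time is $2^{O(k+h)+2k\log h}\cdot n$ as claimed.
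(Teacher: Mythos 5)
This statement is a black-box citation of Theorem~4 from Adler, Dorn, Fomin, Sau, and Thilikos; the paper provides no proof of it, so there is no ``paper's own proof'' to compare against. That said, your sketch has two substantive problems that would prevent it from establishing the claimed bound.

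First, you have misread which graph has treewidth $k$. As the paper's own application in Step~5.2 of \textsc{Kernelization algorithm-II} makes explicit (they substitute $k \leftarrow \tw(G'_{i,j}) = O(\log\log k)$ and $h \leftarrow |V(H)| = O(\log^{24}\log k)$), the parameter $k$ is the treewidth of the \emph{host} graph $G$, not of the pattern $H$. Your preliminary ``win--win'' step --- arguing that if $\tw(G)$ exceeds a polynomial in $k+h$ we can answer YES because $H$ has treewidth $k$ --- therefore does not apply: the hypothesis already bounds $\tw(G)$ by $k$, and there is nothing to branch on. Likewise, computing a tree decomposition of width $O(k+h)$ is unnecessary and unjustified; the decomposition you want has width $O(k)$.

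Second, and more critically, you assert that the partition component of the DP state ``contributes $2^{O(k)}$ (set partitions of an $O(k)$-set).'' This is false in general: the number of set partitions of a $k$-element set is the Bell number $B_k = 2^{\Theta(k\log k)}$, which would put a factor of $k\log k$ rather than $k$ in the exponent. The reason the theorem restricts to graphs embedded in a fixed surface --- a hypothesis your argument never invokes --- is precisely to repair this: using a sphere-cut (or, in higher genus, surface-cut) branch decomposition, the bag vertices lie on a noose, only non-crossing partitions can arise as connectivity patterns of partial branch sets, and the number of non-crossing partitions of a $k$-set is a Catalan-type quantity $2^{O(k)}$. Without this ingredient, a generic partition-tracking DP on a tree decomposition does not give single-exponential dependence on $k$. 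Your reconstruction of the $2k\log h$ term is also a guess (``enlarges the per-vertex alphabet roughly from $h+1$ to $(h+1)^2$'') rather than an argument, but the omission of the surface machinery is the gap that actually breaks the bound.
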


\begin{theorem}[Theorem 36 in \cite{cho2023parameterized}] \label{thm:chores}
    The \textsc{Planar Disjoint Paths} problem can be solved in $2^{O(k^{2})}n$
time, where $n$ is the size of the graph and $k$ is the number of
terminals. 
\end{theorem}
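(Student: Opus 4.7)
The plan is to mirror the scheme used for \tcycle in this paper: first reduce the treewidth of the input planar graph by iteratively removing irrelevant vertices, and then run a dynamic program on a tree decomposition of the reduced graph. To make the runtime linear in $n$, both steps must be implemented via Reed-style plane-cutting rather than via a naive iterative reduction.

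First, I would prove a \dispaths-analogue of the irrelevant vertex argument of \Cref{S:depth}: there exists a function $g(k) = k^{O(1)}$ such that every $g(k)$-isolated vertex (with respect to the set $T$ of all $2k$ terminals) is irrelevant for \dispaths. Exactly as in \Cref{L:main,thm:logarithmic}, this follows from a rerouting argument on a cheapest solution together with a segment-forest analysis inside a long sequence of terminal-free concentric cycles. The rerouting must be more careful than for \tcycle because \dispaths fixes a pairing of terminals and any rerouting through the concentric-cycle skeleton must preserve the endpoints of each partial path; this is the main obstacle, and I would handle it by restricting to convex CL-configurations and showing that a local swap of two segments of the same $C_j$-type cannot reassign a terminal-endpoint to a different path.

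Second, using this irrelevant-vertex lemma, I would invoke \Cref{P:jctb} to conclude that after exhaustive deletion the resulting planar graph has treewidth $O(\sqrt{k} \cdot g(k)) = k^{O(1)}$. To perform all deletions in $k^{O(1)}\cdot n$ rather than $k^{O(1)}\cdot n^2$ time, I would run the \textsc{Reed Decomposition Algorithm} verbatim (\Cref{sec:reeddeco}): decompose the plane into $O(k)$ nice subgraphs on $1$- or $2$-punctured planes with $O(k\cdot g(k))$ total boundary vertices using \Cref{C:1imp}; remove $g(k)$-isolated boundary vertices via \Cref{L:boundary}; and clean the interiors via \Cref{P:oneFace} and \Cref{L:2Cycle}. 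The same closing argument (\Cref{C:bd}) then shows no vertex of the reduced graph is $O(g(k))$-isolated, and thus its treewidth is $k^{O(1)}$.

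Third, on the reduced graph I would execute a standard tree-decomposition DP for \dispaths. The state at each bag records, for each boundary vertex, which of the $k$ pairs it currently belongs to and whether it is a path endpoint or an internal degree-two vertex. The number of such states per bag is at most $k^{O(\tw)} = 2^{O(\tw\log k)}$, and processing each bag takes time polynomial in the state space. With $\tw = k^{O(1)}$, this gives a total runtime of $2^{O(k^2)}\cdot n$ after absorbing polynomial factors into the exponent. The tree decomposition itself is computed in $2^{O(\tw)}\cdot n = 2^{O(k^2)}\cdot n$ time via the single-exponential algorithm cited in the proof of \Cref{thm:main1}. The hardest part of the entire plan remains the first one: extracting a polynomial, rather than exponential, bound on $g(k)$ for \dispaths, since the pairing constraint forbids several of the cheap reroutings that are available in the \tcycle setting.
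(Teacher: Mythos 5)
This statement is not proved in the paper at all: it is quoted verbatim as Theorem~36 of Cho, Oh, and Oh~\cite{cho2023parameterized} and invoked as a black box in Step~5.2 of {\sc Kernelization algorithm-II}. There is therefore no internal proof to compare against; the only thing to check is that the citation is applied correctly, not that the cited theorem can be re-derived.

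Beyond that, your proposed route has a gap that cannot be closed. The whole plan hinges on obtaining a \emph{polynomial} isolation bound $g(k)=k^{O(1)}$ for \dispaths, so that the reduced treewidth becomes $k^{O(1)}$ and a bag DP costs $2^{O(k^2)}n$. But the known bound for \dispaths is exponential: Adler et al.~\cite{JCTB} prove that a cheapest linkage may have $2^{O(k)}$ same-type segments and may reach $2^{O(k)}$ cycles deep, and the present paper says so explicitly both in the remarks following the informal Lemmas~\ref{lem:segmentTypes} and~\ref{lem:segmentNumber} and in the footnote of \Cref{sec:reeddeco} (``In~\cite{cho2023parameterized} and~\cite{reedLinear}, $g(k)\in 2^{O(k)}$''). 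The \tcycle rerouting swap that yields a constant bound on same-type segments relies precisely on the freedom to re-pair terminal endpoints, which \dispaths forbids; you flag this obstacle but give no substitute argument, and no polynomial bound is known. With $g(k)\in 2^{O(k)}$, your treewidth reduction leaves $\tw\in 2^{O(k)}$ and the subsequent DP is doubly exponential in $k$, nowhere near $2^{O(k^2)}n$. The actual $2^{O(k^2)}$ bound in~\cite{cho2023parameterized} (following Lokshtanov, Misra, Pilipczuk, Saurabh, and Zehavi) is obtained by a genuinely different mechanism: after the Reed-style reduction to treewidth $2^{O(k)}$, one does not run a generic tree-decomposition DP, but instead enumerates the $2^{O(k^2)}$ homology/winding-number classes of candidate linkages relative to a fixed reference structure in the plane and solves within each class. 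That classification step is the heart of the theorem and is absent from your plan.
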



\begin{algorithm}[h]
\caption{Computation of an $(\alpha,\beta,\gamma)$-protrusion decomposition of a planar graph $G$}
\label{alg:protcomp}
\SetKwInOut{Input}{Input}\SetKwInOut{Output}{Output}
\Input{Treewidth-$\eta$-modulator $S$ of planar graph $G$}
\Output{An $(O(\eta\cdot|S|)),O(\eta\cdot|S|)),O(\eta))$-protrusion decomposition of $G$}
\SetKwFunction{ProtDec}{Protrusion-Decomoposition}
\SetKwProg{myproc}{Procedure}{}{}
\myproc{\ProtDec{$S,G$}}{
{Compute a nice tree decomposition $(T,\chi)$ of $G-S$.}\\
{Choose an arbitrary root node $r$ for $T$.}\\
{Let $M$ denote the set of marked nodes. Initialize $M \gets \emptyset$.} \\
\Repeat{every component $C$ of $G[\chi(T-M)]$ has at most two neighbors in $S$}{
\Comment{For a node $v \in V(T)$, let $T_v$ denote the subtree of $T$ rooted at $v$.} \\
{Let $v$ be a lowermost node in $T$ such that some component $C$ of $G[\chi(T_v-M)]$ has at least three neighbors in $S$.} \\
{$M \gets M \bigcup \{v\}$.}
}
{$L \gets \textsc{LCA-closure}(M)$.}\\
{$X_0 \gets S \bigcup \chi(L)$.} \\
{Find the connected components $C_1,C_2,\dots,C_t$ of $G - X_0$.}\\
{Find a partition $X_1,\dots,X_\ell$ of $G - X_0$ by grouping components $C_1,C_2,\dots,C_t$ with the same neighborhood in $S$. That is, $C_i \subseteq X_k$ and $C_j \subseteq X_k$ if and only if $N(C_i)=N(C_j)$.} \\
{\textsc{return}  $X_0,\dots,X_\ell$.} \\
} 
\end{algorithm}

Finally, we are ready to prove the main result of this section. The pseudocode for the Protrusion decomposition is provided in Algorithm~\ref{alg:protcomp}.

\mainKernel*
\begin{proof}
    From \Cref{lem:intercorrect}, we know that $(K',T)$ is equivalent as a \textsc{T-Cycle} to the pair $(G,T)$. That is, $K'$ is indeed a kernel.

    First, we provide a bound on the size of the kernel.
    Note that $\tilde{G}[X_0]$  has  $O(k \log^2 k)$ vertices, and $\ell = O(k \log^2 k)$ by using \Cref{thm:protrusioncomp} in Step 2. 
    By using \Cref{thm:protrusioncomp} in Step 4, $G_{i,0}' = G_i'[Y_{i,0}]$ has $O(\log k \log^2 \log k)$ vertices. So, $\bigcup_{i\in [\ell]}V(G_{i,0}')$ has $O(k \log^3 \log^2 \log k)$ vertices. 
    Moreover, using  \Cref{thm:protrusioncomp} once again in Step 4, each $m_i$ for each $i\in [\ell]$ is  $O(\log k \log^2 \log k)$ and by construction $V(H_{i,j}')$ has $O(\log^{24} \log k)$ vertices for each $i\in [\ell]$ and $j\in [m_i]$. Therefore, $\bigcup_{\substack{i\in [\ell] \\ j\in [m_i]}} V(H_{i,j}')$ has  at most $k \log^3 k \log^{O(1)}\log k$ vertices. 
    Therefore, the kernel $K'$, where $V(K') = V(G_0') \bigcup_{i\in [\ell]} V(G_{i,0}') \bigcup_{\substack{i\in [\ell] \\ j\in [m_i]}} V(H_{i,j}') $, has  $O(k \log^4 k)$ vertices.

    We now prove that the algorithm runs in linear time. 
    \begin{enumerate}[(1.)]
        \item Using \Cref{thm:twreduction}, Step 1 runs in $k^{O(1)} \cdot n $ time.
        \item From \Cref{rem:prottime}, it follows that Step 2 runs in $k^{O(1)}\cdot n$ time.
        \item The algorithm from \Cref{C:twrednsubprot} runs in $\log^{O(1)}k  \cdot n $ time since $B_i'$ is  $ O(\log k\log \log k)$ for each $i \in [\ell]$. As noted before $\ell\in O(k \log^2 k)$. Hence, Step 3 runs in $\log^{O(1)}k  \cdot n $ time. 
        \item For Step 5.1 guessing a graph in $H \in \mathcal{G}_{i,j}$ involves guessing the edges of $H$. This means that there are  $O(2^{\log^{48}\log k})$ (which is $O(k)$) number of graphs in $\mathcal{G}_{i,j}$  for every $i \in [\ell]$ and $j \in [m_i]$. Hence, the total cost of Step 5.1 is at most $k^2\log^{O(1)}k $.
        \item Using Theorem 4 in \cite{adlertwo}, the dynamic programming algorithm for \textsc{Minor Containment} in Step 5.2 takes $2^{O(\log\log k+\log^{24}\log k)+2\log\log k\cdot\log\log^{24}\log k}\cdot n$ which is $O(k) \cdot  n$ time. 

        Using Theorem 36 in \cite{cho2023parameterized}, a single run of the dynamic programming algorithm for \dispaths costs $2^{O(\log^{2}\log k)}\cdot n$ which is $O(k)\cdot n$ time.
        In Step 5.2, $B_{i,j}$ is in $O(\log^2\log k)$. Thus, there are  $2^{O(\log^4\log k)} = O(k)$ different pairings of terminals for which the \dispaths algorithm needs to run. Running the over all pairs, the cost is $O(k^2) \cdot n$ time.

        Finally, the two substeps above are executed for every $i \in [\ell]$ and $j \in [m_i]$ and for each of the subgraphs $H$ found in Step 5.2. Nonetheless, the total cost of Step 5.2 over the course of the algorithm is bounded by $k^{O(1)} \cdot n$ time.
    \end{enumerate}
    Combining the cost of all the steps, the algorithm runs in $k^{O(1)} \cdot n$ time. 
\end{proof}

\section{Conclusion} \label{sec:conc}
In this paper, we considered the \tcycle problem on planar graphs. 
We begin by providing an FPT algorithm with running time $2^{O(\sqrt{k} \log k)}\cdot n$. Most subexponential time FPT algorithms on planar graphs use Bidimensionality in a naive fashion: If the treewidth of the input graph is large, then we trivially know whether our instance is a yes or no instance, and when the treewidth is small (sublinear), we use dynamic programming using the tree decomposition. Unfortunately, this fails on \tcycle (and related terminal-based routing problems) as large treewidth neither blocks a $T$-loop nor guarantees it. We bypass this roadblock using a clever rerouting argument and new variations of several techniques in the literature to deal with planar graphs. Further, to the best of our knowledge, {\sc $T$-Cycle} is the first natural terminal-based routing problem that admits a subexponential algorithm for planar graphs. We believe that this may pave the road for new subexponential algorithms for planar graphs for problems that cannot be solved by naive application of bidimensionality. Further, we perform non-trivial work to reduce the running time dependence on $n$ to linear.

Next, we provided a linear time kernelization algorithm for {\sc $T$-Cycle} on planar graphs of size $k\cdot \log^{O(1)}k$. As pointed in the Introduction section, the kernelization complexity of \tcycle on general graphs is one of the biggest open problems in the field. Our result shows that for planar graphs, we  have an almost linear kernel. This raises the natural and important question as to whether our kernel for {\sc $T$-Cycle} can be lifted to more general non-planar graphs, or even to general graphs. To obtain our kernel, we use  Reed's technique of plane cutting. This is perhaps the first application of this technique in kernelization. Further, we obtain the kernelization to be almost-optimal combining Theorem~\ref{thm:twreduction} with a ``nested'' protrusion decomposition technique, which is  novel to this paper. Perhaps these techniques can be combined to obtain new/improved kernelization results for other terminal-based problems. 

As discussed in~\cite{DBLP:conf/birthday/Lokshtanov0Z20}, algorithms for terminal-based routing problems on planar graphs serve as a crucial step in designing algorithms for general case. Indeed, this was the case for \textsc{Disjoint Paths}. All known algorithms for \textsc{Disjoint Paths} are based on distinguishing the cases when the graph contains a large clique as a minor and when it does not. Furthermore, when the input graph does not contain a large clique as a minor, the graph either has  low treewidth (leading towards a DP) or contains a large \textit{flat wall} that necessitates the study of these problems on planar and ``almost-planar'' graph classes.  As mentioned in the Introduction, it would be interesting  to see if it is possible to extend our results for ``almost-planar graphs''.

Finally, we wish to point out that our arguments may be relevant for the resolution of the problem on general graphs as well. In this direction, the first step should be to generalize our results to flat walls towards the resolution of \tcycle on minor-free graphs. 


\section*{Acknowledgments}
Abhishek Rathod and Meirav Zehavi are supported by the European Research Council (ERC) project titled PARAPATH (101039913) and by the ISF grant with number ISF--1470/24. Harmender Gahlawat was a postdoc at BGU when this project started and was supported by ERC grant titled PARAPATH (101039913), and was a postdoc at G-SCOP, Grenoble-INP when most of this work was carried out.

\bibliographystyle{plainurl} 

\bibliography{main}

\newpage
\appendix

\end{document}